\date{}
\newcommand{\W}{{\mathcal W}}
\newcommand{\ra}{\text{Ran}}
\newcommand{\be}{\begin{equation}}
\newcommand{\ee}{\end{equation}}
\def\la{\langle}
\def\ra{\rangle}
\def\R{\mathbb{R}}
\def\C{\mathbb{C}}
\def\N{\mathbb{N}}
\renewcommand{\Re}{\text{{\rm Re}\;}}
\renewcommand{\Im}{\text{{\rm Im}\;}}
\newcommand{\re}{{\rm Re}}
\newcommand{\im}{{\rm Im}}
\newcommand{\ord}{{\mathcal O}}
\newcommand{\ai}{{\rm Ai}\,}
\newcommand{\bi}{{\rm Bi}\,}
\newtheorem{theorem}{Theorem}[section]
\newtheorem{lemma}[theorem]{Lemma}
\newtheorem{proposition}[theorem]{Proposition}
\newtheorem{corollary}[theorem]{Corollary}
\theoremstyle{definition}
\newtheorem{remark}[theorem]{Remark}
\numberwithin{equation}{section}
\title[Molecular dynamics]{Molecular dynamics at an energy-level crossing}
\author[Philippe BRIET \& 
Andr\'e MARTINEZ]{
Philippe BRIET${}^1$ \& 
Andr\'e MARTINEZ$ {}^2$
 }
\begin{document}

\maketitle 
\addtocounter{footnote}{1}
\footnotetext{{\tt\small  Aix-Marseille Universit\'{e}, CNRS, CPT UMR 7332, 13288 Marseille, France, and Universit\'{e} de Toulon, CNRS, CPT UMR 7332, 83957 La Garde, France,  
briet@univ-tln.fr} }  
\addtocounter{footnote}{1}
\footnotetext{{\tt\small Universit\`a di Bologna,  
Dipartimento di Matematica, Piazza di Porta San Donato, 40127
Bologna, Italy, 
andre.martinez@unibo.it }}  
\begin{abstract}
This paper   is a continuation of   a previous work \cite{BrMa} about the study  of the survival probability modelizing   the molecular predissociation   in the Born-Oppenheimer framework.   Here we consider the critical case where the reference energy corresponds to   the value of a crossing of two electronic levels, one of these two levels being confining while the second dissociates.
We show that   the survival probability   associated to a certain  initial  state   is a sum of the  usual time-dependent exponential contribution,  and a reminder term that is jointly   polynomially small with respect to the time and the semiclassical parameter. We also compute explicitly the main contribution of the remainder.
\end{abstract}  
\vskip 4cm
{\it Keywords:} Resonances; Born-Oppenheimer approximation; eigenvalue crossing; quantum evolution; survival probability.
\vskip 0.5cm
{\it Subject classifications:} 35P15; 35C20; 35S99; 47A75.

\baselineskip = 18pt 
\vfill\eject
\section{Introduction}

This paper  concerns  the   study  of the behaviour in time of  some  quantum states  describing  the predissociation process of  a molecular systems in the  Born Oppenheimer approximation. 
 Recall that  in this context, the  predissociation is connected  with  a resonant state of the system coming from an internal conversion from an excited state  towards a dissociative state when   the Born-Oppenheimer parameter   $h$ is small.  We refer  to a recent paper by the same authors  \cite{BrMa}   and references therein    for  more details.  

Here we consider the critical case where the reference energy $E=0$   corresponds to a crossing  of   the confining electronic  energy  curve  and  the  dissociative one. We suppose that the system  has  only one such crossing  point.   Despite the  absence of tunnelling  for $E$,  resonances    exist  \cite {FMW1}. They are of the form  $ \rho (h)= \lambda(h) + O(h^{\frac{4}{3}})$ where $\lambda(h)$ is   an eigenvalue  (embedded in the continuous  spectrum)   near $0$ of  the  decoupled operator, and their widths  satisfy   $ \Im  \rho (h) = O(h^{\frac{5}{3}})$ as $ h \sim 0$ (actually, under some assumption of non degeneracy of the coupling operator, one also know that $\im\rho(h) <0$: see \cite{FMW1, FMW2}).   Therefore,  an attention must be paid  to the dynamics of certain states having an energy close to that of the resonance.   

 As in the case studied in \cite {BrMa},   the initial state  $\phi $  is  the normalized  eigenvector   associated  with a simple eigenvalue   $\lambda(h)$ of  the  decoupled operator. Then, we show that  for $h$ small enough, $g$ a cut-off function supported near $\lambda(h)$, and $t\in \mathbb R^+$, the survival probability satisfies,
 \be
\label{1} {\mathcal A}_\phi  =  ( e^{-itH} g(H)\phi, \phi) =  e^{-it\rho(h)}b(\phi,h) + r(t, \phi, h),
\ee
where $ b(\phi,h)= 1 + {\mathcal O} (h^{\frac{1}{3}})$  and $  r(t,\phi, h) =   h^{\frac23} {\mathcal O}(\la ht\ra^{-\infty})$ (here we use the notation 
$\la s\ra := (1+ s^2)^{\frac{1}{2}}$).
We actually prove this   result  in a  situation where the  inter-level coupling  is  a general   first-order differential operator.  In the physical  model  the coupling operator is a vector-field (see \cite{FMW2}), and
we then  expect   a higher order estimate on the long time  part of  $ {\mathcal A}_\phi $  i.e. $r(t, \phi, h) = h^{\frac43} {\mathcal O}(\la ht\ra^{-\infty})$. This fact will be proved in a forthcoming paper  \cite{BrMa3}.

In contrast with previous papers on similar estimates (see, e.g., \cite{CGH, CoSo, Her, Hu2, JeNe}), here we also focus on the precise behaviour of the remainder term $r(t,\phi,h)$. We prove,
$$
r(t,\phi,h)=\alpha h^{\frac23}e^{-it\lambda(h)}F(ht) + \ord(h\la ht\ra^{-\infty}),
$$
where $\alpha$ behaves like a constant, and $F$ is an explicit analytic function on $\R^+$ (depending on $g$) that satisfies $F(0)\not=0$, $F(\lambda)=\ord(\la \lambda\ra^{-\infty})$ (see Theorem \ref{mainth} for the precise statement).

 In view of \eqref{1}, it turns out that the critical time $ t_{c}$,  within which the  contribution of the  exponential part of $ {\mathcal A}_\phi $   is preponderant  with respect to the  remainder term, satisfies,
 $$ t_{c}\geq  \frac{2}{3} \frac{|\ln(h)|}{\vert  \Im \rho\vert }.$$
(Recall that $\im\rho (h) = \ord( h^{\frac53})$.) This means that for time   $t \leq t_{c} $ the  strong  resonance effects  persist, while they  disappear for larger  times.   (Note that  for  the physical model,   we have $\Im \rho (h) =\ord( h^{\frac73})$ \cite{FMW2}, and we can  expect   that   $ t_{c}\geq  \frac{4}{3} \frac{|\ln(h)|}{\vert  \Im \rho\vert }$.)

  Concerning the proof, in addition to the techniques introduced in \cite{FMW1} we also use some special kinds of semiclassical function spaces that permit us to considerably facilitate the estimates on the remainder term $r(t,\phi, h)$.

 Let us describe the content of the paper. In  section 2 we give the assumptions and the   main result. The strategy of the proof  involving the distortion  theory will be described in section 3. Section 4 and 5 are devoted to  obtain convenient estimates on the resolvent operators. In the section 6, 7, 8 and 9 we  prove   estimates  on  the remainder term  in   the r.h.s of \eqref{1}.  The coefficient $b(h)$  is  studied in section 9.

\section{Assumptions and main result}
We consider the semiclassical $2\times 2$-matrix Schr\"odinger operator,
 $$
   H= \left(
\begin{matrix}
 P_1 & hW\\
hW^* & P_2
\end{matrix}
\right)\quad ; \quad P_j=h^2D_x^2 + V_j(x)
   $$
 where, as in \cite{FMW1, FMW2}, we assume,

{\bf Assumption (A1)}
$V_1(x)$, $V_2(x)$ 
 are real-analytic on $\R$ and extend to holomorphic functions
 in the complex domain,
$$
\Gamma=\{x\in\C;|\im x|<\varepsilon_0\la\re \,x\ra\}\quad ; \quad \la\re \,x\ra:= (1+|\re x|^2)^{\frac12},
$$
where $\varepsilon_0>0$ is a constant.

{\bf Assumption (A2)}For $j=1,2$, $V_j$ admits limits as $\re\, x\to \pm\infty$ in $\Gamma$, and they satisfy,
$$
\begin{aligned}
\lim_{{\re\,x\to -\infty}\atop{x\in \Gamma}}  V_1(x)>0\, ;\, \lim_{{\re\,x\to -\infty}\atop{x\in \Gamma}} V_2(x)>0\, ;\\
\lim_{{\re\,x\to +\infty}\atop{x\in \Gamma}} V_1(x)>0\, ;\, \lim_{{\re\,x\to +\infty}\atop{x\in \Gamma}} V_2(x)<0.
\end{aligned}
$$

{\bf Assumption (A3)} One has,
$$
V_1'(x^*)=:-\tau_0  <0,\quad  V_1'(0)=:\tau_1>0,\qquad V_2'(0)=:-\tau_2<0,
$$
and there exists a negative number $x^*<0$ such that,
\begin{itemize}
\item $V_1>0$ and $V_2>0$ on $(-\infty, x^*)$;
\item $V_1<0<V_2$ on $(x^*,0)$;
\item $V_2<0<V_1$ on $(0,+\infty)$.
\end{itemize}

{\bf Assumption (A4)}
$W(x,hD_x)$ is a first order differential operator
$$
W(x,hD_x)=a_0(x)+ia_1(x)hD_x,
$$
where $a_0(x)$ and $a_1(x)$ are analytic and bounded in $\Gamma$, and  real for real $x$.

\begin{figure}[h]
\label{fig1}
\begin{center}
\scalebox{0.5}[0.35]{
\includegraphics{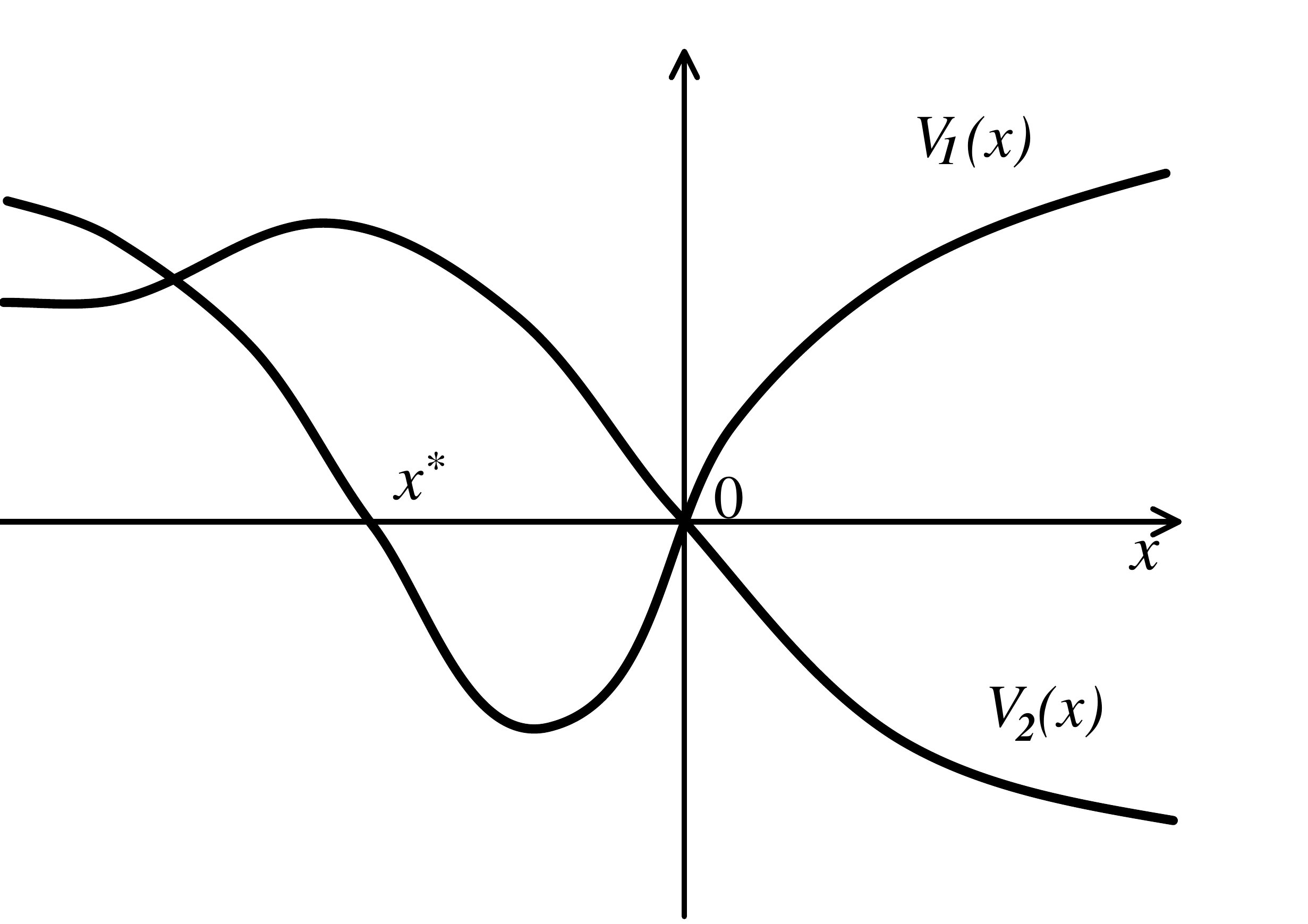}
}
\end{center}
\caption{The two potentials}
\end{figure}

In this situation, we know from \cite{FMW1} that the resonances of $H$ that are inside ${\mathcal D}_h(C_0):= [-C_0h^{2/3}, C_0h^{2/3}]-i[0,C_0h]$ ($C_0>0$ arbitrary) are of the form,
\be
\label{rhok}
\rho_k(h) = e_k(h) + \ord (h^{\frac43})\quad ; \quad \im\, \rho_k(h) = \ord (h^\frac53),
\ee
with $k\in \N$ and
$$
e_k(h):=\frac{-2{\mathcal A}(0)+(2k+1)\pi h}{2{\mathcal A}'(0)}\,\, ;\,\, {\mathcal A}(E):= \int_{x_1^*(E)}^{x_1(E)}\sqrt{ E-V_1(t)} \, dt,
$$
where $x_1^*(E)$ (respectively $x_1(E)$) is the unique solution of $V_1(x)=E$ close to $x^*$ (respectively close to 0). In addition, at each such $e_k(h)$ inside $[-C_0h^{\frac23}, C_0h^{\frac23}]$, corresponds a unique resonance $\rho_k(h)$ of $H$ that satisfies \eqref{rhok}. On the other hand, it is also well known (see, e.g., \cite{HeRo}) that at each such $e_k(h)$ inside $[-C_0h^{\frac23}, C_0h^{\frac23}]$, corresponds a unique eigenvalue $E_k(h)$ of $P_1$, such that,
\be
\label{rhok}
E_k(h) = e_k(h) + \ord (h^2),
\ee
From now on, we fix such an eigenvalue, that is, we choose once for all an application,
$$
h\mapsto \lambda_0(h)\in {\rm Sp}(P_1)\cap [-C_0h^{\frac23}, C_0h^{\frac23}],
$$
to which corresponds a unique application,
$$
h\mapsto \rho_{0} (h)\in {\rm Res}(H)\cap {\mathcal D}_h(C_0),
$$
such that,
$$
\rho_{0} (h) -\lambda_0(h) =\ord (h^{\frac43}).
$$
We also denote by $\varphi_0$ the real-valued normalized eigenfunction of $P_1$ associated with $\lambda_0$ (so that $W\varphi_0$ and $W^*\varphi_0$ are real-valued, too), and we set,
$$
\phi := (\varphi_0,0)\in L^2(\R)\oplus L^2(\R).
$$
In particular, there exists some complex number $c_0=c_0(h)\sim 1$ such that, for $x\leq 0$,
$$
\varphi_0=c_0h^{-\frac16}u_{1,L}^-(\lambda_0).
$$
 (Actually, by computing the $L^2$-norm of $u_{1,L}^-(\lambda_0)$ on $I_L$, one can see that $c_0^2 = \frac2{\pi} \int_{x^*}^0 \frac{dx}{\sqrt{\lambda_0-V_1(x)}}+\ord(h^\frac13)$.)

We also fix some cutoff function $g_0 \in C_0^\infty ((-\delta_1,\delta_1); [0,1])$ such that $g=1$ on $[-\delta_0,\delta_0]$ with $0<\delta_0<\delta_1 <\frac{\pi }{{\mathcal A}'(0)}$, so that, if we set,
\be
\label{defg}
g(\lambda) := g_0\left( \frac{\lambda -\lambda_0}{h}\right),
\ee
then, for $h$ small enough,  $\lambda_0$ is the only eigenvalue of $P_1$ contained in the support of $g$.

We are interested in the survival amplitude associated with $g(H)^{\frac12}\phi$,
$$
{\mathcal A}_{\phi }:= \la e^{-itH}g(H)\phi, \phi\ra.
$$
In order to state our result, we define,
\be
\label{defF}
F(\lambda):= -2i\int_{\gamma_0}\frac{e^{-i\lambda z}g_0(\re z)}{z^2}dz,
\ee
where $\gamma_0$  is the oriented complex path,
$$
\gamma_0 := (-\infty, -\delta_0]\cup \{ \delta_0e^{i\alpha}\, ;\, \alpha \in [\pi, 2\pi]\} \cup [\delta_0, +\infty).
$$
In particular, let us observe that $F$ is analytic, and that $F(0)\not=0$ (indeed, one can compute $F(0)=4i \alpha\delta_1^{-1}$ with $\alpha \geq 1$).
In addition, by integration by parts, we also see that $F(\lambda) =\ord (|\lambda|^{-\infty})$ as $\lambda\to \pm\infty$.

In the sequel, we denote by $\ai$ and $\bi$ the standard Airy functions, and for any function $f=f(s)$ we set $\check f(s) :=f(-s)$

Our main result is,
 \begin{theorem}\sl
\label{mainth}
Under assumptions {\bf (A1)-(A4)}, one has,
\be
\label{casnonfis}
 {\mathcal A}_\phi  =e^{-it\rho_{0}}b(h) + h^{\frac23}q_0(t,h) + {\mathcal O}(h\la ht\ra^{-\infty})
\ee
uniformly for $h>0$ small enough and $t\in \R$,
with, 
\be
\label{estbh}
b(h) =1+{\mathcal O}(h^{1/3});
\ee
\be
\label{estq0}
q_0(t,h)=4a_0(0)^2c_0^2e^{-it\lambda_0}\left[A_0(\lambda_0 h^{-\frac23})\right]^2F(ht),
\ee
where $F$ is defined in \eqref{defF}, and $A_0$  is the function,
$$
\begin{aligned}
A_0(s) :=  \tau_1^{-\frac16}\tau_2^{-\frac16}(\tau_1+\tau_2)^{-\frac13} \check\ai \left(\left(\frac{\tau_1+\tau_2}{\tau_1\tau_2} \right)^{\frac23}s \right).
\end{aligned}
$$
\end{theorem}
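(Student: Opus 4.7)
My plan is to combine Stone's formula with the complex distortion technique of \cite{FMW1}. Writing
$$
{\mathcal A}_\phi = -\frac{1}{2\pi i}\int_{\R}e^{-it\lambda}g(\lambda)\langle[(H-\lambda-i0)^{-1}-(H-\lambda+i0)^{-1}]\phi,\phi\rangle\, d\lambda,
$$
and distorting the dissociative channel into a complex direction, one obtains a meromorphic continuation of $\lambda\mapsto\langle(H-\lambda)^{-1}\phi,\phi\rangle$ across the real axis in a fixed neighbourhood of $\lambda_0$, with a unique simple pole at $\lambda=\rho_{0}(h)$. The semiclassical rescaling $\lambda=\lambda_0+hz$ fixes the support of $g_0$, factors the oscillation as $e^{-it\lambda_0}e^{-ihtz}$ (so that $ht$ becomes the natural large parameter), and places the rescaled resonance $z_\rho:=(\rho_{0}-\lambda_0)/h$ at distance $\ord(h^{1/3})$ from the origin in the lower half-plane.

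The next step is to deform the $z$-contour into a path that descends below the real axis around $z_\rho$, which after rescaling is precisely the contour $\gamma_0$ from \eqref{defF}: on the horizontal rays $(-\infty,-\delta_0]\cup[\delta_0,+\infty)$ the distorted resolvent is uniformly bounded, while the lower semicircle of radius $\delta_0\gg|z_\rho|$ encloses the resonance pole. The residue at $\rho_{0}$ yields the principal contribution $e^{-it\rho_{0}}b(h)$, the coefficient $b(h)$ being essentially the residue of $\langle(H-\lambda)^{-1}\phi,\phi\rangle$ at $\rho_0$ (multiplied by $g(\rho_0)=1$); the bound $b(h)=1+\ord(h^{1/3})$, to be established in section 9, reflects that $\phi=(\varphi_0,0)$ coincides with the resonant state up to an $\ord(h^{1/3})$ correction in the dissociative channel induced by the crossing.

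It then remains to extract from the integral over $\gamma_0$ both the leading term $h^{2/3}q_0(t,h)$ and the sharper remainder $\ord(h\langle ht\rangle^{-\infty})$. On the two horizontal rays, iterated integration by parts in $z$ against $e^{-ihtz}$ produces decay $\langle ht\rangle^{-\infty}$, with an $\ord(h)$ prefactor coming from the resolvent estimates of sections 4-5. The principal contribution comes from the small semicircle, where I plan to use the semiclassical parametrices of \cite{FMW1} to expand $\langle(H-\lambda_0-hz)^{-1}\phi,\phi\rangle$ in powers of $h^{1/3}$. The leading $h^{-1/3}z^{-2}$ singularity arises from the microlocal connection at the crossing point $x=0$ through the Airy model, whose fundamental solutions at shifted energy $\lambda_0+hz$ produce the factor $A_0(\lambda_0 h^{-2/3})$ (and its square from the two copies of $\phi$ in the matrix element). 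Combined with the normalization constant $c_0^2$ of $\varphi_0$ and the leading symbol $a_0(0)$ of the coupling $W$ evaluated at the crossing, this yields the prefactor $4a_0(0)^2c_0^2[A_0(\lambda_0 h^{-2/3})]^2$, while the $1/z^2$ contour integral against $e^{-ihtz}$ along $\gamma_0$ reproduces exactly the function $F(ht)$ of \eqref{defF}.

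The hard part will be this parametrix expansion: $\langle(H-\lambda)^{-1}\phi,\phi\rangle$ must be controlled uniformly on the semicircle $\{\lambda_0+h\delta_0e^{i\alpha}\}$ with a relative remainder of order $h^{1/3}$, despite the simultaneous presence of the two regular turning points of $V_1$ (at $x^*$ and at $0$) and the crossing turning point of $V_2$ (at $0$), all at energies of order $h^{2/3}$. The semiclassical function spaces mentioned in the introduction should encode precisely the various regimes that enter this parametrix, and reduce the identification of the leading coefficient to a Wronskian computation involving the shifted Airy function $\check\ai$.
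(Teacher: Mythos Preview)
Your overall scheme (Stone's formula, complex distortion, extraction of the residue at $\rho_0$, analysis of the remainder on a deformed contour) matches the paper. The gap is in how you plan to isolate the $h^{2/3}$ term. You split $\gamma_0$ geographically, asserting that the horizontal rays contribute only $\ord(h\langle ht\rangle^{-\infty})$ while the semicircle carries all of $h^{2/3}q_0$. This fails: the leading part of the integrand is of size $h^{-1/3}z^{-2}$ uniformly along the whole of $\gamma_0$, so after the Jacobian $h\,dz$ the rays also contribute at order $h^{2/3}$; and the function $F$ in \eqref{defF} is by definition an integral over the \emph{entire} contour $\gamma_0$, rays included, so you cannot recover it from the semicircle alone. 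At best your ray argument produces a remainder $\ord(h^{2/3}\langle ht\rangle^{-\infty})$, strictly weaker than the claimed $\ord(h\langle ht\rangle^{-\infty})$. A related misattribution: the double pole $(\lambda_0-z)^{-2}$ (i.e.\ your $z^{-2}$) does not come from the crossing mechanism but simply from $P_1\varphi_0=\lambda_0\varphi_0$, which makes $(P_1-z)^{-1}$ act as $(\lambda_0-z)^{-1}$ on each copy of $\varphi_0$.

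The paper's decomposition is algebraic, not geographic. Expanding the distorted resolvent of $H$ by the Neumann series in $M_\theta(z)=h^2(P_2^\theta-z)^{-1}W^*_\theta(P_1^\theta-z)^{-1}W_\theta$ (one shows $\|M_\theta\|=\ord(h^{1/6})$ on $\gamma_-$) gives the exact identity
\[
\langle R_\theta(z)\phi_\theta,\phi_{-\theta}\rangle=\frac{1}{\lambda_0-z}+\frac{h^2}{(\lambda_0-z)^2}\sum_{\ell\ge0}T_\ell(z),
\]
and hence a splitting $r=r_0+r_1+r_2$ according to $\ell=0$, $\ell=1$, $\ell\ge2$. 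The function spaces of Section~6 are used to show $r_1,r_2=\ord(h\langle ht\rangle^{-\infty})$ uniformly over the whole contour (the bound $T_1(z)=\ord(1)$ requires genuine cancellations, not crude norm estimates). For $r_0$ one computes, via the Airy parametrices, $T_0(z)=8i\pi c_0^2a_0(0)^2h^{-1/3}[A_0(h^{-2/3}z)]^2+\ord(1)$; a one-step Taylor expansion of $A_0$ at $h^{-2/3}\lambda_0$ then reduces the principal part to a constant multiple of $\int_{\gamma_-}e^{-itz}g(\Re z)(\lambda_0-z)^{-2}\,dz$, which after the rescaling $z\mapsto\lambda_0+hz$ is exactly $F(ht)$, integrated over all of $\gamma_0$. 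Your parametrix idea is therefore right, but it must be applied uniformly along the contour after this algebraic expansion, not localized to the semicircle.
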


\section{Preliminaries}

As in \cite[Section 5]{BrMa}, we have,
\be
\label{stone2}
{\mathcal A}_{\phi } =  e^{-it\rho_{0}}b(\phi,  h)+ r(t,\phi, h),
\ee
where $b(\phi,  h)$ is the residue at $\rho_{0}$ of the meromorphic function 
$$
z\mapsto - \la R_\theta (z)\phi_\theta, \phi_{-\theta}\ra,
$$
where, $R_\theta (z):=U_\theta (H_\theta -z)^{-1}U_\theta^{-1}$ is the distorted resolvent of $H$, and  $\phi_\theta =(\varphi_0^\theta, 0):=U_{\theta}\phi$ is the analytic distortion of $\phi$, where $U_\theta$ is the analytic distorsion given by,
$$
U_{\theta}\phi (x) = \phi (x+i\theta \nu (x)),
$$
with $\nu \in C^\infty (\R; \R)$, $\nu  =0$ on $(-\infty, x_\infty]$ for some $x_\infty >0$, $\nu (x)=x$ for $x$ large enough (see \cite[Section 3]{FMW1}).

Further, $r(t,\phi, h)$ is given by,
\be
\label{reste}
r(t,\phi, h):= \frac1{2i\pi}\int_{\gamma_-} e^{-itz} g(\Re z) \left(\la R_\theta (z)\phi_\theta, \phi_{-\theta}\ra -\la R_{-\theta}(z)\phi_{-\theta}, \phi_\theta\ra\right) dz,
\ee
where $\gamma_-$ is a complex contour parametrized by $\Re z$, that coincides with $\R$ away from $\{ g=1\}$, is included in $\{ \Im z <0\}$ when $\Re z$ is inside $\{ g=1\}$,
and is
chosen in such a way that it stays below $\rho_{0}$ and at a distance $\sim h$ from it.

Then, setting $v=(v_1,v_2) := R_\theta (z) \varphi_\theta$, and denoting by $P_j^\theta$, $W_\theta$, $W^*_\theta$ the various distorted operators, we have,
$$\left\{
\begin{aligned}
& (P_1^\theta -z)v_1 + hW_\theta v_2 =\varphi_0^\theta ;\\
& (P_2^\theta -z)v_2 + hW^*_\theta v_1 = 0,
\end{aligned}
\right.
$$
and thus, for $z\in \gamma_-$,
\be
\label{v1v2}
\left\{
\begin{aligned}
& v_1 =\frac1{\lambda_0-z}\varphi_0^\theta -h(P_1^\theta -z)^{-1}W_\theta v_2  ;\\
& (1-M_\theta (z))v_2= \frac{-h}{\lambda_0-z}(P_2^\theta -z)^{-1}W^*_\theta \varphi_0^\theta,
\end{aligned}
\right.
\ee
with,
\be
\label{Mtheta}
M_\theta (z):= h^2(P_2^\theta -z)^{-1}W^*_\theta (P_1^\theta -z)^{-1}W_\theta.
\ee

In the next sections, we will prove that, for $h$ small enough, we have $\Vert M_\theta (z)\Vert <1$ (see \eqref{estMtheta}). Assuming for a while this result, we conclude from \eqref{v1v2} that we have,

\be
\label{v1v2bis}
\left\{
\begin{aligned}
& v_1 =\frac1{\lambda_0-z}\varphi_0^\theta +\frac{h^2}{\lambda_0-z}\sum_{\ell \geq 0}(P_1^\theta -z)^{-1}W_\theta M_\theta (z)^\ell (P_2^\theta -z)^{-1}W^*_\theta \varphi_0^\theta  ;\\
& (1-M_\theta (z))v_2= \sum_{\ell \geq 0}\frac{-h}{\lambda_0-z}M_\theta (z)^\ell (P_2^\theta -z)^{-1}W^*_\theta \varphi_0^\theta,
\end{aligned}
\right.
\ee

As a consequence, since $\la R_\theta (z)\phi_\theta, \phi_{-\theta}\ra=\la v_1, \varphi_0^{-\theta}\ra$, and $\la \varphi_0^\theta,\varphi_0^{-\theta}\ra = \Vert \varphi_0\Vert^2 =1$,
we obtain,
\be
\begin{aligned}
\la & R_\theta  (z)\phi_\theta, \phi_{-\theta}\ra \\
&  =\frac1{\lambda_0-z}+\frac{h^2}{(\lambda_0-z)^2}\sum_{\ell \geq 0}\la  M_\theta (z)^\ell (P_2^\theta -z)^{-1}W^*_\theta \varphi_0^\theta, W^*_{-\theta} \varphi_0^{-\theta}\ra.
\end{aligned}
\ee
Inserting into \eqref{reste}, we finally obtain,
\be
r(t,\phi, h) = \frac{h^2}{2i\pi}\sum_{\ell \geq 0}\int_{\gamma_-} \frac{e^{-itz} g(\Re z)}{(\lambda_0-z)^2}T_\ell (z) dz,
\ee
with
\be
\begin{aligned}
T_\ell (z):= \la  M_\theta (z)^\ell & (P_2^\theta -z)^{-1}W^*_\theta \varphi_0^\theta, W^*_{-\theta} \varphi_0^{-\theta}\ra\\
& -\la  M_{-\theta} (z)^\ell (P_2^{-\theta} -z)^{-1}W^*_{-\theta} \varphi_0^{-\theta}, W^*_{\theta} \varphi_0^{\theta}\ra.
\end{aligned}
\ee
Therefore, 
\be
r(t,\phi, h) =r_0(t,\phi, h) +r_1(t,\phi, h) +r_2(t,\phi, h) 
\ee
with,
\be
\label{r0r1}
\begin{aligned}
& r_0(t,\phi, h):= \frac{h^2}{2i\pi}\int_{\gamma_-} \frac{e^{-itz} g(\Re z)}{(\lambda_0-z)^2}T_0 (z) dz \\
& r_1(t,\phi, h):= \frac{h^2}{2i\pi}\int_{\gamma_-} \frac{e^{-itz} g(\Re z)}{(\lambda_0-z)^2}T_1 (z) dz \\
& r_2(t,\phi, h):=\frac{h^2}{2i\pi}\sum_{\ell \geq 2}\int_{\gamma_-} \frac{e^{-itz} g(\Re z)}{(\lambda_0-z)^2}T_\ell (z) dz,
\end{aligned}
\ee
and, by an additional change of contour of integration (that brings $\gamma_-$ onto $\R$), we also obtain,
$$
\begin{aligned}
& r_0(t,\phi, h)\\
& =\frac{h^2}{2i\pi}\lim_{\varepsilon \to 0_+}\int_{\R} \frac{e^{-itz} g (z)}{(\lambda_0+i\varepsilon-z)^2}\la  [(P_2-z-i0)^{-1}-(P_2-z+i0)^{-1}]W^*\varphi_0, W^*\varphi_0\ra dz,
\end{aligned}
$$
that is, by Stone's formula,
\be
r_0(t,\phi, h)=h^2\la e^{-itP_2}g(P_2)(P_2-\lambda_0-i0)^{-1}W^*\varphi_0, (P_2-\lambda_0+i0)^{-1}W^*\varphi_0\ra.
\ee

The next sections are devoted to the estimates on $\Vert M_{\pm \theta}(z)\Vert$ and on $r_0(t,\phi, h)$, $r_1(t,\phi, h)$ and $r_2(t,\phi, h)$.

\section{Fundamental solutions}
For $z\in {\mathcal D}_h(C_0)$ and $j=1,2$, let $u_{j,L}^\pm(z) =u_{j,L}^\pm (z,x) $ be the global WKB solutions to 
$(P_j - z)u = 0$ on $I_L:=(-\infty, 0]$ given, e.g., in \cite{FMW1} (in particular, $u_{j,L}^-(z)$ decays exponentially in $x$ at $-\infty$, while $u_{j,L}^+(z)$ grows exponentially). Let also $u_{j,R}^\pm(z)  =u_{j,R}^\pm (z,x) $ be the global WKB solutions to $(P_j - z)u = 0$ on a complex neighborhood of $I_R:=[0,+\infty)$, such that $u_{j,R}^-(z)$ decays exponentially in $x$ at infinity on $I_R^\theta:= \{ x+i\theta \nu(x)\,;\, x\geq 0\}$ ($\theta >0$ fixed small enough).

In particular, 
we see in  that $u_{2,R}^+(z)$ decays exponentially in $x$ at infinity on $I_R^{-\theta} = \overline{I_R^\theta}$.

We set,

\be
\label{eq1}
\begin{aligned}
K_{j,L}(z)[v](x) := & \frac{u_{j,L}^+(z,x)}{h^2 \W[u_{j,L}^+(z),u_{j,L}^-(z)]}  \int_{-\infty}^x  u_{j,L}^-(z,t)v(t)\,dt \\
& + \frac{u_{j,L}^-(z,x)}{h^2 \W[u_{j,L}^+(z),u_{j,L}^-(z)]} \int_x^{0}\!\!\!\! u_{j,L}^+(z,t)v(t)\,dt,
\end{aligned}
\ee
where  $v$ is in the space $C_b^0(I_L)$ of bounded continuous functions on $I_L$;

\be\label{eq1R+}
\begin{aligned}
K_{j,R}^+(z)[v](x) := &  \frac{u_{j,R}^-(z,x)}{h^2 \W[u_{j,R}^-(z),u_{j,R}^+(z)]} \int_{0}^x u_{j,R}^+(z,t)v(t)\,dt \\
&+ \frac{u_{j,R}^+(z,x)}{h^2 \W[u_{j,R}^-(z),u_{j,R}^+(z)]} \int_x^{+\infty}\!\!\!\! u_{j,R}^-(z,t)v(t)\,dt,
\end{aligned}
\ee
where  $v$ is in the space $C_b^0(I_R^+)$ of bounded continuous functions on $I_R^+:=I_R^\theta$, and the integrals run over $I_R^+$ (see \cite[Section 3.2]{FMW1});

\be\label{eq1R-}
\begin{aligned}
K_{j,R}^-(z)[v](x) := &  \frac{u_{j,R}^+(z,x)}{h^2 \W[u_{j,R}^+(z),u_{j,R}^-(z)]} \int_{0}^x u_{j,R}^-(z,t)v(t)\,dt \\
&+ \frac{u_{j,R}^-(z,x)}{h^2 \W[u_{j,R}^+(z),u_{j,R}^-(z)]} \int_x^{+\infty}\!\!\!\! u_{j,R}^+(z,t)v(t)\,dt,
\end{aligned}
\ee
where  $v$ is in the space $C_b^0(I_R^-)$ of bounded continuous functions on $I_R^-:=I_R^{-\theta}$, and the integrals run over $I_R^-$.

Then, as in \cite[Section 3]{FMW1}, we see that we have,
$$
\begin{aligned}
& (P_j - z)K_{j,L}(z) = 1\quad \mbox{on }\, C_b^0(I_L);\\
& (P_j - z)K_{j,R}^\pm(z) = 1\quad \mbox{on }\, C_b^0(I_R^\pm).
\end{aligned}
$$
In the sequel, we will need the following result:
\begin{proposition}\sl
\label{normK}
\be
\label{normconc}
\Vert K_{2,L}\Vert_{{\mathcal L}(L^2(I_L))} +\Vert K_{1,R}^\pm\Vert_{{\mathcal L}(L^2(I_R^\pm))}=\ord (h^{-\frac23});
\ee
\be
\label{normdeconc}
\Vert K_{1,L}\Vert_{{\mathcal L}(L^2(I_L))} +\Vert K_{2,R}^\pm\Vert_{{\mathcal L}(L^2(I_R^\pm))}=\ord (h^{-\frac76})
\ee
\end{proposition}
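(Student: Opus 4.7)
The six operator-norm estimates will all follow from Schur's test applied to the explicit integral kernel
$$
K(x,y) = \frac{u^+(x\vee y)\,u^-(x\wedge y)}{h^2\,W[u^+,u^-]},
$$
together with the Airy-WKB representations of $u^\pm_{j,L/R}(z,\cdot)$ constructed in \cite{FMW1}. The first step is to recall from \cite{FMW1} that, in a neighborhood of each turning point, $u^\pm(z,x) = h^{-1/6}\,a^\pm(x,h)\,\mathcal{A}^\pm(h^{-2/3}\zeta(x))$ with $a^\pm$ bounded analytic, $\zeta$ a local diffeomorphism vanishing at the turning point, and $\mathcal{A}^\pm \in\{\mathrm{Ai},\mathrm{Bi}\}$; away from it, $u^\pm$ matches standard WKB $\sim (V-z)^{-1/4}e^{\mp\phi/h}$ in the forbidden region and $\sim (z-V)^{-1/4}e^{\pm i\psi/h}$ in the allowed one. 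Evaluating the Wronskian in the WKB regime gives $h^2 W[u^+,u^-]\asymp h$, with explicit constant depending on the $\tau_j$.

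The bounds $\Vert K_{2,L}\Vert$ and $\Vert K_{1,R}^\pm\Vert$ cover the case in which the relevant interval is classically forbidden except at the simple turning point $x=0$. The kernel is dominated by an Airy-scale neighborhood of $0$ of width $h^{2/3}$, on which $|u^\pm|\lesssim h^{-1/6}$ and $|h^2 W|\asymp h$ give $|K(x,y)|\lesssim h^{-4/3}$, while farther from $0$ the kernel decays exponentially in $|x-y|/h$. Schur's test then yields $\sup_x\int|K(x,y)|\,dy \lesssim h^{-4/3}\cdot h^{2/3}=h^{-2/3}$ (and symmetrically in the other variable), hence the announced $\ord(h^{-2/3})$.

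For $\Vert K_{1,L}\Vert$ and $\Vert K_{2,R}^\pm\Vert$, a classically allowed region of fixed length is present (the well $(x^*,0)$ for $K_{1,L}$; most of $I_R$ for $K_{2,R}^\pm$). There the WKB amplitude $(z-V)^{-1/4}$ is only $L^\infty$-bounded, with no exponential damping, so $|K(x,y)|\lesssim h^{-1}$ on that bounded region and Schur's test gives $\ord(h^{-1})$ from the allowed part. The additional Airy boundary layer at $x=0$, where $|u^\pm|\sim h^{-1/6}$, contributes an extra factor $h^{-1/6}$, producing the announced $\ord(h^{-7/6})$.

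The main obstacle is producing the sharp exponent $h^{-7/6}$ (rather than the naive $\ord(h^{-1})$) in the second case: one must simultaneously control the oscillatory bulk by Schur's test with $L^\infty$ kernel bound, and the Airy boundary layer at $x=0$ by the Airy-scale analysis used in the first case. A secondary subtlety for $K_{1,L}$ is that, although $z$ is close to $\lambda_0\in\mathrm{Sp}(P_1)$ on $\R$, the Wronskian on $I_L$ does not vanish: $u^\pm_{1,L}$ are defined only by their behaviour at $-\infty$, and $K_{1,L}$ is a right inverse of $P_1-z$ on $I_L$, not the global resolvent on $\R$.
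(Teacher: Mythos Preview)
Your plan is essentially the paper's own proof: apply the Schur test to the explicit kernel, using the Airy--WKB representations from \cite{FMW1}. Your explanation of the $h^{-7/6}$ bound (the sup of $|u^\pm|$ over the Airy layer gives the extra $h^{-1/6}$, while the integral of the companion factor over the allowed region is $O(1)$) is exactly the mechanism the paper uses, stated there as $U_{1,L}(t,x)=\ord(h^{1/6}|t|^{-1/4}|t-x^*|^{-1/4})$ uniformly in $x$.

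Two small points to tighten. First, your normalization differs from \cite{FMW1}: in the paper $\W[u^+,u^-]\sim h^{-2/3}$, so $h^2\W\sim h^{4/3}$, and correspondingly $|u^\pm|=\ord(1)$ at the Airy scale rather than $h^{-1/6}$. This is harmless (the kernel $K$ is normalization-independent), but be consistent when you fill in constants. Second, your claim that ``farther from $0$ the kernel decays exponentially in $|x-y|/h$'' is not uniformly correct in the transition zone $|x|\in[Ch^{2/3},\delta]$ of the forbidden-region case: there the exponent is $c\big||x|^{3/2}-|y|^{3/2}\big|/h$, whose effective decay rate degenerates near the turning point. This is where the paper does the most explicit work, splitting $\int_{-2\delta}^{-Ch^{2/3}}$ and changing variable $t\mapsto -(ht)^{2/3}$ to recover the $\ord(h^{-2/3})$ bound. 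Your sketch will go through once this is handled, but the naive exponential-in-$|x-y|/h$ heuristic would give the wrong intermediate estimate.
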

\begin{proof} The proofs on $I_L$ and on $I_R^\pm$ are very similar, so we just give the one on $I_L$. Since $\W[u_{j,L}^+(z),u_{j,L}^-(z)]\sim h^{-\frac23}$ ($j=1,2$), by \eqref{eq1} and the Schur Lemma (see, e,g,, \cite{Ma2}), it is enough to estimate,
$$
h^{-\frac43}\sup_{x\in I_L}\int_{j_L}|U_{j,L}(t,x)|dt
$$
with
$$
U_{j,L}(t,x):= u_{j,L}^+(x) u_{j,L}^-(t){\bf 1}_{t\leq x} + u_{j,L}^-(x) u_{j,L}^+(t){\bf 1}_{x\leq t}.
$$
When $x\leq x^*-\delta$ with $\delta >0$ fixed arbitrarily small, we know (see, e.g., \cite{FMW1}) that $U_{1,L}(t,x) =\ord (h^\frac13 e^{-c|x-t|/h})$ for some constant $c>0$. Hence, 
$$
h^{-\frac43}\sup_{x\leq x^*-\delta}\int_{I_L}|U_{1,L}(t,x)|dt=\ord (1).
$$
When $x^*-\delta \leq x\leq 0$, then $\int_{-\infty}^{x^*-2\delta}|U_{1,L}(t,x)|dt$ is exponentially small, while, for $t\in [x^*-2\delta, 0]$, we have,
$$
U_{1,L}(t,x) =\ord (h^\frac16 |t|^{-\frac14}|t-x^*|^{-\frac14}).
$$
We deduce,
$$
h^{-\frac43}\sup_{x^* -\delta\leq x\leq 0}\int_{I_L}|U_{1,L}(t,x)|dt=\ord (h^{-\frac76}),
$$
and thus,
\be
\Vert K_{1,L}\Vert_{{\mathcal L}(L^2(I_L))}=\ord (h^{-\frac76}).
\ee
Concerning $K_{2,L}$, the same estimate $U_{2,L}(t,x) =\ord (h^\frac13 e^{-c|x-t|/h})$ is valid for $x\leq -\delta$ ($\delta >0$ arbitrarily small, $c=c(\delta)>0$). Therefore,
$$
h^{-\frac43}\sup_{x\leq -\delta}\int_{I_L}|U_{2,L}(t,x)|dt=\ord (1).
$$
Now, if $x\in [-\delta, -Ch^{\frac23}]$  (with a constant $C>0$ sufficiently large), we can write,
$$
\begin{aligned}
h^{-\frac43}\int_{I_L}|U_{2,L}(t,x)|dt=&h^{-\frac43}\int_{-\infty}^{-2\delta}|U_{2,L}(t,x)|dt+h^{-\frac43}\int_{-2\delta}^{-Ch^\frac23}|U_{2,L}(t,x)|dt\\
&+h^{-\frac43}\int_{-Ch^\frac23}^0|U_{2,L}(t,x)|dt,
\end{aligned}
$$
where the first term of the right hand side is exponentially small, while the last term is $\ord(h^{-\frac23})$. The middle term can be estimated by,
$$
h^{-\frac43}\int_{-2\delta}^{-Ch^\frac23}|U_{2,L}(t,x)|dt=\ord(h^{-\frac43 +\frac16})\int_{-2\delta}^{-Ch^\frac23}
\frac{e^{-|t^{3/2} -x^{3/2}|/h}}{|t|^\frac14} dt,
$$
and first dividing the integral into $\int_{-2\delta}^{x} + \int_{x}^{-Ch^\frac23}$, then making the change of variable $t\mapsto -(ht)^\frac23$, we obtain,
$$
\begin{aligned}
h^{-\frac43}\int_{-2\delta}^{-Ch^\frac23}|U_{2,L}(t,x)|dt=& \ord(h^{-\frac43 +\frac16 +\frac23-\frac16})e^{{-|x|^\frac32/h}}\int_{C^\frac32}^{|x|^\frac32/h}
\frac{e^{t}}{\sqrt t} dt \\
& +\ord(h^{-\frac23})e^{{|x|^\frac32/h}} \int_{|x|^\frac32/h}^{(2\delta)^\frac32/h}\frac{e^{-t}}{\sqrt t} dt,
\end{aligned}
$$
and thus,
$$
h^{-\frac43}\int_{-2\delta}^{-Ch^\frac23}|U_{2,L}(t,x)|dt=\ord(h^{-\frac23 }).
$$
Finally, if $x\in [ -Ch^{\frac23}, 0]$, the same argument (but this time without dividing the integral $\int_{-2\delta}^{-Ch^\frac23}$) directly gives $h^{-\frac43}\int_{-2\delta}^{-Ch^\frac23}|U_{2,L}(t,x)|dt=\ord(h^{-\frac23 })$, and the estimate on $\Vert K_{2,L}\Vert_{{\mathcal L}(L^2(I_L))} $ follows. Similar arguments (but with $x^*$ substituted by some large enough value of $x$) also apply on $I_R^\pm$, and complete the proof of the proposition.
\end{proof}

\section{Resolvents}

We consider the space ${\mathcal S}$ of functions $ \varphi \in C^\infty (\R)$ that are analytic on $[x_\infty, +\infty)$ and admit a holomorphic extension (still denoted by $\varphi$) near $\Gamma_\delta :=\{ x\in \C\, ;\, \Re x \geq x_\infty,\, |\Im x| \leq \delta \Re x\}$ for some $\delta >0$, and that are exponentially small at infinity both on $\R_-$ and on $\Gamma_\delta$.

In particular, for all $\varphi \in {\mathcal S}$, we have $K_{1,R}^+(z)[\varphi]=K_{1,R}^-(z)[\varphi]=:K_{1,R}(z)[\varphi]$ on $\R_+\cup \Gamma_\delta$.

For $z\in {\mathcal D}_h(C_0)\cap \{ \pm\Im z > 0\}$ and $j=1,2$, we denote  by $R_j^\pm(z) = (P_j-z)^{-1}$ the resolvent of $P_j$ in $z$, referred to as the {\sl incoming} (respectively {\sl out-going}) resolvent of $P_j$ in $z$.

Then, for $\varphi \in {\mathcal S}$, the next  Proposition  will show that $R_j^\pm (z)\varphi$ extend analytically to $z\in {\mathcal D}_h(C_0)$ ($z\notin {\rm Sp}(P_1)$ in the case $j=1$), and we use the same notations for their extensions. Obviously, in the case $j=1$, one also has $R_1^+(z)\varphi =R_1^-(z)\varphi$ for $z\in {\mathcal D}_h(C_0)\backslash {\rm Sp}(P_1)$.

Finally, for $\varphi \in {\mathcal S}$, we denote by $\varphi_L$ its restriction to $I_L$ and by $\varphi_R$ its restriction to $\R_+\cup \Gamma_\delta$.

\begin{proposition}\sl
\label{propResolv}
(i) For all $\varphi \in {\mathcal S}$, $z\in {\mathcal D}_h(C_0)\backslash {\rm Sp}(P_1)$, and $x\leq 0$, one has,
$$
R_1(z)\varphi (x)= K_{1,L}(z)[\varphi_L ](x)+ \alpha_{L}(z)[\varphi]u_{1,L}^-(z,x),
$$
with,
$$
\alpha_L(z)[\varphi] = \alpha_{L,L}(z)[\varphi_L]+ \alpha_{L,R}(z)[\varphi_R],
$$

$$
\begin{aligned}
 & \alpha_{L,L}(z)[\varphi_L]:= \frac{h^{-2}\W(u_{1,L}^+(z), u_{1,R}^-(z))}{ \W(u_{1,R}^-(z), u_{1,L}^-(z))\W(u_{1,L}^+(z), u_{1,L}^-(z))}\int_{-\infty}^0u_{1,L}^-(z,t)\varphi_L(t)dt\\
  & \alpha_{L,R}(z)[\varphi_R]:=\frac{h^{-2}}{ \W(u_{1,R}^-(z), u_{1,L}^-(z))}\int_0^{+\infty}u_{1,R}^-(z,t)\varphi_R(t)dt.
\end{aligned}
$$

(ii) For all $\varphi \in {\mathcal S}$, $z\in {\mathcal D}_h(C_0)\backslash {\rm Sp}(P_1)$, and $x\in \R_+\cup \Gamma_\delta$, one has,
$$
R_1(z)\varphi (x)= K_{1,R}(z)[\varphi_R ](x)+ \alpha_{R}(z)[\varphi]u_{1,R}^-(z,x),
$$
with,
$$
\alpha_R(z)[\varphi] = \alpha_{R,L}(z)[\varphi_L]+ \alpha_{R,R}(z)[\varphi_R],
$$

$$
\begin{aligned}
 & \alpha_{R,L}(z)[\varphi_L]:=  \frac{h^{-2}}{ \W(u_{1,R}^-(z), u_{1,L}^-(z))}\int_{-\infty}^0u_{1,L}^-(z,t)\varphi_L(t)dt\\
  & \alpha_{R,R}(z)[\varphi_R]:= \frac{h^{-2}\W(u_{1,L}^-(z), u_{1,R}^+(z))}{ \W(u_{1,R}^-(z), u_{1,L}^-(z))\W(u_{1,R}^-(z), u_{1,
  R}^+(z))}\int_0^{+\infty}u_{1,R}^-(z,t)\varphi_R(t)dt.
\end{aligned}
$$

(iii) For all $\varphi \in {\mathcal S}$, $z\in {\mathcal D}_h(C_0)$, and $x\leq 0$, one has,
$$
R_2^\pm (z)\varphi (x)= K_{2,L}(z)[\varphi_L ](x)+ \beta_{L}^\pm (z)[\varphi]u_{2,L}^-(z,x),
$$
with,
$$
\beta_{L}^\pm (z)[\varphi] = \beta_{L,L}^\pm(z)[\varphi_L]+ \beta_{L,R}^\pm(z)[\varphi_R],
$$

$$
\begin{aligned}
 & \beta_{L,L}^\pm(z)[\varphi_L]:= \frac{h^{-2}\W(u_{2,L}^+(z), u_{2,R}^\mp (z))}{ \W(u_{2,R}^\mp(z), u_{2,L}^-(z))\W(u_{2,L}^+(z), u_{2,L}^-(z))}\int_{-\infty}^0u_{2,L}^-(z,t)\varphi_L(t)dt\\
  & \beta_{L,R}^\pm(z)[\varphi_R]:=\frac{h^{-2}}{ \W(u_{2,R}^\mp(z), u_{2,L}^-(z))}\int_{I_R^\pm}u_{2,R}^\mp(z,t)\varphi_R(t)dt.
\end{aligned}
$$

(iv) For all $\varphi \in {\mathcal S}$, $z\in {\mathcal D}_h(C_0)$, and $x\in I_R^\pm$, one has,
$$
R_2^\pm(z)\varphi (x)= K_{2,R}^\pm (z)[\varphi_R ](x)+ \beta_{R}^\pm(z)[\varphi]u_{2,R}^\mp(z,x),
$$
with,
$$
\beta_R^\pm(z)[\varphi] = \beta_{R,L}^\pm(z)[\varphi_L]+ \beta_{R,R}^\pm(z)[\varphi_R],
$$
$$
\begin{aligned}
 & \beta_{R,L}^\pm(z)[\varphi_L]:=  \frac{h^{-2}}{ \W(u_{2,R}^\mp(z), u_{2,L}^-(z))}\int_{-\infty}^0u_{2,L}^-(z,t)\varphi_L(t)dt\\
  & \beta_{R,R}^\pm(z)[\varphi_R]:= \frac{h^{-2}\W(u_{2,L}^-(z), u_{2,R}^\pm(z))}{ \W(u_{2,R}^\mp(z), u_{2,L}^-(z))\W(u_{2,R}^\mp (z), u_{2,
  R}^\pm(z))}\int_{I_R^\pm}u_{2,R}^\mp(z,t)\varphi_R(t)dt.
\end{aligned}
$$
\end{proposition}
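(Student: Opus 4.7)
The plan is to prove all four formulas in one stroke, since they share a single underlying computation: on each half-line, I will write $R_j^\pm(z)\varphi$ as the particular inhomogeneous solution produced by the appropriate integral kernel ($K_{j,L}$ or $K_{j,R}^\pm$) plus a constant multiple of the correct decaying fundamental solution, with the two constants then fixed by matching at $x=0$. Fix $j\in\{1,2\}$ together with a sign $\pm$ (relevant only when $j=2$), take $z$ first in $\{\pm\Im z>0\}\cap{\mathcal D}_h(C_0)$, and set $u:=R_j^\pm(z)\varphi$. Since $(P_j-z)K_{j,L}(z)[\varphi_L]=\varphi_L$ on $I_L$, the difference $u-K_{j,L}(z)[\varphi_L]$ solves the homogeneous equation there, so it equals $\alpha u_{j,L}^-(z)+a u_{j,L}^+(z)$ for some constants $\alpha,a$. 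Because $V_j>0$ at $-\infty$, the $L^2$-condition combined with the fact that $K_{j,L}(z)[\varphi_L]\to 0$ at $-\infty$ (a consequence of the exponential smallness of $\varphi$ and the decay of $u_{j,L}^-$) forces $a=0$. A parallel argument on the appropriate contour $I_R^\pm$ yields
$$
u|_{I_R^\pm}=K_{j,R}^\pm(z)[\varphi_R]+\beta u_{j,R}^\mp(z),
$$
the choice of the sign $\mp$ being dictated by which of $u_{j,R}^\pm$ is genuinely decaying on the contour (for $j=1$ both choices coincide with $u_{1,R}^-$, explaining the common $K_{1,R}$ in (i)--(ii); for $j=2$, the distortion theory for $R_2^\pm$ forces $u_{2,R}^\mp$ on $I_R^\pm$).

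Next I would determine $\alpha$ and $\beta$ from the continuity of $u$ and $u'$ at $x=0$. Observing that the second integral in $K_{j,L}(z)[\varphi_L](x)$ and the first in $K_{j,R}^\pm(z)[\varphi_R](x)$ vanish at $x=0$, one obtains
$$
K_{j,L}(z)[\varphi_L](0)=\frac{u_{j,L}^+(z,0)}{h^2\W[u_{j,L}^+,u_{j,L}^-]}J_L^-,\qquad K_{j,R}^\pm(z)[\varphi_R](0)=\frac{u_{j,R}^\pm(z,0)}{h^2\W[u_{j,R}^\mp,u_{j,R}^\pm]}J_R^\mp,
$$
together with the analogous expressions for the derivatives obtained by differentiating only the prefactors $u_{j,L}^+(z,\cdot)$ and $u_{j,R}^\pm(z,\cdot)$ (the contributions of the two integrals to $K_{j,\ast}'$ cancel). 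Here $J_L^-:=\int_{-\infty}^0 u_{j,L}^-\varphi_L\,dt$ and $J_R^\mp:=\int_{I_R^\pm}u_{j,R}^\mp\varphi_R\,dt$. The matching then becomes a $2\times 2$ linear system in $(\alpha,\beta)$, the $(i,k)$ entry of whose coefficient matrix is $u_{j,L}^-(0)$, $-u_{j,R}^\mp(0)$, $(u_{j,L}^-)'(0)$, $-(u_{j,R}^\mp)'(0)$, with determinant $\W[u_{j,R}^\mp,u_{j,L}^-](0)$.

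Cramer's rule then finishes the proof. In the expression for $\alpha$, the coefficient of $J_R^\mp$ produces a Wronskian factor $\W[u_{j,R}^\mp,u_{j,R}^\pm](0)$ that exactly cancels the same quantity in the denominator, leaving $(h^2\W[u_{j,R}^\mp,u_{j,L}^-])^{-1}$ --- this is precisely $\alpha_{L,R}$ (respectively $\beta_{L,R}^\pm$). The coefficient of $J_L^-$ collapses to $\W[u_{j,L}^+,u_{j,R}^\mp]/(h^2\W[u_{j,L}^+,u_{j,L}^-]\W[u_{j,R}^\mp,u_{j,L}^-])$, giving $\alpha_{L,L}$ (respectively $\beta_{L,L}^\pm$). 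The symmetric Cramer computation for $\beta$ yields $\alpha_{R,L},\alpha_{R,R}$ in (ii) and $\beta_{R,L}^\pm,\beta_{R,R}^\pm$ in (iv). Finally I would note that each WKB solution $u_{j,\ast}^\pm(z)$ is analytic in $z\in{\mathcal D}_h(C_0)$ and that the exponential smallness of $\varphi\in\mathcal{S}$, both on $\R$ and on $\Gamma_\delta$, makes every integral analytic in $z$; hence the displayed formulas furnish the asserted analytic extensions, the only obstruction being $\W[u_{1,R}^-,u_{1,L}^-](z)=0$, which corresponds precisely to $z\in{\rm Sp}(P_1)$. The main obstacle is not any hard estimate but the bookkeeping --- tracking Wronskian signs and pairing each solution $u_{j,R}^{\pm}$ with the correct contour $I_R^\pm$ when $j=2$; once that is done, the entire result reduces to a single application of Cramer's rule.
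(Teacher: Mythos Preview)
Your proposal is correct and follows essentially the same route as the paper: subtract the kernel solution on each half-line, use the $L^2$ (decay) condition to kill the growing homogeneous part, and then match $C^1$ at $x=0$ to determine the two constants via a $2\times 2$ linear system. The paper proves only (i)--(ii) explicitly and leaves (iii)--(iv) to the reader, whereas you treat all four cases at once and are a bit more explicit about the Cramer step and the analytic continuation in $z$; these are elaborations rather than a different method.
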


\begin{remark}\sl
\label{rem52}
In particular, by \cite[Appendix]{FMW1}, for $z\in \gamma_-$ we have,
$$
\begin{aligned}
& \alpha_{L,L}(z)[\varphi_L] = \left(\frac{\pi}4 h^{-\frac43}\tan \frac{{\mathcal A}(z)}{h} +\ord(h^{-1})\right)\int_{-\infty}^0u_{1,L}^-(z,t)\varphi_L(t)dt ;\\
& \alpha_{L,R}(z)[\varphi_R] = \left(\frac{\pi}4 h^{-\frac43}\left (\cos \frac{{\mathcal A}(z)}{h}\right)^{-1}+\ord(h^{-1})\right)\int_0^{+\infty}u_{1,R}^-(z,t)\varphi_R(t)dt ;\\
& \alpha_{R,L}(z)[\varphi_L] = \left(\frac{\pi}4 h^{-\frac43}\left (\cos \frac{{\mathcal A}(z)}{h}\right)^{-1}+\ord(h^{-1})\right)\int_{-\infty}^0u_{1,L}^-(z,t)\varphi_L(t)dt ;\\
& \alpha_{R,R}(z)[\varphi_L] = \left(\frac{\pi}4 h^{-\frac43}\tan \frac{{\mathcal A}(z)}{h}+\ord(h^{-1})\right)\int_0^{+\infty}u_{1,R}^-(z,t)\varphi_R(t)dt ;\\
& \beta_{L,L}^\pm(z)[\varphi_L] = \left(\pm i \frac{\pi}4 h^{-\frac43}+\ord(h^{-1})\right)\int_{-\infty}^0u_{2,L}^-(z,t)\varphi_L(t)dt ;\\
& \beta_{L,R}^+(z)[\varphi_R] = \left(\frac{\pi}{\sqrt 2}e^{i\frac{\pi}4}h^{-\frac43}+\ord(h^{-1})\right)\int_{I_R^+}u_{2,R}^-(z,t)\varphi_R(t)dt ;\\
& \beta_{L,R}^-(z)[\varphi_R] = -\left(\frac{\pi}{2\sqrt 2}e^{i\frac{\pi}4}h^{-\frac43}+\ord(h^{-1})\right)\int_{I_R^-}u_{2,R}^+(z,t)\varphi_R(t)dt ;\\
& \beta_{R,L}^+(z)[\varphi_L] = \left(\frac{\pi}{\sqrt 2}e^{i\frac{\pi}4} h^{-\frac43}+\ord(h^{-1})\right)\int_{-\infty}^0u_{2,L}^-(z,t)\varphi_L(t)dt ;\\
& \beta_{R,L}^-(z)[\varphi_L] = -\left(\frac{\pi}{2\sqrt 2}e^{i\frac{\pi}4}h^{-\frac43}+\ord(h^{-1})\right)\int_{-\infty}^0u_{2,L}^-(z,t)\varphi_L(t)dt ;\\
& \beta_{R,R}^+(z)[\varphi_R] = \left(\pi h^{-\frac43}+\ord(h^{-1})\right)\int_{I_R^+}u_{2,R}^-(z,t)\varphi_R(t)dt ;\\
& \beta_{R,R}^-(z)[\varphi_R] = -\left(\frac{\pi}4 h^{-\frac43}+\ord(h^{-1})\right)\int_{I_R^-}u_{2,R}^+(z,t)\varphi_R(t)dt,
\end{aligned}
$$
with,
$$
{\mathcal A}(z):= \int_{x_1^*(z)}^{x_1(z)}\sqrt{ z-V_1(t)} \, dt,
$$
where $x_1^*(z)$ (respectively $x_1(z)$) is the unique solution of $V_1(x)=z$ close to $x^*$ (respectively close to 0).
\end{remark}

\begin{proof} We only prove (i)-(ii), since (iii)-(iv) follow along the same lines. We set,
$$
\psi:= R_1(z)\varphi\quad ;\quad \psi_{1,L}:= K_{1,L}(z)\varphi \quad ;\quad \psi_{1,R}:= K_{1,R}(z)\varphi.
$$
Then by construction we have,
$$
\begin{aligned}
& (P_1-z)(\psi -\psi_{1,L}) = (P_1-z)(\psi -\psi_{1,R}) =0;\\
 & \psi -\psi_{1,L}\in L^2(I_L)\quad ;\quad \psi -\psi_{1,R}\in L^2(I_R^\pm).
 \end{aligned}
$$
Therefore, there exist two complex numbers $\alpha_L =\alpha_L(z)$ and $\alpha_R =\alpha_R(z)$ such that,
$$
 \psi -\psi_{1,L}=\alpha_L u_{1,L}^-\quad ;\quad \psi -\psi_{1,R}=\alpha_R u_{1,R}^-.
$$
In order to compute $\alpha_L$ and $\alpha_R$, we write that $\psi$ must be $C^1$ at 0. We find the system,
$$
\left\{
\begin{aligned}
&\alpha_L u_{1,L}^-(0)- \alpha_R u_{1,R}^-(0)= \psi_{1,R}(0)- \psi_{1,L}(0);\\
&\alpha_L [u_{1,L}^-]'(0)- \alpha_R [u_{1,R}^-]'(0)=\psi_{1,R}'(0)-\psi_{1,L}'(0),
\end{aligned}
\right.
$$
and, using that, 
$$
\begin{aligned}
&\psi_{1,L}(0)=\frac{u_{1.L}^+(0)}{h^2\W(u_{1,L}^+, u_{1,L}^-)}\int_{-\infty}^0 u_{1,L}^-(t)\varphi(t)dt;\\
&\psi_{1,L}'(0)=\frac{[u_{1.L}^+]'(0)}{h^2\W(u_{1,L}^+, u_{1,L}^-)}\int_{-\infty}^0 u_{1,L}^-(t)\varphi(t)dt;\\
&\psi_{1,R}(0)=\frac{u_{1.R}^+(0)}{h^2\W(u_{1,R}^-, u_{1,R}^+)}\int_0^{+\infty} u_{1,R}^-(t)\varphi(t)dt;\\
&\psi_{1,R}'(0)=\frac{[u_{1.R}^+]'(0)}{h^2\W(u_{1,R}^-, u_{1,R}^+)}\int_0^{+\infty} u_{1,R}^-(t)\varphi(t)dt,
\end{aligned}
$$
the result follows by straightforward computations.
\end{proof}
As a consequence of the previous proposition, we have,
\begin{corollary}\sl 
\label{prop5.3}
For $z\in \gamma_-$, one has,
\be
\label{estR2}
 \Vert  R_2^\pm (z)\Vert_{{\mathcal L}(L^2(I_L\cup I_R^\pm ))} = {\mathcal O}(h^{-1-1/6});
\ee
\be
\label{estM}
\Vert h^2 R_2^\pm (z)W^* R_1(z)W\Vert_{{\mathcal L}(L^2(I_L\cup I_R^\pm ))} = {\mathcal O}(h^{1/6}).
\ee
\end{corollary}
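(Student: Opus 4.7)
Both bounds come from inserting the explicit decompositions of Proposition \ref{propResolv} into the resolvents $R_2^\pm(z)$ and $R_1(z)$, then combining the kernel bounds of Proposition \ref{normK} with the coefficient formulas of Remark \ref{rem52}.

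For \eqref{estR2}: given $\varphi=\varphi_L+\varphi_R\in L^2(I_L\cup I_R^\pm)$, parts (iii)--(iv) of Proposition \ref{propResolv} express $R_2^\pm(z)\varphi$ as a kernel piece ($K_{2,L}(z)\varphi_L$ on $I_L$ and $K_{2,R}^\pm(z)\varphi_R$ on $I_R^\pm$) plus a rank-one correction along $u_{2,L}^-$ or $u_{2,R}^\mp$ with coefficient $\beta_L^\pm(z)[\varphi]$ or $\beta_R^\pm(z)[\varphi]$. The kernel pieces contribute $O(h^{-7/6})\|\varphi\|$ by \eqref{normdeconc}. For the rank-one pieces, Remark \ref{rem52} gives each coefficient as an $O(h^{-4/3})$ prefactor times an integral of $\varphi$ against $u_{2,L}^-$ or $u_{2,R}^\mp$; by Cauchy--Schwarz and the WKB estimate $\|u_{2,L}^-\|_{L^2(I_L)}+\|u_{2,R}^\mp\|_{L^2(I_R^\pm)}=O(h^{1/3})$ (from the Airy-type concentration of these solutions on a window of size $h^{2/3}$ around the turning point $x=0$), each rank-one correction is of order $O(h^{-2/3})\|\varphi\|$, dominated by the kernel contribution. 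Summing the two subinterval contributions yields \eqref{estR2}.

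For \eqref{estM}: a direct product estimate $\|h^2R_2^\pm W^*R_1 W\|\le\|R_2^\pm(hW^*)\|\cdot\|R_1(hW)\|=O(h^{-1/6})\cdot O(1)$ is off by $h^{1/3}$, so we must exploit a cancellation coming from $W=a_0+ia_1hD_x$. The idea is to apply Proposition \ref{propResolv}(i)--(ii) to $R_1(z)W\psi$ and then (iii)--(iv) to $R_2^\pm(z)W^*R_1(z)W\psi$ on a dense subset of $L^2$, producing a sum of cross terms indexed by the four pairs $(L,L), (L,R), (R,L), (R,R)$ built from kernel and rank-one pieces, and to integrate by parts inside the integrals defining the $\alpha$'s and $\beta^\pm$'s in order to transfer $hD_x$ from $W\psi$ and $W^*R_1W\psi$ onto the WKB factors $u_{1,\star}^-$ and $u_{2,\star}^\mp$. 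Because these factors are Airy-type near $x=0$ and the bulk of the integrals is carried by the Airy window of size $h^{2/3}$ there, one has the effective relation $|hD_x u|\sim h^{1/3}|u|$ on this window, producing an extra factor $h^{1/3}$ per integration by parts. Combining the $h^2$ prefactor, the two $h^{-4/3}$ coefficient losses and the two $h^{1/3}$ gains, one obtains $O(h^{2-8/3+2/3})=O(h^{1/6})$, as claimed.

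The main obstacle is the careful bookkeeping of the four cross cases in this double decomposition and the control of the boundary contributions at $x=0$ coming from the integrations by parts. These are handled case-by-case using the explicit Airy asymptotics of the WKB functions (as in the Appendix of \cite{FMW1}) together with the Wronskian normalizations of Remark \ref{rem52}, to ensure that in each case the leading product $u_{1,\star}^-\cdot u_{2,\star}^\mp$ evaluates to a bounded Airy-type integral and the overall power of $h$ agrees with the advertised $O(h^{1/6})$.
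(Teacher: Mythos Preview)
Your treatment of \eqref{estR2} is essentially the paper's, with one slip: $u_{2,R}^\mp$ is the oscillatory (out-/in-going) solution on the classically allowed region $\{V_2<0\}$ on $I_R^\pm$, so it is \emph{not} concentrated near $x=0$ and one only has $\|u_{2,R}^\mp\|_{L^2(I_R^\pm)}=O(h^{1/6})$, not $O(h^{1/3})$. This does not affect the final bound, since the rank-one piece is still $O(h^{-1})\|\varphi\|$, dominated by the kernel contribution.

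For \eqref{estM} there is a genuine gap: the mechanism you invoke is the wrong one. You attribute the needed $h^{1/3}$ gain to the differential structure of $W$, proposing to integrate $hD_x$ by parts onto the WKB factors and to use $|hD_xu|\sim h^{1/3}|u|$ on the Airy window. But \eqref{estM} already holds for $W=W^*=\mathbf 1$ --- the paper in fact proves this case first --- and there is then no $hD_x$ to move, so your scheme yields no improvement over the naive $O(h^{-1/6})$. (Even when $a_1\not\equiv 0$, the $a_0$-part of $W$ receives no gain from your argument.) The actual source of the $h^{1/3}$ is the \emph{cross-structure} of the two potentials: $K_{2,L}$ obeys the better bound $O(h^{-2/3})$ on $I_L$ while $K_{1,R}$ obeys $O(h^{-2/3})$ on $I_R^\pm$ (Proposition \ref{normK}), and mixed integrals such as $\beta_{L,L}^\pm[u_{1,L}^-]=O(h^{-4/3})\int_{I_L}u_{2,L}^-\,u_{1,L}^-\,dt$ are smaller than Cauchy--Schwarz predicts because $u_{2,L}^-$ concentrates in an $h^{2/3}$-window about $0$. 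Expanding both resolvents via Proposition \ref{propResolv} on each subinterval and estimating the resulting six terms one by one (Lemmas \ref{lemm55}--\ref{lemm58}) gives $\|h^2R_2^\pm R_1\|=O(h^{1/6})$; inserting $W,W^*$ is then a separate, routine step (bounded multipliers are harmless, and the $hD_x$-pieces are reduced to the multiplier case via $R_jhD_x=(I+R_j(I+z-V_j))hD_x(1-h^2\Delta)^{-1}$).
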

\begin{remark}\sl
In particular, the operators $M_{\pm\theta}$ introduced in \eqref{Mtheta} also satisfy,
\be
\label{estMtheta}
\Vert M_{\pm\theta}(z) \Vert_{{\mathcal L}(L^2(\R ))}= {\mathcal O}(h^{1/6}).
\ee
\end{remark}
\begin{proof} We first observe that, by construction (see, e.g., \cite{FMW1}), we have,
$$
\Vert u_{2,L}^-\Vert_{L^2(I_L)}=\ord(h^{\frac13})\quad ;\quad  \Vert u_{2,R}^\mp\Vert_{L^2(I_R^\pm)}=\ord(h^{\frac16}).
$$
Using Remark \ref{rem52}, we deduce that, for $z\in\gamma_-$ and $S=L,R$, we have,
$$
\begin{aligned}
& |\beta_{S,L}^\pm(z)[\varphi_L] |=\ord(h^{-1})\Vert \varphi_L\Vert_{L^2(I_L)} \\
& |\beta_{S,R}^\pm(z)[\varphi_R] |=\ord(h^{-\frac76})\Vert \varphi_R\Vert_{L^2(I_R^\pm)}.
\end{aligned}
$$
Therefore,
\be
\label{betau}
\Vert \beta_L^\pm(z)[\varphi] u_{2,L}^-\Vert_{L^2(I_L)}+\Vert \beta_R^\pm(z)[\varphi] u_{2,R}^\mp\Vert_{L^2(I_R^\pm)}=\ord(h^{-1})\Vert \varphi\Vert_{L^2(I_L\cup I_R^\pm)}.
\ee
Moreover, by Proposition \ref{normK}, we also have,
$$
\Vert K_{2,L}(z)[\varphi_L]\Vert_{L^2(I_L)} +\Vert K_{2,R}^\pm(z)[\varphi_R]\Vert_{L^2(I_R^\pm)}=\ord (h^{-\frac76})\Vert\varphi_R\Vert_{L^2(I_R^\pm)}.
$$
Then,  \eqref{estR2} follows from Proposition \ref{propResolv} (iii)-(iv).

Concerning \eqref{estM}, in order to simplify the notations we write the detailed proof with $W=W^*={\mathbf 1}$, and then we explain how to deduce the result for the actual $W$, $W^*$. We set,
$$
K_0^\pm(z):=h^2 R_2^\pm (z)R_1(z),
$$
and we first observe that, for $z\in\gamma_-$, we have $\Vert R_1(z)\Vert =\ord (h^{-1})$, so that, by \eqref{estR2}, a mere estimate with the product of the norms gives $\Vert K_0^\pm(z)\Vert =\ord (h^{-\frac16})$. The improvement into $\ord(h^{\frac16})$ will actually follow from the fact that $R_2(z)$ is better on $I_L$, while $R_1(z)$ is better on $I_R^\pm$.

Using Proposition \ref{propResolv} (and dropping the parameter $z$), we have,
\begin{itemize}
\item On $I_L$,
\be
\label{K0IL}
\begin{aligned}
K_0^\pm\varphi =& h^2K_{2,L}K_{1,L}\varphi_L+ h^2\alpha_{L}[\varphi]K_{2,L}u_{1,L}^-+ h^2\beta_{L,L}^\pm [K_{1,L}\varphi_L]u_{2,L}^-\\
& + h^2 \alpha_{L}[\varphi]\beta_{L,L}^\pm [u_{1,L}^-]u_{2,L}^-+ h^2\beta_{L,R}^\pm [K_{1,R}\varphi_R]u_{2,L}^-\\
&+ h^2\alpha_{R}[\varphi]\beta_{L,R}^\pm [u_{1,R}^-]u_{2,L}^- ;
\end{aligned}
\ee
\item On $I_R^\pm$,
\be
\begin{aligned}
K_0^\pm\varphi =& h^2K_{2,R}K_{1,R}\varphi_R+ h^2\alpha_{R}[\varphi]K_{2,R}u_{1,R}^-+ h^2\beta_{R,L}^\pm [K_{1,L}\varphi_L]u_{2,R}^\mp\\
& + h^2 \alpha_{L}[\varphi]\beta_{R,L}^\pm [u_{1,L}^-]u_{2,R}^\mp + h^2\beta_{R,R}^\pm [K_{1,R}\varphi_R]u_{2,R}^\mp\\
&+ h^2\alpha_{R}[\varphi]\beta_{R,R}^\pm [u_{1,R}^-]u_{2,R}^\mp.
\end{aligned}
\ee
\end{itemize}
Since the studies on $I_L$ and on $I_R^\pm$ are similar, we detail the proof for $I_L$ only. In view of \eqref{K0IL}, we have six terms to examine. We first show,
\begin{lemma}\sl 
\label{lemm55}
One has,
\be
\Vert h^2 \alpha_{L}[\varphi]\beta_{L,L}^\pm [u_{1,L}^-]u_{2,L}^-\Vert_{L^2(I_L)}=\ord (h^\frac12)\Vert \varphi\Vert_{L^2(I_L\cup I_R^\pm)};
\ee
\be
\Vert h^2\alpha_{R}[\varphi]\beta_{L,R}^\pm [u_{1,R}^-]u_{2,L}^- \Vert_{L^2(I_L)}=\ord (h^\frac12)\Vert \varphi\Vert_{L^2(I_L\cup I_R^\pm)}.
\ee
\end{lemma}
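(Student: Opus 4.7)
The plan is to bound the scalar prefactors $h^2\alpha_{L/R}[\varphi]\,\beta_{L,L/R}^\pm[u_{1,L/R}^-]$ by $\ord(h^{1/6})\Vert\varphi\Vert_{L^2(I_L\cup I_R^\pm)}$; multiplying by $\Vert u_{2,L}^-\Vert_{L^2(I_L)}=\ord(h^{1/3})$ (already used in the proof of Corollary \ref{prop5.3}) will then yield the two asserted $\ord(h^{1/2})$ bounds.

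First I would apply Remark \ref{rem52} to extract the $h^{-4/3}$ leading-order prefactor from each of $\alpha_{L,L},\alpha_{L,R},\alpha_{R,L},\alpha_{R,R},\beta_{L,L}^\pm,\beta_{L,R}^\pm$. On $\gamma_-$, which sits at distance $\sim h$ below $\rho_{0}\approx\lambda_0$, a short computation shows that $\mathcal{A}(z)/h$ stays at complex distance $\sim 1$ from the zeros of $\cos(\cdot)$ (essentially because $\im z\sim -h$ forces $\im\mathcal{A}(z)/h\sim -1$), so that $\tan(\mathcal{A}(z)/h)$ and $\cos^{-1}(\mathcal{A}(z)/h)$ are uniformly $\ord(1)$. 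Together with Cauchy--Schwarz and the $L^2$-bounds $\Vert u_{1,L}^-\Vert_{L^2(I_L)}=\ord(h^{1/6})$ (which follows from $\varphi_0=c_0 h^{-1/6}u_{1,L}^-$ with $c_0\sim 1$ and $\Vert\varphi_0\Vert=1$) and $\Vert u_{1,R}^-\Vert_{L^2(I_R^\pm)}=\ord(h^{1/3})$ (a Langer-type estimate at a simple turning point without an oscillatory region, in perfect analogy with the estimate $\Vert u_{2,L}^-\Vert_{L^2(I_L)}=\ord(h^{1/3})$ recalled in Corollary \ref{prop5.3}), this yields
\[
|\alpha_L[\varphi]|+|\alpha_R[\varphi]|\leq Ch^{-7/6}\Vert\varphi\Vert_{L^2(I_L\cup I_R^\pm)}.
\]

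The decisive step is then the sharp bound
\[
\Bigabs{\int_{-\infty}^0 u_{2,L}^-(z,t)u_{1,L}^-(z,t)\,dt},\ \Bigabs{\int_{I_R^\pm}u_{2,R}^\mp(z,t)u_{1,R}^-(z,t)\,dt}=\ord(h^{2/3}).
\]
Here I would exploit that $x=0$ is a common simple turning point of $V_1$ and $V_2$: inside the Airy window $|x|\lesssim h^{2/3}$, each of the four functions $u_{1,L}^-,u_{2,L}^-,u_{1,R}^-,u_{2,R}^\mp$ admits the standard scaled Airy approximation and is uniformly $\ord(1)$ in $L^\infty$, while outside the window the relevant decaying factor is exponentially small. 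Combined with the $h^{-4/3}$ prefactors from Remark \ref{rem52}, this gives $|\beta_{L,L}^\pm[u_{1,L}^-]|,|\beta_{L,R}^\pm[u_{1,R}^-]|=\ord(h^{-2/3})$, and assembling the four factors produces
\[
h^2\cdot h^{-7/6}\cdot h^{-2/3}\cdot h^{1/3}\Vert\varphi\Vert = h^{1/2}\Vert\varphi\Vert,
\]
as claimed.

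The main obstacle is the $\ord(h^{2/3})$ integral bound for the second integral. A naive Cauchy--Schwarz based on $\Vert u_{2,R}^\mp\Vert_{L^2(I_R^\pm)}=\ord(h^{1/6})$ and $\Vert u_{1,R}^-\Vert_{L^2(I_R^\pm)}=\ord(h^{1/3})$ only gives $\ord(h^{1/2})$, which is one power of $h^{1/6}$ too weak, because $u_{2,R}^\mp$ carries a sizable $L^2$ mass in the oscillatory region $[Ch^{2/3},x_\infty]$ where $u_{1,R}^-$ is however exponentially small. It is precisely the Airy pointwise estimate on the common turning-point window (rather than the global $L^2$-norms) that restores the missing $h^{1/6}$ and makes the argument sharp.
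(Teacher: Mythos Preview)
Your proposal is correct and follows essentially the same route as the paper: bound $|\alpha_{L}[\varphi]|+|\alpha_{R}[\varphi]|=\ord(h^{-7/6})\Vert\varphi\Vert$ via Cauchy--Schwarz and the $L^2$-sizes of $u_{1,L}^-,u_{1,R}^-$, then obtain $|\beta_{L,L}^\pm[u_{1,L}^-]|,|\beta_{L,R}^\pm[u_{1,R}^-]|=\ord(h^{-2/3})$ by a pointwise Airy-window argument near $x=0$, and finally multiply by $\Vert u_{2,L}^-\Vert_{L^2(I_L)}=\ord(h^{1/3})$. The paper writes the $\beta$-estimate as an explicit three-region split $\int_0^{Ch^{2/3}}+\int_{Ch^{2/3}}^\delta+\int_\delta^{+\infty}$, which is exactly the quantitative version of your ``Airy window $+$ exponential tail'' description; your closing remark on why naive Cauchy--Schwarz loses $h^{1/6}$ is a nice diagnostic not spelled out in the paper.
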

\begin{proof} 
Since $\Vert u_{1,L}^-\Vert_{L^2(I_L)}$ and $\Vert u_{2,R}^\mp\Vert_{L^2(I_R^\pm)}$ are  of size $\sim  h^{\frac16}$,  while  $\Vert u_{1,R}^-\Vert_{L^2(I_R^\pm)}$ and $\Vert u_{2,L}^-\Vert_{L^2(I_L)}$ are  of size $\sim  h^{\frac13}$, by Cauchy-Schwarz inequality we see on Remark \ref{rem52} that we have,
$$
\Vert h^2 \alpha_{L}[\varphi]u_{2,L}^-\Vert_{L^2(I_L)}+ \Vert h^2\alpha_{R}[\varphi] u_{2,L}^- \Vert_{L^2(I_L)}=\ord (h^\frac76)\Vert \varphi\Vert_{L^2(I_L\cup I_R^\pm)},
$$
and it remains to estimate $\left|\beta_{L,L}^\pm [u_{1,L}^-]\right|$ and $\left|\beta_{L,R}^\pm [u_{1,R}^-]\right|$. Both can be upper bounded by,
$$
Ch^{-\frac43}\int_0^{Ch^\frac23}dt + Ch^{-\frac43}\int_{Ch^\frac23}^\delta \frac{h^\frac13}{\sqrt t}e^{-ct^{3/2}/h}dt+Ch^{-\frac43}\int_\delta^{+\infty}h^\frac13 e^{-ct/h},
$$
(with $C>0$ large enough constant, and $c,\delta >0$ small enough constants), and thus are $\ord(h^{-\frac23})$, and the result follows.
\end{proof}
\begin{lemma}\sl
$$
\Vert h^2\alpha_{L}[\varphi]K_{2,L}u_{1,L}^-\Vert_{L^2(I_L)}=\ord (h^\frac13)\Vert \varphi\Vert_{L^2(I_L\cup I_R^\pm)}.
$$
\end{lemma}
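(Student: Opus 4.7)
The key observation is that $\alpha_L[\varphi]$ is a scalar, so the quantity to bound splits as a product:
$$
\Vert h^2\alpha_{L}[\varphi]K_{2,L}u_{1,L}^-\Vert_{L^2(I_L)} = h^2\,\bigabs{\alpha_L[\varphi]}\cdot\Vert K_{2,L}u_{1,L}^-\Vert_{L^2(I_L)}.
$$
Thus the plan is to estimate each of the three factors separately and check that the powers of $h$ combine to give $h^{\frac13}$. Specifically, I expect the bookkeeping
$$
h^2\cdot h^{-\frac76}\cdot h^{-\frac12}=h^{\frac13},
$$
where the exponent $-\frac76$ will come from $\alpha_L[\varphi]$ and $-\frac12$ from $K_{2,L}u_{1,L}^-$.

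For the scalar factor, the idea is to apply Remark \ref{rem52}, which expresses $\alpha_{L,L}(z)[\varphi_L]$ and $\alpha_{L,R}(z)[\varphi_R]$ as $\ord(h^{-\frac43})$ multiplied by scalar products against $u_{1,L}^-$ and $u_{1,R}^-$ respectively. For $z\in\gamma_-$ the trigonometric prefactors $\tan({\mathcal A}(z)/h)$ and $(\cos({\mathcal A}(z)/h))^{-1}$ appearing there are $\ord(1)$, because $\gamma_-$ is held at distance $\sim h$ from $\rho_0=\lambda_0+\ord(h^{\frac43})$, and $g$ (hence the relevant range of $z$) is supported in an $\ord(h)$-neighbourhood of $\lambda_0$, so ${\mathcal A}(z)/h$ stays at uniformly positive distance from the singularities. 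Applying Cauchy--Schwarz together with the known bounds $\Vert u_{1,L}^-\Vert_{L^2(I_L)}+\Vert u_{1,R}^-\Vert_{L^2(I_R^\pm)}=\ord(h^{\frac16})$ (recalled in the proof of Corollary \ref{prop5.3}) yields
$$
\bigabs{\alpha_L[\varphi]}\leq Ch^{-\frac43}\cdot h^{\frac16}\Vert\varphi\Vert_{L^2(I_L\cup I_R^\pm)}=\ord(h^{-\frac76})\Vert\varphi\Vert_{L^2(I_L\cup I_R^\pm)}.
$$

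For the remaining factor, Proposition \ref{normK} gives $\Vert K_{2,L}\Vert_{{\mathcal L}(L^2(I_L))}=\ord(h^{-\frac23})$, and combining it with $\Vert u_{1,L}^-\Vert_{L^2(I_L)}=\ord(h^{\frac16})$ yields $\Vert K_{2,L}u_{1,L}^-\Vert_{L^2(I_L)}=\ord(h^{-\frac12})$. Multiplying everything produces $\ord(h^{\frac13})\Vert\varphi\Vert$, as required.

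Since the argument is a straightforward assembly of already-established bounds, the only mildly delicate point, and the sole place where one has to pause, is verifying that the trigonometric prefactors in Remark \ref{rem52} are indeed $\ord(1)$ on $\gamma_-$; once that is noted, the rest is routine bookkeeping of exponents.
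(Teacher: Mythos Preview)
Your proof is correct and follows essentially the same route as the paper: factor out the scalar $\alpha_L[\varphi]$, bound it by $\ord(h^{-7/6})\Vert\varphi\Vert$ via Remark \ref{rem52} and Cauchy--Schwarz, and then bound $\Vert K_{2,L}u_{1,L}^-\Vert$ by $\ord(h^{-1/2})$ using Proposition \ref{normK} and $\Vert u_{1,L}^-\Vert=\ord(h^{1/6})$. One small remark: in fact $\Vert u_{1,R}^-\Vert_{L^2(I_R^\pm)}=\ord(h^{1/3})$ (see the proof of Lemma \ref{lemm55}), which is even better than the $\ord(h^{1/6})$ you quote, but your weaker bound is of course sufficient here.
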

\begin{proof} Using again Remark \ref{rem52}, we have,
$$
\Vert h^2\alpha_{L}[\varphi]K_{2,L}u_{1,L}^-\Vert_{L^2(I_L)}=\ord (h^\frac56)\Vert \varphi\Vert_{L^2(I_L\cup I_R^\pm)}\Vert K_{2,L}u_{1,L}^-\Vert_{L^2(I_L)},
$$
and it remains to estimate $\Vert K_{2,L}u_{1,L}^-\Vert_{L^2(I_L)}$. Applying Proposition \ref{normK}, we obtain,
$$
\Vert K_{2,L}u_{1,L}^-\Vert_{L^2(I_L)}=\ord (h^{-\frac23})\Vert u_{1,L}^-\Vert_{L^2(I_L)}=\ord (h^{-\frac12}),
$$
and the result follows.
\end{proof}

\begin{lemma}\sl
$$
\begin{aligned}
 \Vert h^2\beta_{L,L}^\pm [K_{1,L}\varphi_L]u_{2,L}^- \Vert_{L^2(I_L)}+ \Vert h^2\beta_{L,R}^\pm  [K_{1,R}\varphi_R] & u_{2,L}^-\Vert_{L^2(I_L)}\\
 & =\ord (h^\frac16)\Vert \varphi \Vert_{L^2(I_L\cup I_R^\pm)}.
 \end{aligned}
 $$
\end{lemma}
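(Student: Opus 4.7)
The plan is to bound each of the two summands by plain Cauchy--Schwarz, combined with the operator norm estimates of Proposition \ref{normK} and the standard $L^2$--sizes of the WKB building blocks, namely $\|u_{2,L}^-\|_{L^2(I_L)}=\ord(h^{1/3})$ and $\|u_{2,R}^\mp\|_{L^2(I_R^\pm)}=\ord(h^{1/6})$ (already recalled in the proof of Corollary \ref{prop5.3}).

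For the first summand, I would start from the expression of $\beta_{L,L}^\pm$ in Remark \ref{rem52} and apply Cauchy--Schwarz to the defining integral to get
\[
|\beta_{L,L}^\pm[K_{1,L}\varphi_L]|\le Ch^{-4/3}\,\|u_{2,L}^-\|_{L^2(I_L)}\,\|K_{1,L}\varphi_L\|_{L^2(I_L)}.
\]
Inserting $\|K_{1,L}\|_{\mathcal L(L^2(I_L))}=\ord(h^{-7/6})$ from Proposition \ref{normK} then yields $|\beta_{L,L}^\pm[K_{1,L}\varphi_L]|=\ord(h^{-4/3+1/3-7/6})\|\varphi_L\|_{L^2(I_L)}=\ord(h^{-13/6})\|\varphi_L\|_{L^2(I_L)}$. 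Multiplying by the external factor $h^2\|u_{2,L}^-\|_{L^2(I_L)}=\ord(h^{7/3})$ coming from the outer $u_{2,L}^-$ produces exactly the desired
\[
\|h^2\beta_{L,L}^\pm[K_{1,L}\varphi_L]u_{2,L}^-\|_{L^2(I_L)}=\ord(h^{7/3-13/6})\|\varphi_L\|_{L^2(I_L)}=\ord(h^{1/6})\|\varphi_L\|_{L^2(I_L)}.
\]

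For the second summand, the same scheme applies with $\beta_{L,R}^\pm$ in place of $\beta_{L,L}^\pm$: Cauchy--Schwarz on the integral of Remark \ref{rem52}, together with $\|K_{1,R}^\pm\|_{\mathcal L(L^2(I_R^\pm))}=\ord(h^{-2/3})$ and $\|u_{2,R}^\mp\|_{L^2(I_R^\pm)}=\ord(h^{1/6})$, gives $|\beta_{L,R}^\pm[K_{1,R}\varphi_R]|=\ord(h^{-11/6})\|\varphi_R\|_{L^2(I_R^\pm)}$, and multiplying again by $h^2\|u_{2,L}^-\|_{L^2(I_L)}=\ord(h^{7/3})$ produces $\ord(h^{1/2})\|\varphi_R\|_{L^2(I_R^\pm)}$, which is absorbed into $\ord(h^{1/6})$ for $h$ small. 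Summing the two contributions and using $\|\varphi_L\|_{L^2(I_L)}+\|\varphi_R\|_{L^2(I_R^\pm)}\le \sqrt{2}\,\|\varphi\|_{L^2(I_L\cup I_R^\pm)}$ yields the stated bound.

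I do not foresee a genuine obstacle: once Cauchy--Schwarz is set up correctly, the powers of $h$ align essentially by themselves. The one subtlety worth flagging is that one must extract the $L^2\to L^2$ operator norm of $K_{1,L}$ (respectively $K_{1,R}^\pm$), rather than a pointwise or $L^\infty$ bound on $K_{j,\cdot}\varphi_{\cdot}$: any such coarser route would lose too much to reach the exponent $1/6$. As in the passage just above Lemma \ref{lemm55}, reinstating the actual coupling operators $W$, $W^*$ in place of $\mathbf 1$ affects only a bounded analytic prefactor and thus does not touch the gain in $h$.
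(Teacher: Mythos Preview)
Your proposal is correct and follows essentially the same approach as the paper: both arguments reduce to Cauchy--Schwarz on the defining integrals from Remark \ref{rem52}, combined with the operator norm bounds of Proposition \ref{normK} and the $L^2$-sizes $\|u_{2,L}^-\|_{L^2(I_L)}=\ord(h^{1/3})$, $\|u_{2,R}^\mp\|_{L^2(I_R^\pm)}=\ord(h^{1/6})$, arriving at the same exponents $h^{-13/6}$ and $h^{-11/6}$ for the two $\beta$-coefficients and hence $\ord(h^{1/6})$ and $\ord(h^{1/2})$ after multiplication by $h^2\|u_{2,L}^-\|_{L^2(I_L)}$.
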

\begin{proof} Since $\Vert u_{2,L}^-\Vert_{L^2(I_L)}=\ord (h^\frac13)$, it is enough to prove that $\beta_{L,L}^\pm [K_{1,L}\varphi_L]$ and $\beta_{L,R}^\pm  [K_{1,R}\varphi_R]$ are $\ord (h^{-\frac{13}6})\Vert \varphi \Vert_{L^2(I_L\cup I_R^\pm)}$. By Remark \ref{rem52} and Proposition \ref{normK}, we have,
$$
\beta_{L,L}^\pm [K_{1,L}\varphi_L]=\ord (h^{-1})\Vert K_{1,L}\varphi_L\Vert_{L^2(I_L)}=\ord (h^{-1-\frac76})\Vert \varphi_L\Vert_{L^2(I_L)},
$$
$$
\beta_{L,R}^\pm  [K_{1,R}\varphi_R]=\ord (h^{-\frac76})\Vert K_{1,R}\varphi_R\Vert_{L^2(I_R^\pm)}=\ord (h^{-\frac76-\frac23})\Vert \varphi_R\Vert_{L^2(I_R^\pm)},
$$
and the result follows.
\end{proof}

\begin{lemma}\sl
\label{lemm58}
$$
\Vert h^2K_{2,L}K_{1,L}\Vert_{{\mathcal L}(L^2(I_L))}=\ord (h^\frac16).
$$
\end{lemma}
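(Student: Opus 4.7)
My plan is that this lemma is essentially a direct consequence of Proposition \ref{normK}, combined with sub-multiplicativity of the operator norm on $L^2(I_L)$. Specifically, from \eqref{normconc} we have $\Vert K_{2,L}\Vert_{{\mathcal L}(L^2(I_L))}=\ord(h^{-2/3})$, and from \eqref{normdeconc} we have $\Vert K_{1,L}\Vert_{{\mathcal L}(L^2(I_L))}=\ord(h^{-7/6})$. Multiplying,
$$
\Vert h^2 K_{2,L}K_{1,L}\Vert_{{\mathcal L}(L^2(I_L))}\leq h^2\Vert K_{2,L}\Vert_{{\mathcal L}(L^2(I_L))}\Vert K_{1,L}\Vert_{{\mathcal L}(L^2(I_L))}=\ord(h^{2-\frac23-\frac76})=\ord(h^{\frac16}),
$$
since $2-\tfrac{2}{3}-\tfrac{7}{6}=\tfrac{12-4-7}{6}=\tfrac16$.

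In other words, there is no genuine obstacle here: unlike the preceding lemmas in the section, which had to exploit subtle cancellations coming from the fact that the WKB functions and Wronskians combine favorably (so that the ``naive'' product of norms would overshoot the required power of $h$), the statement of Lemma \ref{lemm58} is exactly what the crude product bound delivers. So the proof plan is simply to quote the two pieces of Proposition \ref{normK}, take the product, and observe that the exponents add up to $\tfrac16$.

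If one wanted to be more pedantic, one could also give a direct Schur-lemma argument combining the kernels of $K_{1,L}$ and $K_{2,L}$, bounding the composed kernel by splitting the integration in $t$ according to the regions $t\leq x^\ast-\delta$, $x^\ast-\delta\leq t\leq -Ch^{2/3}$, $-Ch^{2/3}\leq t\leq 0$ used in the proof of Proposition \ref{normK}; this would recover the same bound without going through the composition of operator norms. But given that Proposition \ref{normK} is already established, the one-line product estimate above is the cleanest route.
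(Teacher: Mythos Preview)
Your proof is correct and matches the paper's own argument exactly: the paper simply states that the lemma is an immediate consequence of Proposition~\ref{normK}, which is precisely the product-of-norms computation you wrote out.
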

\begin{proof} It is an immediate consequence of Proposition \ref{normK}.
\end{proof}

Using \eqref{K0IL}, we conclude from Lemmas \ref{lemm55}-\ref{lemm58} that we have,
$$
\Vert K_0^\pm\varphi\Vert_{L^2(I_L)} =\ord (h^\frac16)\Vert \varphi\Vert_{L^2(I_L\cup I_R^\pm)}.
$$
Analogous arguments lead to the same estimate on $I_R^\pm$, and thus, we have proved,
$$
\Vert K_0^\pm\Vert_{{\mathcal L}(L^2(I_L\cup I_R^\pm))}=\ord(h^\frac16).
$$
Concerning the result for $\Vert h^2R_2^\pm (z) W^*R_1(z)W\Vert$, we first observe that the previous proof works without changes for $\Vert h^2R_2^\pm (z) f R_1(z)g\Vert$ if $f,g$ are bounded multiplication operators, and also for $\Vert h^2R_2^\pm (z) f hD_x R_1(z)g\Vert$ because the estimates on $h[u_{1,L}^\pm]'$ and $h[u_{1,R}^\pm]'$ are the same as (and at some places even better than) those on $u_{1,L}^\pm$ and $u_{1,R}^\pm$.

Then, \eqref{estM} can easily be deduced by writing,
$$
R_1(z)hD_x=\left( I+R_1(z)(I+z-V_1)\right) hD_x (1-h^2\Delta)^{-1}
$$
(and the analogous formula for $R_2^\pm (z)$), and by using \eqref{estR2} and the fact that $\Vert hD_x (1-h^2\Delta)^{-1} \Vert =\ord (1)$.
\end{proof}

\section{Function spaces}

In order to estimate in a systematic way the various integrals that are involved in the expressions of $r_0(t,\phi, h)$, $r_1(t,\phi, h)$ and $r_2(t,\phi, h)$, we introduce several function spaces that, in some way, are related to the behavior (both semiclassical and at infinity in $x$) of the global WKB solutions of the scalar problems.

We set,
$$
\begin{aligned}
& m_0(x)=m_0(x;h):= \min (h^{-1/6}, |x|^{-1/4});\\
& m_*(x)=m_*(x;h):= \min (h^{-1/6}, |x-x^*|^{-1/4}).
\end{aligned}
$$
            
We define the space ${\mathcal F}_1(I_L)$ as the space of $h$-dependent smooth functions $u=u(x;h)$ on $I_L$ for which, for any $\delta >0$ small enough and for any $k\geq 0$, there exists a constant $c=c_{k ,\delta} >0$ such that,
\begin{itemize}
\item On $(-\infty, x^*-\delta]$, $(hD_x)^k u(x;h) = {\mathcal O}(e^{-c |x|/h})$;
\item On $[x^*-\delta, x^*]$, $(hD_x)^k u(x;h) = {\mathcal O}(m_*(x)e^{-c |x-x^*|^{3/2}/h})$;
\item On $[x^*, x^*+\delta]$, $(hD_x)^k u(x;h) = {\mathcal O}(m_*(x))$;
\item On $[ x^*+\delta, -\delta]$, $(hD_x)^k u(x;h) = {\mathcal O}(1)$;
\item On $[ -\delta, 0]$, $(hD_x)^k u(x;h) = {\mathcal O}( m_0(x))$.
\end{itemize}

We also define the space ${\mathcal F}_2(I_L)$ as the space of $h$-dependent smooth functions $u=u(x;h)$ on $I_L$ for which, for any $\delta >0$ small enough and for any $k\geq 0$, there exists a constant $c=c_{k ,\delta} >0$ such that,
\begin{itemize}
\item On $(-\infty, -\delta]$, $(hD_x)^k u(x;h) = {\mathcal O}(e^{-c |x|/h})$;
\item On $[-\delta, 0]$, $(hD_x)^k u(x;h) = {\mathcal O}(m_0(x)e^{-c |x|^{3/2}/h})$.
\end{itemize}

Analogously, we define the space ${\mathcal F}_1(I_R^\pm)$ as the space of $h$-dependent smooth functions $u=u(x;h)$ on $I_R^\pm$ for which, for any $\delta >0$ and for any $k\geq 0$, there exists a constant $c=c_{k ,\delta} >0$ such that,
\begin{itemize}
\item On $[0, \delta]$, $(hD_x)^k u(x;h) = {\mathcal O}(m_0(x)e^{-c |x|^{3/2}/h})$;
\item On $[\delta, +\infty)$, $(hD_x)^k u(x;h) = {\mathcal O}(e^{-c |x|/h})$.
\end{itemize}

Finally, we define the space ${\mathcal F}_2(I_R^\pm)$ as the space of $h$-dependent smooth functions $u=u(x;h)$ on $I_R^\pm$ for which, for any $\delta >0$ and for any $k\geq 0$, there exist two constants $c=c_{k ,\delta} >0$ and $C= C_{k, \delta }>0$ such that,
\begin{itemize}
\item On $[ 0, \delta ]$, $(hD_x)^k u(x;h) = {\mathcal O}( m_0(x))$;
\item On $I_R^\pm \cap \{ \delta \leq \Re x \leq C\} $, $(hD_x)^k u(x;h) = {\mathcal O}(1)$;
\item On $I_R^\pm \cap \{  \Re x \geq C\} $, $(hD_x)^k u(x;h) = {\mathcal O}(e^{-c |x|/h})$.
\end{itemize}

For $j,k\in \{1,2\}$, we also denote by ${\mathcal F}_j(I_L)\cap {\mathcal F}_k(I_R^\pm)$ the space of $h$-dependent functions $\varphi$ defined on $I_L\cup I_R^\pm$ (not necessarily smooth at 0), such that $\varphi_L\in {\mathcal F}_j(I_L)$ and $\varphi_R\in {\mathcal F}_k(I_R^\pm)$. Of course, if such a function $\varphi$ is smooth at 0, then, for any $\ell\geq 0$, one also has $(hD_x)^\ell\varphi \in {\mathcal F}_j(I_L)\cap {\mathcal F}_k(I_R^\pm)$.

In particular, for any $z\in {\mathcal D}_h(C_0)$, we can see,
\be
\label{ujLRin}
\begin{aligned}
& u_{1,L}^-(z) \in h^{1/6}{\mathcal F}_1(I_L)\quad ; \quad u_{1,R}^-(z) \in h^{1/6}{\mathcal F}_1(I_R^\pm);\\
& u_{2,L}^-(z) \in h^{1/6}{\mathcal F}_2(I_L)\quad ; \quad u_{2,R}^\mp(z) \in h^{1/6}{\mathcal F}_2(I_R^\pm),
\end{aligned}
\ee
and also, since $\varphi_0 \sim h^{-1/6}u_{1,L}^-(\lambda_0)$ on $\R_-$, and $\varphi_0  \sim h^{-1/6}u_{1,R}^-(\lambda_0)$ on $I_R^\pm$,
\be
\label{phi2in}
\varphi_0 \in {\mathcal F}_1(I_L)\cap {\mathcal F}_1(I_R^\pm)\cap C^\infty,
\ee
where $C^\infty$ stands for $C^\infty (I_L\cup I_R^\pm)$, and just means that $\varphi_0$ is smooth at 0, too.

We have (dropping the $z$-dependence),
\begin{proposition}\sl 
\label{actK}
The following inclusions hold:
$$
\begin{aligned}
& K_{1,L}\left( {\mathcal F}_1(I_L)\right) \subset  h^{-1}{\mathcal F}_1(I_L)\quad  ; \quad K_{1,L}\left( {\mathcal F}_2(I_L)\right) \subset  h^{-2/3}{\mathcal F}_1(I_L);\\
& K_{2,L}\left( {\mathcal F}_1(I_L)\right) \subset h^{-2/3}{\mathcal F}_1(I_L) \quad ; \quad
 K_{2,L}\left( {\mathcal F}_2(I_L)\right) \subset h^{-2/3}{\mathcal F}_2(I_L ),
\end{aligned}
$$

and,
$$
\begin{aligned}
& K_{1,R}\left( {\mathcal F}_1(I_R^\pm)\right) \subset  h^{-2/3}{\mathcal F}_1(I_R^\pm)\quad  ; \quad K_{1,R}\left( {\mathcal F}_2(I_R^\pm)\right) \subset  h^{-2/3}{\mathcal F}_2(I_R^\pm);\\
& K_{2,R}^\pm\left( {\mathcal F}_1(I_R^\pm)\right) \subset h^{-2/3}{\mathcal F}_2(I_R^\pm ) \quad  ;\quad K_{2,R}^\pm\left( {\mathcal F}_2(I_R^\pm)\right) \subset  h^{-1}{\mathcal F}_2(I_R^\pm ).
\end{aligned}
$$

\end{proposition}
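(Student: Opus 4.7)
The plan is to prove each of the eight inclusions by direct inspection of the explicit integral kernels \eqref{eq1}, \eqref{eq1R+}, \eqref{eq1R-}. Two ingredients are used throughout. First, the Wronskian asymptotics $\mathcal{W}[u_{j,S}^+, u_{j,S}^-] \sim h^{-2/3}$ from \cite{FMW1} (for $S\in\{L,R\}$ and $j\in\{1,2\}$), which makes the scalar prefactor $(h^2\mathcal{W})^{-1}$ in every kernel of order $h^{-4/3}$. Second, the precise WKB asymptotics of $u_{j,S}^\pm$ from \cite{FMW1}, whose decaying pieces are encapsulated by the inclusions \eqref{ujLRin}. Combining these, a factor $h^{-4/3}\cdot h^{1/6}\cdot h^{1/6} = h^{-1}$ is absorbed before any integration against $v$ is performed, so the distinction between the $h^{-1}$ and $h^{-2/3}$ prefactors in the target spaces must come from whether the remaining integral $\int u_{j,S}^{\mp} v$ gains an extra factor $h^{1/3}$ or not.

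Concretely, for fixed $j\in\{1,2\}$ I would write $K_{j,L}[v]$ as the sum of $(h^2\mathcal{W})^{-1} u_{j,L}^+(x) \int_{-\infty}^x u_{j,L}^- v\, dt$ and $(h^2\mathcal{W})^{-1} u_{j,L}^-(x) \int_x^0 u_{j,L}^+ v\, dt$, and then evaluate each term pointwise in the five regions separated by the turning points $x^*$ and $0$. The pointwise profile of the output inherits that of the prefactor $u_{j,L}^\pm(x)$, hence is of $\mathcal{F}_j$-type in each region. The claim that the output for $j=2$ lives in $\mathcal{F}_1$ rather than only in $\mathcal{F}_2$ reduces to the elementary embedding $\mathcal{F}_2(I_L) \subset \mathcal{F}_1(I_L)$, coming from $m_0 e^{-c|x|^{3/2}/h} \lesssim m_0$ on $[-\delta, 0]$ and from the exponential decay away from $0$ dominating every $\mathcal{F}_1$-weight. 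The extra factor $h^{1/3}$ responsible for the $h^{-2/3}$ (rather than $h^{-1}$) prefactor arises exactly in the subcases where $u_{j,S}^\mp v$ carries the weight $m_0^2 e^{-c|x|^{3/2}/h}$ near the turning point: the integration gain is computed as in the last portion of the proof of Proposition \ref{normK}, by splitting the integration at the Airy scale $|t|\sim h^{2/3}$ and changing variables $t\mapsto -(ht)^{2/3}$.

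The analysis on $I_R^\pm$ runs in parallel, with only three regions (since $I_R$ contains only one turning point) and with the roles of $P_1$ and $P_2$ exchanged, since on $I_R$ the allowed region belongs to $P_2$ and the forbidden one to $P_1$; this reversal is exactly what swaps the roles of $\mathcal{F}_1$ and $\mathcal{F}_2$ in the right-hand column of the statement. For the $(hD_x)^k$ control required by the definition of the $\mathcal{F}_k$-spaces, I would proceed in two steps. For $k=1$, one differentiates under the integral sign; the two boundary contributions at $t=x$ cancel (they appear with opposite signs), leaving an expression of exactly the same structure as before but with $u_{j,S}^\pm$ replaced by $hD_x u_{j,S}^\pm$, which still lies in $h^{1/6}\mathcal{F}_j$ by the stability of $\mathcal{F}_k$ under $hD_x$. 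For $k\geq 2$, the ODE identity $(hD_x)^2 K_{j,S}[v] = -(V_j-z)K_{j,S}[v] + v$ together with an induction reduces every higher derivative to the $k\in\{0,1\}$ case, and never worsens the inclusion because $V_j - z$ is smooth and bounded on the relevant ranges.

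The main obstacle, as far as I can see, is not conceptual but combinatorial: eight inclusions, up to five regions for each on $I_L$ and three on $I_R^\pm$, and two terms per kernel yield a rather long case distinction. The only genuinely delicate subcases are those in which an integral crosses a turning point, where the competition between a turning-point weight $m_0$ (or $m_*$) and the Airy exponential $e^{-c|\cdot|^{3/2}/h}$ has to be resolved on the scale $|\cdot|\sim h^{2/3}$; these, however, follow exactly the template already used in Proposition \ref{normK}, so once the notation is set up the estimates should go through with only minor adjustments to track the extra $h$-factors.
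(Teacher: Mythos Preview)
Your overall strategy is the same as the paper's: both proceed by a region-by-region pointwise analysis of the explicit kernels \eqref{eq1}--\eqref{eq1R-}, using the Wronskian asymptotics and the WKB profiles of $u_{j,S}^\pm$ from \cite{FMW1}. Your treatment of the derivatives via the identity $(hD_x)^2 K_{j,S}[v] = -(V_j - z)K_{j,S}[v] + v$ is more explicit than the paper's (which simply asserts that ``the same estimates hold for the derivatives''), and your identification of the mechanism producing the extra $h^{1/3}$ gain near the turning point is correct.

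There is, however, a genuine conceptual error in your treatment of $K_{2,L}(\mathcal{F}_1(I_L))$. You write that ``the pointwise profile of the output inherits that of the prefactor $u_{j,L}^\pm(x)$, hence is of $\mathcal{F}_j$-type,'' and then reduce the inclusion $K_{2,L}(\mathcal{F}_1(I_L)) \subset h^{-2/3}\mathcal{F}_1(I_L)$ to the embedding $\mathcal{F}_2(I_L)\subset\mathcal{F}_1(I_L)$. This fails: only the decaying solution $u_{2,L}^-$ lies in $h^{1/6}\mathcal{F}_2(I_L)$, while the other prefactor $u_{2,L}^+(x)$ grows exponentially on $(-\infty,0)$. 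For $v\in\mathcal{F}_1(I_L)$ and $x$ in the oscillatory region $[x^*+\delta,-\delta]$ (where $v=\mathcal{O}(1)$), the term $u_{2,L}^+(x)\int_{-\infty}^x u_{2,L}^- v\,dt$ balances the exponential growth of $u_{2,L}^+(x)$ against the exponential smallness of the integral and yields a quantity that is only $\mathcal{O}(1)$ (or $\mathcal{O}(h^{-2/3})$ once the near-zero contribution is included), \emph{not} $\mathcal{O}(e^{-c|x|/h})$. Hence $K_{2,L}(\mathcal{F}_1(I_L))\not\subset h^{\alpha}\mathcal{F}_2(I_L)$ for any $\alpha$, and the embedding shortcut is unavailable. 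The paper handles this case by a direct computation (see \eqref{estK2LF1.1}--\eqref{estK2LF1.4} in Appendix~1), verifying each $\mathcal{F}_1$-weight separately; in particular, near $x^*$ one finds $K_{2,L}(v)(x)=\mathcal{O}(h^{-1/4})$ and then uses $h^{-1/4}\le h^{-2/3}m_*(x)$. You will need to do the same.
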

\begin{proof} See Appendix 1.
\end{proof}

\begin{remark}\sl As an immediate consequence of the definitions of ${\mathcal F}_j(I_L)$ and ${\mathcal F}_j(I_R^\pm)$ ($j=1,2$), we have
\begin{itemize}
\item If $v\in {\mathcal F}_1(I_L)$, then $\Vert v\Vert_{L^2(I_L)} = \ord (1)$;
\item If $v\in {\mathcal F}_2(I_L)$, then $\Vert v\Vert_{L^2(I_L)} = \ord (h^\frac16)$;
\item If $v\in {\mathcal F}_1(I_R^\pm)$, then $\Vert v\Vert_{L^2(I_R^\pm)} = \ord (h^\frac16)$;
\item If $v\in {\mathcal F}_2(I_R^\pm)$, then $\Vert v\Vert_{L^2(I_R^\pm)} = \ord (1)$,
\end{itemize}
uniformly as $h\to 0_+$.
\end{remark}

\begin{proposition}\sl
\label{actalphabeta} 

For $z\in \gamma_-$, one has,
$$
\begin{aligned}
& |\alpha_{L,L}(z)| + |\alpha_{R,L}(z)| = {\mathcal O}( h^{-7/6}) \, \mbox{ on } {\mathcal F}_1(I_L);\\
& |\beta^\pm_{L,L}(z)| + |\beta^\pm_{R,L}(z)| = {\mathcal O}( h^{-5/6}) \, \mbox{ on } {\mathcal F}_1(I_L);
\end{aligned}
$$

$$
\begin{aligned}
& |\alpha_{L,L}(z)| + |\alpha_{R,L}(z)| = {\mathcal O}( h^{-5/6}) \, \mbox{ on } {\mathcal F}_2(I_L);\\
& |\beta^\pm_{L,L}(z)| + |\beta^\pm_{R,L}(z)| = {\mathcal O}(h^{-5/6}) \, \mbox{ on } {\mathcal F}_2(I_L);
\end{aligned}
$$

$$
\begin{aligned}
& |\alpha_{L,R}(z)| + |\alpha_{R,R}(z)| = {\mathcal O}( h^{-5/6}) \, \mbox{ on } {\mathcal F}_1(I_R^\pm);\\
& |\beta^\pm_{L,R}(z)| + |\beta^\pm_{R,R}(z)| = {\mathcal O}(h^{-5/6}) \, \mbox{ on } {\mathcal F}_1(I_R^\pm);
\end{aligned}
$$

$$
\begin{aligned}
& |\alpha_{L,R}(z)| + |\alpha_{R,R}(z)| = {\mathcal O}( h^{-5/6}) \, \mbox{ on } {\mathcal F}_2(I_R^\pm);\\
& |\beta^\pm_{L,R}(z)| + |\beta^\pm_{R,R}(z)| = {\mathcal O}( h^{-7/6}) \, \mbox{ on } {\mathcal F}_2(I_R^\pm).
\end{aligned}
$$

\end{proposition}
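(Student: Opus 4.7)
The plan is to exploit the explicit formulas for the scalars $\alpha_{S_1,S_2}$ and $\beta^\pm_{S_1,S_2}$ gathered in Remark \ref{rem52}: each of them is the product of a scalar prefactor (that I will argue is $\ord(h^{-4/3})$ on $\gamma_-$) with an integral of the form $\int u_{j,S_2}^\bullet(z,t)\,\varphi_{S_2}(t)\,dt$ pairing the corresponding decaying WKB solution against the restriction of $\varphi$. The proof therefore splits into (a) controlling the prefactors on the contour, and (b) estimating the eight resulting integrals against a test function in one of ${\mathcal F}_1(I_L)$, ${\mathcal F}_2(I_L)$, ${\mathcal F}_1(I_R^\pm)$, ${\mathcal F}_2(I_R^\pm)$.

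For step (a) the only issue comes from the $\tan ({\mathcal A}(z)/h)$ and $1/\cos({\mathcal A}(z)/h)$ factors appearing in the $\alpha$-prefactors. I would argue that on $\gamma_-$ these remain $\ord(1)$: the Bohr--Sommerfeld identity for $\lambda_0$ yields ${\mathcal A}(\lambda_0)/h = (k+\frac12)\pi + \ord(h)$, and since $\gamma_-$ stays at distance $\sim h$ from $\rho_0 = \lambda_0 + \ord(h^{4/3})$, with $z-\lambda_0$ essentially lying in the lower half-plane at scale $h$, ${\mathcal A}(z)/h$ stays at distance of order $1$ from the pole $(k+\frac12)\pi$; by the choice of $g$ (whose support has width less than $\pi h /{\mathcal A}'(0)$ around $\lambda_0$) it also stays bounded away from every other pole of $\tan$. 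Consequently each prefactor in Remark \ref{rem52} is indeed $\ord(h^{-4/3})$ uniformly in $z \in \gamma_-$.

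For step (b) I would split the eight claims into two categories. In the four \emph{matching} cases --- $\alpha$'s tested against a function in ${\mathcal F}_1$, and $\beta$'s tested against a function in ${\mathcal F}_2$ --- the WKB kernel and the test function lie in scaled copies of the same space, so a direct Cauchy--Schwarz using \eqref{ujLRin} and the $L^2$ bounds stated in the remark preceding this proposition gives integrals of size $\ord(h^{1/6})$ or $\ord(h^{1/2})$ depending on the case; combined with the $\ord(h^{-4/3})$ prefactor this produces the announced $\ord(h^{-7/6})$ and $\ord(h^{-5/6})$ bounds.

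The remaining four \emph{crossed} cases --- $\alpha$'s on ${\mathcal F}_2$ and $\beta$'s on ${\mathcal F}_1$ --- are where the main work lies: kernel and test function have different decay scales at the relevant turning point, and a crude Cauchy--Schwarz would cost an extra factor $h^{-1/6}$. I would instead bound the integrands pointwise with the weights $m_0, m_*$, the crucial analytic ingredient being
\begin{equation}
\label{keyint}
\int_0^\delta m_0(x)^2 e^{-cx^{3/2}/h}\,dx = \ord(h^{1/3}),
\end{equation}
together with its obvious analogues on $I_L$ near $0$ and near $x^*$. To prove \eqref{keyint} I would split at $x = h^{2/3}$: on $[0, h^{2/3}]$ the bound $m_0 \leq h^{-1/6}$ gives a contribution $\ord(h^{-1/3}\cdot h^{2/3}) = \ord(h^{1/3})$, while on $[h^{2/3}, \delta]$ the identity $m_0 = x^{-1/4}$ combined with the substitution $u = x^{3/2}/h$ reduces matters to a convergent exponential integral of size $\ord(h^{1/3})$. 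Plugged into the crossed cases, \eqref{keyint} yields integrals of size $\ord(h^{1/2})$ and hence the required $\ord(h^{-5/6})$ bound after multiplication by the prefactor. The crossed estimate \eqref{keyint} is the one place where care is needed: once it is in hand, the rest of the argument is bookkeeping driven by Remark \ref{rem52} and \eqref{ujLRin}.
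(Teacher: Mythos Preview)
Your proposal is correct and follows essentially the same route as the paper's own proof: use the formulas of Remark~\ref{rem52} to reduce each coefficient to an $\ord(h^{-4/3})$ prefactor times an integral $\int u_{j,\bullet}^\bullet\,\varphi$, then handle the matching cases by Cauchy--Schwarz with \eqref{ujLRin} and the crossed cases by the pointwise bound leading to your key integral \eqref{keyint}. The paper's version of \eqref{keyint} simply bounds $m_0(t)^2\le t^{-1/2}$ and substitutes $t\mapsto h^{2/3}t$ rather than splitting at $h^{2/3}$, and it does not spell out your step~(a) on the boundedness of $\tan({\mathcal A}(z)/h)$ and $1/\cos({\mathcal A}(z)/h)$ along $\gamma_-$; your more explicit treatment of that point is a welcome clarification but not a different idea.
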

\begin{proof} We use Remark \ref{rem52}.
Since $u_{1,L}^-\in h^{\frac16}{\mathcal F}_1(I_L)$, for $\varphi\in {\mathcal F}_1(I_L)$ we have,
$$
|\alpha_{L,L}(z)[\varphi]| +|\alpha_{R,R}(z)[v]| = \ord (h^{-\frac43})\Vert u_{1,L}^-\Vert_{L^2(I_L)}\Vert \varphi\Vert_{L^2(I_L)}=\ord(h^{-\frac43 +\frac16}).
$$
Since $u_{2,L}^-$ is exponentially concentrated at $x=0$, we also have,
$$
|\beta^\pm_{L,R}(z)[\varphi]| + |\beta^\pm_{R,R}(z)[\varphi]|= \ord(h^{-\frac43})\int_0^\delta \frac{h^{\frac16}e^{-ct^{3/2}/h}}{\sqrt t}dt + \ord(e^{-c/h}),
$$
with $\delta >0$ arbitrarily small, and $c=c(\delta)>0$. Hence, after the change of variable $t\mapsto h^\frac23 t$, we find,
$$
|\beta^\pm_{L,R}(z)[\varphi[| + |\beta^\pm_{R,R}(z)[\varphi]|= \ord(h^{-\frac56}).
$$
The other estimates follow along the same lines.
\end{proof}

\begin{proposition}\sl 
\label{H0}
$$
R_2^\pm (z) \left(  {\mathcal F}_1(I_L)\cap {\mathcal F}_1(I_R^\pm)\right)\,  \subset \, h^{-2/3}\left( {\mathcal F}_1(I_L)\cap{\mathcal F}_2(I_R^\pm)\right).
$$
\end{proposition}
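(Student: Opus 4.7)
The plan is to apply Proposition \ref{propResolv} (iii)-(iv) and split $R_2^\pm(z)\varphi$ into a ``kernel'' part and a ``boundary'' correction on each half-line, then estimate each piece separately using Propositions \ref{actK} and \ref{actalphabeta} together with the membership \eqref{ujLRin} of the fundamental solutions in the function spaces.

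First I would treat the restriction to $I_L$. By Proposition \ref{propResolv} (iii),
$$
R_2^\pm(z)\varphi(x)=K_{2,L}(z)[\varphi_L](x)+\beta_L^\pm(z)[\varphi]\,u_{2,L}^-(z,x).
$$
Since $\varphi_L\in{\mathcal F}_1(I_L)$, Proposition \ref{actK} gives $K_{2,L}[\varphi_L]\in h^{-2/3}{\mathcal F}_1(I_L)$. For the correction, one splits $\beta_L^\pm[\varphi]=\beta_{L,L}^\pm[\varphi_L]+\beta_{L,R}^\pm[\varphi_R]$, and Proposition \ref{actalphabeta} bounds each contribution by $\ord(h^{-5/6})$ using that $\varphi_L\in{\mathcal F}_1(I_L)$ and $\varphi_R\in{\mathcal F}_1(I_R^\pm)$ respectively. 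Since $u_{2,L}^-\in h^{1/6}{\mathcal F}_2(I_L)\subset h^{1/6}{\mathcal F}_1(I_L)$ by \eqref{ujLRin} (the inclusion ${\mathcal F}_2(I_L)\subset{\mathcal F}_1(I_L)$ being immediate on comparing the two definitions, since the ${\mathcal F}_2$-estimates are pointwise sharper than those of ${\mathcal F}_1$ on each subinterval of $I_L$), the correction lies in $h^{-5/6}\cdot h^{1/6}{\mathcal F}_1(I_L)=h^{-2/3}{\mathcal F}_1(I_L)$. Adding, one gets $R_2^\pm(z)\varphi|_{I_L}\in h^{-2/3}{\mathcal F}_1(I_L)$.

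Next, the restriction to $I_R^\pm$. By Proposition \ref{propResolv} (iv),
$$
R_2^\pm(z)\varphi(x)=K_{2,R}^\pm(z)[\varphi_R](x)+\beta_R^\pm(z)[\varphi]\,u_{2,R}^\mp(z,x).
$$
Since $\varphi_R\in{\mathcal F}_1(I_R^\pm)$, Proposition \ref{actK} gives $K_{2,R}^\pm[\varphi_R]\in h^{-2/3}{\mathcal F}_2(I_R^\pm)$. Using Proposition \ref{actalphabeta} and the same splitting $\beta_R^\pm[\varphi]=\beta_{R,L}^\pm[\varphi_L]+\beta_{R,R}^\pm[\varphi_R]$, one again obtains $\beta_R^\pm[\varphi]=\ord(h^{-5/6})$. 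Combined with $u_{2,R}^\mp\in h^{1/6}{\mathcal F}_2(I_R^\pm)$ from \eqref{ujLRin}, the correction belongs to $h^{-2/3}{\mathcal F}_2(I_R^\pm)$, and so does the sum.

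Assembling both sides yields the claimed inclusion $R_2^\pm(z)\varphi\in h^{-2/3}\bigl({\mathcal F}_1(I_L)\cap{\mathcal F}_2(I_R^\pm)\bigr)$. The only non-mechanical point in the argument is the inclusion ${\mathcal F}_2(I_L)\subset{\mathcal F}_1(I_L)$, which is needed to absorb the $u_{2,L}^-$ contribution into ${\mathcal F}_1(I_L)$ on the left half-line; this is a direct verification from the definitions, since the Gaussian-type decay at $0$ in ${\mathcal F}_2(I_L)$ is stronger than the corresponding $\ord(m_0(x))$ bound in ${\mathcal F}_1(I_L)$, and the exponential decay on $(-\infty,-\delta]$ in ${\mathcal F}_2(I_L)$ dominates the $\ord(m_*(x))$ and $\ord(1)$ estimates on $[x^*-\delta,-\delta]$ in ${\mathcal F}_1(I_L)$ once $\delta$ is chosen small enough.
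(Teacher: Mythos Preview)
Your argument is correct and follows essentially the same route as the paper: split $R_2^\pm(z)\varphi$ via Proposition~\ref{propResolv}(iii)--(iv), handle the kernel terms by Proposition~\ref{actK}, and bound the $\beta$-coefficients by $\ord(h^{-5/6})$ (the paper cites Remark~\ref{rem52} directly rather than Proposition~\ref{actalphabeta}, but the content is identical). Your explicit justification of the inclusion ${\mathcal F}_2(I_L)\subset{\mathcal F}_1(I_L)$ is a point the paper uses tacitly, so in that respect your write-up is slightly more complete.
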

\begin{proof} Let $\varphi\in {\mathcal F}_1(I_L)\cap {\mathcal F}_1(I_R^\pm)$. By Proposition \ref{propResolv}{(iii)}, om $I_L$ we have,
$$
R_2^\pm (z)\varphi \in K_{2,L}(z)\left(  {\mathcal F}_1(I_L)\right) + \beta_L^\pm (z)[\varphi]h^\frac16  {\mathcal F}_2(I_L),
$$
and therefore, using Propositions \ref{actK} and Remark \ref{rem52},
$$
R_2^\pm (z)\varphi \in h^{-\frac23} {\mathcal F}_1(I_L) + h^{-\frac56+\frac16 } {\mathcal F}_2(I_L) \subset h^{-\frac23} {\mathcal F}_1(I_L).
$$
In the same way, by Proposition \ref{propResolv}{(iv)}, om $I_R^\pm$ we have,
$$
R_2^\pm (z)\varphi \in K_{2,R}^\pm(z)\left(  {\mathcal F}_1(I_R^\pm)\right) + \beta_R^\pm (z)[\varphi]h^\frac16  {\mathcal F}_2(I_R^\pm),
$$
and thus,
$$
R_2^\pm (z)\varphi \in h^{-\frac23}{\mathcal F}_2(I_R^\pm) +h^{-\frac56+\frac16 }   {\mathcal F}_2(I_R^\pm) = h^{-\frac23}{\mathcal F}_2(I_R^\pm).
$$
\end{proof}

As an immediate consequence of this proposition, if we set,
\be
{\mathcal H}_0^\pm:= {\mathcal F}_1(I_L) \cap {\mathcal F}_2(I_R^\pm),
\ee
we have,
\begin{corollary}\sl 
\label{H1}
$$
R_2^\pm (z) W^* \varphi_0 \in h^{-2/3}{\mathcal H}_0^\pm\cap C^\infty.
$$
\end{corollary}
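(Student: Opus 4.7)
The plan is to deduce Corollary \ref{H1} directly from Proposition \ref{H0} once we verify that $W^*\varphi_0$ belongs to $\mathcal{F}_1(I_L)\cap \mathcal{F}_1(I_R^\pm)$; the $C^\infty$ part will come from a standard elliptic regularity argument on the ODE satisfied by the resolvent image.

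First I would show that each space $\mathcal{F}_1(I_L)$ and $\mathcal{F}_1(I_R^\pm)$ is stable under the operator $W^*$. Since $W=a_0(x)+ia_1(x)hD_x$ with $a_0,a_1$ analytic and bounded on $\Gamma$, its formal adjoint is of the form $W^*=b_0(x)+ib_1(x)hD_x$ with $b_0,b_1$ analytic and bounded as well (namely $b_0=a_0+ha_1'$, $b_1=-a_1$). By definition of $\mathcal{F}_1(I_L)$ (and of $\mathcal{F}_1(I_R^\pm)$), the estimates involve all powers $(hD_x)^k u$, so the subspace is closed under application of $hD_x$. Moreover, multiplication by a bounded analytic function $a(x)$ preserves the space, since, by the Leibniz formula,
\[
(hD_x)^k(au)=\sum_{j=0}^k\binom{k}{j}(-ih)^j a^{(j)}(x)\,(hD_x)^{k-j}u,
\]
and each $a^{(j)}$ remains bounded on the relevant region thanks to Cauchy estimates for analytic functions. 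Combining the two stability properties yields $W^*\bigl(\mathcal{F}_1(I_L)\bigr)\subset \mathcal{F}_1(I_L)$ and $W^*\bigl(\mathcal{F}_1(I_R^\pm)\bigr)\subset\mathcal{F}_1(I_R^\pm)$.

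Next, since by \eqref{phi2in} we have $\varphi_0\in \mathcal{F}_1(I_L)\cap\mathcal{F}_1(I_R^\pm)\cap C^\infty$, the previous stability shows that $W^*\varphi_0\in \mathcal{F}_1(I_L)\cap\mathcal{F}_1(I_R^\pm)$, and it is smooth on $\mathbb{R}$ because $\varphi_0$ is smooth and $W^*$ has smooth coefficients. Applying Proposition \ref{H0} to the function $W^*\varphi_0$ then gives
\[
R_2^\pm(z)W^*\varphi_0\in h^{-2/3}\bigl(\mathcal{F}_1(I_L)\cap\mathcal{F}_2(I_R^\pm)\bigr)=h^{-2/3}\mathcal{H}_0^\pm,
\]
which is the bulk of the statement.

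It remains to check the smoothness across $x=0$, since the definition of $\mathcal{F}_1(I_L)\cap\mathcal{F}_2(I_R^\pm)$ does not require it. For this, I would use that $u:=R_2^\pm(z)W^*\varphi_0$ satisfies the ODE $(P_2-z)u=W^*\varphi_0$ on $\mathbb{R}$, whose right-hand side is smooth (indeed analytic) and whose coefficients are analytic. By standard ODE regularity (or equivalently by the ellipticity of $P_2-z$ on $\mathbb{R}$), any $L^2_{\mathrm{loc}}$ solution is $C^\infty$, so $u\in C^\infty$, which completes the proof. The argument is essentially mechanical; the only point requiring care is verifying that the stability of the $\mathcal{F}_j$-spaces under $hD_x$ and under analytic bounded multipliers is built into the definition, which is why I would not expect any genuine obstacle in carrying it out.
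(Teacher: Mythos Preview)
Your proof is correct and follows exactly the route the paper intends: the paper states this corollary as an immediate consequence of Proposition~\ref{H0} together with \eqref{phi2in}, and you simply spell out the routine verifications (stability of the $\mathcal{F}_1$-spaces under $W^*$, and smoothness at $0$ via the ODE). The only slip is a sign in your formula for $b_0$ (one gets $b_0=a_0-ha_1'$), which is irrelevant to the argument.
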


Finally, setting
$$
M_\pm (z):= h^2 R_2^\pm (z)W^* R_1(z)W,
$$
we have,

\begin{proposition}\sl 
\label{prop6.3}
$$
M_\pm (z)\left( {\mathcal H}_0^\pm \cap C^\infty \right)\,\, \subset \,\, h^{1/3}{\mathcal H}_0^\pm \cap C^\infty.
$$
In particular, for any $\ell \geq 1$,
$$
M_\pm (z)^\ell R_2^\pm (z) W^* \varphi_0 \in h^{(\ell -2)/3}{\mathcal H}_0^\pm.
$$
\end{proposition}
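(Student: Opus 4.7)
The plan is to expand $M_\pm(z)\varphi$ by inserting the decompositions of Proposition \ref{propResolv} for $R_1(z)$ and $R_2^\pm(z)$, and then to track every resulting piece through Propositions \ref{actK} and \ref{actalphabeta}. The key observation will be that, when $R_2^\pm(z)$ is applied last, one must \emph{not} only use a crude bound on $W^*R_1(z)W\varphi$, but must keep alive the explicit $R_1$-decomposition so that the $\alpha$-contribution is recognized as being of the form $\alpha \cdot u_{1,L/R}^-$ with $u_{1,L/R}^-\in h^{1/6}\mathcal F_1$ (and not merely in $\mathcal F_2$ on $I_R^\pm$).

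Since $W=a_0(x)+ia_1(x)hD_x$ has bounded analytic coefficients and since the $\mathcal F_j$-spaces are defined through pointwise estimates on $(hD_x)^k$ with the same weights, both $W$ and $W^*$ preserve each $\mathcal F_j(I_L)$, $\mathcal F_j(I_R^\pm)$ and smoothness at $0$; hence $\psi:=W\varphi\in \mathcal H_0^\pm\cap C^\infty$. Proposition \ref{propResolv}(i)--(ii) then gives
$$
R_1(z)\psi = K_{1,L}(z)[\psi_L]+\alpha_L(z)[\psi]\,u_{1,L}^-(z)\quad\text{on }I_L,
$$
$$
R_1(z)\psi = K_{1,R}(z)[\psi_R]+\alpha_R(z)[\psi]\,u_{1,R}^-(z)\quad\text{on }I_R^\pm,
$$
and combining Proposition \ref{actK} (i.e.\ $K_{1,L}:\mathcal F_1(I_L)\to h^{-1}\mathcal F_1(I_L)$ and $K_{1,R}:\mathcal F_2(I_R^\pm)\to h^{-2/3}\mathcal F_2(I_R^\pm)$) with the bounds $|\alpha_L|,|\alpha_R|=\ord(h^{-7/6})$ from Proposition \ref{actalphabeta} and with $u_{1,L}^-,u_{1,R}^-\in h^{1/6}\mathcal F_1$ shows $R_1(z)W\varphi\in h^{-1}\mathcal H_0^\pm\cap C^\infty$; applying $W^*$ then preserves this inclusion and the explicit decomposition.

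Now I would apply $R_2^\pm(z)$ via Proposition \ref{propResolv}(iii)--(iv) to this decomposed expression and examine each resulting term. The pure $K$-piece, $K_{2,L}W^*K_{1,L}[\psi_L]$ on $I_L$, lies in $K_{2,L}(h^{-1}\mathcal F_1(I_L))\subset h^{-5/3}\mathcal F_1(I_L)$ by Proposition \ref{actK}, and similarly $K_{2,R}^\pm W^*K_{1,R}[\psi_R]\in h^{-5/3}\mathcal F_2(I_R^\pm)$ on the right; the mixed $K$-$\alpha$ terms $\alpha_{L/R}[\psi]K_{2,L/R}^\pm[W^*u_{1,L/R}^-]$ give the same size by the same two propositions. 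The delicate part is to estimate $\beta_L^\pm[W^*R_1(z)W\varphi]$ and $\beta_R^\pm[W^*R_1(z)W\varphi]$: I would substitute the $R_1$-decomposition inside, splitting each $\beta_{L/R,R}^\pm$ into a $K_{1,R}$-piece (contributing $\ord(h^{-2/3}\cdot h^{-7/6})=\ord(h^{-11/6})$ via the $\mathcal F_2$-estimate) and an $\alpha_R\cdot W^*u_{1,R}^-$-piece. In this second piece $W^*u_{1,R}^-$ actually lies in $h^{1/6}\mathcal F_1(I_R^\pm)$, so the sharper $\mathcal F_1$-estimate $|\beta_{L/R,R}^\pm|=\ord(h^{-5/6})$ of Proposition \ref{actalphabeta} applies and yields $\ord(h^{-7/6}\cdot h^{1/6}\cdot h^{-5/6})=\ord(h^{-11/6})$. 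The same bookkeeping on $I_L$ also produces $\ord(h^{-11/6})$, so in total $|\beta_L^\pm|,|\beta_R^\pm|=\ord(h^{-11/6})$, a gain of $h^{1/3}$ over the naive bound $\ord(h^{-13/6})$ that the inclusion $W^*R_1W\varphi\in h^{-1}\mathcal H_0^\pm$ alone would provide.

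Since $u_{2,L}^-,u_{2,R}^\mp\in h^{1/6}\mathcal F_2$, the $\beta$-terms then contribute $h^{-5/3}\mathcal F_2(I_L)$ on $I_L$ and $h^{-5/3}\mathcal F_2(I_R^\pm)$ on $I_R^\pm$, matching the $K$-terms. Multiplying by the outer $h^2$ in $M_\pm(z)$ gives $M_\pm(z)\varphi\in h^{1/3}\mathcal F_1(I_L)\cap h^{1/3}\mathcal F_2(I_R^\pm)=h^{1/3}\mathcal H_0^\pm$; smoothness at $0$ is preserved because the functions $R_{1,2}^\pm$ produce are $C^1$-matched at $0$ and solve a smooth equation away from the turning point. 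The iterated statement then follows by induction from Corollary \ref{H1}, which gives $R_2^\pm(z)W^*\varphi_0\in h^{-2/3}\mathcal H_0^\pm\cap C^\infty$: each application of $M_\pm(z)$ multiplies by $h^{1/3}$, so after $\ell$ steps one lands in $h^{(\ell-2)/3}\mathcal H_0^\pm$. The main obstacle throughout is precisely this sharp bookkeeping of the $\beta$-coefficients, which requires \emph{not} forgetting the $R_1$-decomposition before invoking the $\mathcal F_1$-versus-$\mathcal F_2$ gain on $I_R^\pm$.
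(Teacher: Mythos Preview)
Your proof is correct and follows essentially the same route as the paper. The paper packages the argument more compactly at the level of function-space inclusions: it records \eqref{actR1}, i.e.\ $R_1(z)(\mathcal H_0^\pm)\subset h^{-1}\mathcal F_1(I_L)\cap\bigl[h^{-2/3}\mathcal F_2(I_R^\pm)+h^{-1}\mathcal F_1(I_R^\pm)\bigr]$, distributes the intersection over the sum, and then applies $R_2^\pm$ separately to the $\mathcal F_1(I_L)\cap\mathcal F_2(I_R^\pm)$ summand (via a one-line variant of the $R_2^\pm$ computation) and to the $\mathcal F_1(I_L)\cap\mathcal F_1(I_R^\pm)$ summand (via Proposition~\ref{H0}). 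Your term-by-term expansion is exactly the computation underlying these two inclusions: the $K_{1,R}$-piece is the $\mathcal F_2(I_R^\pm)$ summand and the $\alpha_R u_{1,R}^-$-piece is the $\mathcal F_1(I_R^\pm)$ summand, and your ``sharp bookkeeping of the $\beta$-coefficients'' is precisely what the paper's use of Proposition~\ref{H0} encodes.
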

\begin{proof} By the same procedure as in the proof of Proposition \ref{H0}, we see,
\be
\label{actR1}
R_1(z)\left( {\mathcal H}_0^\pm\right) \subset h^{-1}{\mathcal F}_1(I_L)  \cap \left[ h^{-\frac23}{\mathcal F}_2(I_R^\pm) + h^{-1}{\mathcal F}_1(I_R^\pm)\right],
\ee
and also,
$$
R_2^\pm(z) \left( {\mathcal F}_1(I_L) \cap h^{\frac13}{\mathcal F}_2(I_R^\pm)\right) \subset h^{-\frac23}{\mathcal H}_0^\pm.
$$
Since, with our definitions, we have,
$$
\begin{aligned}
h^{-1}{\mathcal F}_1(I_L)  \cap & \left[ h^{-\frac23}{\mathcal F}_2(I_R^\pm) + h^{-1}{\mathcal F}_1(I_R^\pm)\right]\\
& = \left[ h^{-1}{\mathcal F}_1(I_L)  \cap h^{-\frac23}{\mathcal F}_2(I_R^\pm)\right] + \left[ h^{-1}{\mathcal F}_1(I_L)  \cap h^{-1}{\mathcal F}_1(I_R^\pm)\right],
\end{aligned}
$$
we deduce,
$$
R_2^\pm(z)W^*R_1(z) \left( {\mathcal H}_0^\pm\right)\subset h^{-\frac53}{\mathcal H}_0^\pm+h^{-1}R_2^\pm (z)\left({\mathcal F}_1(I_L) \cap {\mathcal F}_1(I_R^\pm) \right),
$$
that is, by Proposition \ref{H0},
$$
R_2^\pm(z)W^*R_1(z) \left( {\mathcal H}_0^\pm\right)\subset h^{-\frac53}{\mathcal H}_0^\pm.
$$
Since in addition $W({\mathcal H}_0^\pm \cap C^\infty)\subset {\mathcal H}_0^\pm\cap C^\infty$, and $R_1(z)$, $R_2^\pm (z)$ preserve the regularity at 0, the result follows.
\end{proof}

\section{Estimates on $r_2(t,\phi, h)$}
\label{section7}

We first show,
\begin{lemma}\sl
\label{lemma8.1}
For any $u\in {\mathcal H}_0^\pm$, one has,
\be
\la u, W^*\varphi_0\ra_{L^2(I_L)} ={\mathcal O}(1);  
\ee
\be
\la u, W^*\varphi_0\ra_{L^2( I_R^\pm)} ={\mathcal O}(h^{1/3}).  
\ee
\end{lemma}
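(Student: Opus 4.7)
The plan is to use the membership classes directly, with Cauchy-Schwarz on $I_L$ but a more careful integration on $I_R^\pm$.

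First I will verify that $W^{*}\varphi_0$ is itself a well-behaved element of the function scales. Since $\varphi_0\in{\mathcal F}_1(I_L)\cap{\mathcal F}_1(I_R^\pm)\cap C^\infty$ by \eqref{phi2in}, and since both spaces are stable under $hD_x$ and under multiplication by bounded analytic functions (this is immediate from the definitions, which are stated directly in terms of bounds on $(hD_x)^k u$), the structure of $W^{*}$ as a first-order differential operator with bounded analytic coefficients gives $W^{*}\varphi_0\in{\mathcal F}_1(I_L)\cap{\mathcal F}_1(I_R^\pm)$.

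For the first estimate, I would just apply Cauchy-Schwarz on $I_L$. By the Remark following Proposition \ref{actK}, both $u\in{\mathcal F}_1(I_L)$ and $W^{*}\varphi_0\in{\mathcal F}_1(I_L)$ satisfy $\Vert\cdot\Vert_{L^2(I_L)}=\ord(1)$, so $|\la u,W^{*}\varphi_0\ra_{L^2(I_L)}|=\ord(1)$.

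The second estimate is the real point. A naive Cauchy-Schwarz on $I_R^\pm$ would only yield $\ord(h^{1/6})$, since $u\in{\mathcal F}_2(I_R^\pm)$ gives $\Vert u\Vert_{L^2(I_R^\pm)}=\ord(1)$ while $W^{*}\varphi_0\in{\mathcal F}_1(I_R^\pm)$ only gives $\Vert W^{*}\varphi_0\Vert_{L^2(I_R^\pm)}=\ord(h^{1/6})$. The extra factor $h^{1/6}$ must come from exploiting the fact that, by the definition of ${\mathcal F}_1(I_R^\pm)$ (as opposed to ${\mathcal F}_2(I_R^\pm)$), the function $W^{*}\varphi_0$ carries an extra Gaussian-type decay $e^{-c|x|^{3/2}/h}$ on $[0,\delta]$. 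Concretely, I would bound pointwise $|u(x)W^{*}\varphi_0(x)|\leq C m_0(x)^2 e^{-cx^{3/2}/h}$ on $[0,\delta]$, split the integral at $x=h^{2/3}$ (using $m_0(x)=h^{-1/6}$ on the left part and $m_0(x)=|x|^{-1/4}$ on the right part), and perform the change of variable $x^{3/2}=ht$ on the latter. Both resulting pieces are $\ord(h^{1/3})$, while the contributions from $\{\Re x\geq\delta\}$ are exponentially small because both $u$ and $W^{*}\varphi_0$ decay exponentially there.

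The main obstacle is really just recognizing that the ${\mathcal F}_1$-membership of $W^{*}\varphi_0$ on $I_R^\pm$ provides an extra exponential weight that Cauchy-Schwarz discards; once this is exploited, the bound $\ord(h^{1/3})$ falls out of the standard Airy-type scaling $x\sim h^{2/3}$ already used throughout Section 6.
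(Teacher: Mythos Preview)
Your proof is correct and follows essentially the same route as the paper. The paper's argument is terser: for the second estimate it simply bounds $m_0(x)^2\leq |x|^{-1/2}$ uniformly on $(0,\delta]$ and writes the whole thing as $\ord(1)\int_0^\delta t^{-1/2}e^{-ct^{3/2}/h}\,dt=\ord(h^{1/3})$, whereas you split at $x=h^{2/3}$ before changing variables; both computations are the same Airy-scale calculation and yield the same bound.
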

\begin{proof} Since $W^*\varphi_0\in {\mathcal F}_1(I_L) \cap {\mathcal F}_1(I_R^\pm) $, the first estimate is immediate, while for the second one, thanks to the exponantial localization near 0 of $\varphi_0\left|_{I_R^\pm}\right.$, we can write,
$$
\la u, W^*\varphi_0\ra_{L^2(I_R^\pm)}=\ord(1)\int_0^\delta \frac{e^{-ct^{3/2}/h}}{\sqrt t} dt+\ord (e^{-c/h}),
$$
(with $\delta >0$ sufficiently small and $c=c(\delta)>0$), and the result follows.
\end{proof}

Then, we have,
\begin{proposition}\sl 
\label{prop8.2}
One has,
$$
r_2(t,\phi, h)={\mathcal O}(h \la ht\ra^{-\infty}).
$$
\end{proposition}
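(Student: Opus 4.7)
The plan is to estimate each term $r_2^{(\ell)}(t,\phi,h):=\frac{h^2}{2i\pi}\int_{\gamma_-}\frac{e^{-itz}g(\Re z)}{(\lambda_0-z)^2}T_\ell(z)\,dz$ separately for $\ell\geq 2$ and then sum the resulting geometric series in $\ell$. By Proposition \ref{prop6.3}, both vectors $M_{\pm\theta}(z)^\ell(P_2^{\pm\theta}-z)^{-1}W^*_{\pm\theta}\varphi_0^{\pm\theta}$ entering $T_\ell(z)$ belong to $h^{(\ell-2)/3}{\mathcal H}_0^\pm$; pairing them against $W^*_{\mp\theta}\varphi_0^{\mp\theta}$, which lies in ${\mathcal F}_1(I_L)\cap{\mathcal F}_1(I_R^\mp)$, and invoking Lemma \ref{lemma8.1} gives $|T_\ell(z)|=\ord(h^{(\ell-2)/3})$ uniformly for $z\in\gamma_-$. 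Since $\supp g\subset[\lambda_0-\delta_1 h,\lambda_0+\delta_1 h]$ and $\gamma_-$ stays at distance $\sim h$ from $\lambda_0$ on this interval, a crude modulus bound gives
$$
|r_2^{(\ell)}(t,\phi,h)|\leq C\,h^2\cdot h\cdot h^{-2}\cdot h^{(\ell-2)/3}=C\,h^{1+(\ell-2)/3},
$$
whose geometric-series sum over $\ell\geq 2$ already produces $r_2(t,\phi,h)=\ord(h)$. This handles the regime $ht\leq 1$.

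For $ht\geq 1$, the $\la ht\ra^{-N}$ decay comes from integration by parts. Parametrize $\gamma_-$ as $z(s)=s+i\eta(s)$, $s\in\R$, with $\eta\in C^\infty_c(\R;(-\infty,0])$, $\supp\eta\subset\supp g$, $\eta(s)\sim-h$ on $[\lambda_0-\delta_0 h,\lambda_0+\delta_0 h]$ and $|\eta^{(k)}|\leq C_k h^{1-k}$ for all $k\geq 0$. With $\Phi_\ell(s):=g(s)z'(s)T_\ell(z(s))/(\lambda_0-z(s))^2$, we have
$$
r_2^{(\ell)}(t,\phi,h)=\frac{h^2}{2i\pi}\int_{\R}e^{-itz(s)}\Phi_\ell(s)\,ds.
$$
Writing $e^{-itz(s)}=-(itz'(s))^{-1}\partial_s e^{-itz(s)}$ and integrating by parts $N$ times (the boundary terms vanish by compactness of $\supp g$) leads to
$$
r_2^{(\ell)}(t,\phi,h)=\frac{h^2}{2i\pi(-it)^N}\int_{\R}e^{-itz(s)}L^N\Phi_\ell(s)\,ds,\qquad L:=-(z'(s))^{-1}\partial_s.
$$
Each application of $L$ costs at most a factor $h^{-1}$: the $k$th derivatives of $g$, $z'^{\pm 1}$ and $(\lambda_0-z(s))^{-2}$ are $\ord(h^{-k})$, $\ord(h^{-k})$ and $\ord(h^{-2-k})$ respectively, using $|\lambda_0-z(s)|\sim h$ for the last, whereas Cauchy's inequality on a complex disc of radius $\sim h$ centered at $z(s)$ gives $\partial_s^k(T_\ell\circ z)(s)=\ord(h^{-k+(\ell-2)/3})$, since $T_\ell$ is holomorphic and remains $\ord(h^{(\ell-2)/3})$ on an $h$-neighborhood of $\gamma_-$. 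Consequently $|L^N\Phi_\ell(s)|\leq C_N h^{-2-N+(\ell-2)/3}\mathbf{1}_{\supp g}(s)$ and thus
$$
|r_2^{(\ell)}(t,\phi,h)|\leq C_N\,h^{1+(\ell-2)/3}(ht)^{-N}.
$$
Summing the convergent series $\sum_{\ell\geq 2}h^{(\ell-2)/3}=\ord(1)$ and combining with the crude bound for $ht\leq 1$ yields $r_2(t,\phi,h)=\ord(h\la ht\ra^{-N})$ for every $N$, which is the claim.

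The main obstacle is justifying the Cauchy estimate $\partial_s^k(T_\ell\circ z)(s)=\ord(h^{-k+(\ell-2)/3})$: one must check that $T_\ell(z)$ is holomorphic on a complex strip of width $\sim h$ around $\gamma_-$, with the same uniform $\ord(h^{(\ell-2)/3})$ bound. This amounts to extending the resolvent norms of Corollary \ref{prop5.3} and the mapping properties of Proposition \ref{prop6.3} from $\gamma_-$ to such a strip, which is consistent with the construction of $\gamma_-$ at distance $\sim h$ from the nearest singularities $\lambda_0$ of $R_1$ and $\rho_0$ of $R_\theta$, but is not entirely routine because $T_\ell$ aggregates iterated compositions of $R_1(z)$ and $R_2^{\pm\theta}(z)$.
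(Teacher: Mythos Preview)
Your argument is correct and follows the same broad outline as the paper (bound $T_\ell$, sum in $\ell$, integrate by parts for the time decay), but differs from the paper's proof in two technical respects.

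First, the paper does not rely on the refined bound $T_\ell(z)=\ord(h^{(\ell-2)/3})$ for \emph{all} $\ell$. Instead it uses the crude operator-norm estimate $\|M_{\pm\theta}(z)\|=\ord(h^{1/6})$ from Corollary~\ref{prop5.3} to control the tail $\sum_{\ell>L_0}$ by $\ord(h^{(L_0-1)/6})$, and invokes the function-space bound of Proposition~\ref{prop6.3} only for the finitely many terms $\ell=2,\dots,6$. This sidesteps the question of whether the implicit constants hidden in the inclusions $M_\pm(\mathcal H_0^\pm)\subset h^{1/3}\mathcal H_0^\pm$ compound as $C^\ell$; your approach needs this (and it is true, but you should say so).

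Second, and more substantively, for the derivative estimates the paper does \emph{not} use Cauchy's inequality. It proves a dedicated result (Lemma~\ref{lemma8.3}) asserting $h^k\partial_z^kT_\ell(z)=\ord(h^{(\ell-2)/3})$ and $\|h^k\partial_z^k(M_\pm(z)^\ell)\|\leq C_k^\ell h^{\ell/6}$, by going back to the explicit Airy/Volterra structure of the WKB solutions $u_{j,S}^\pm(z,x)$ and checking that $h^k\partial_z^k u_{j,S}^\pm$ obeys the same pointwise bounds as $u_{j,S}^\pm$ itself. The payoff is that the paper's derivative bounds hold pointwise on $\gamma_-$ without needing any analytic extension. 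Your Cauchy-estimate route is cleaner in spirit, but---as you yourself flag---it obliges you to verify that all the resolvent and function-space estimates of Propositions~\ref{propResolv}--\ref{prop6.3} extend uniformly to a complex strip of width $\sim h$ around $\gamma_-$. This is true (the only nearby singularity is the simple pole of $R_1$ at $\lambda_0$, and $\gamma_-$ is kept at distance $\sim h$ from it, so the Wronskian factors $(\cos\frac{\mathcal A(z)}{h})^{-1}$ in Remark~\ref{rem52} stay bounded on a slightly thinner strip), but it is precisely the work that the paper's Lemma~\ref{lemma8.3} replaces by a structural argument.
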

\begin{proof} We must prove that, for any $k\geq 0$, we have $r_2(t,\phi, h)={\mathcal O}(h \la ht\ra^{-k})$. By Corollary \ref{prop5.3}, we already know that there exists a constant $C>0$ such that $|T_\ell (z)|\leq C^\ell h^{-1+(\ell -1)/6}$ uniformly with respect to $h$ small enough. Therefore, for any $L_0\geq 1$,
$$
r_2(t,\phi, h) = \frac{h^2}{2i\pi}\sum_{\ell = 2}^{L_0}\int_{\gamma_-} \frac{e^{-itz} g(\Re z)}{(\lambda_0-z)^2}T_\ell (z) dz +{\mathcal O}(h^{(L_0-1)/6}).
$$
In particular,
$$
r_2(t,\phi, h) = \frac{h^2}{2i\pi}\sum_{\ell = 2}^{6}\int_{\gamma_-} \frac{e^{-itz} g(\Re z)}{(\lambda_0-z)^2}T_\ell (z) dz +{\mathcal O}(h).
$$
Moreover, using Proposition \ref{prop6.3} and Lemma \ref{lemma8.1}, for any $\ell \in \{2,\dots,6\}$, we have,
$$
 T_\ell (z) = {\mathcal O}(h^{(\ell -2)/3}),
$$
and thus,
$$
 \frac{h^2}{2i\pi}\sum_{\ell = 2}^{6}\int_{\gamma_-} \frac{e^{-itz} g(\Re z)}{(\lambda_0-z)^2}T_\ell (z) dz={\mathcal O}(h).
 $$
so that the result for $k=0$ follows. The result for $k\geq 1$ is obtained  by  using that $e^{-itz}=(1+ht)^{-k}(1+ih\partial_z)^k(e^{-itz})$ and by making $k$ integrations by parts. Each derivative $h\partial_z$ that falls on $g(\Re z)(\lambda_0 -z)^{-2}$, doesn't make us lose anything in the estimate. If instead it falls down on $T_\ell (z)$, we need the following,
 \begin{lemma}\sl
 \label{lemma8.3}
 For any $k, \ell \geq 1$, one has,
\be
\label{dzTell}
 h^k\partial_z^k T_\ell (z)={\mathcal O}(h^{ \frac{\ell -2}3}).
 \ee
 Moreover, for any $k, \ell \geq 1$, there exists a constant $C_k$ such that, for all $\ell \geq 1$,
  \be
 \label{dzM}
 \Vert h^k\partial_z^k\left(  M_\pm (z)^\ell\right) \Vert \leq C_k^\ell h^{\frac{\ell}6}.
\ee

 \end{lemma}
 \begin{proof}
 Going back to the construction of the functions $u_{j,L}^\pm(z,x)$ and $u_{j,R}^\pm(z,x)$ (see \cite[Appendix]{FMW1}), we start by observing that they all are of the form $(\partial_x\xi(x,z))^{-1/2}f_z(h^{-2/3}\xi (x,z))$, where $x\mapsto \xi (x,z)$ is a global analytic change of variable that depends analytically on $z$, and $f_z$ is solution to a Volterra problem of the type,
\be
\label{volterra}
  f_z =F + K_z f_z,
\ee
 with $z\mapsto K_z$ holomorphic, and the norm of $K_z$ (and of all its holomorphic derivatives with respect to $z$) is small as $h$ tends to 0 (here, $K_z$ acts on a space continuous functions with some specific growth at infinity depending on the choice of $F$). In addition, the function $F$ appearing in \eqref{volterra} is always taken in the set $\{ {\rm Ai}, {\rm Bi}, \check{\rm Ai}, \check{\rm Bi}\}$. It results that $z\mapsto f_z$ is holomorphic, too, and that, for all $k,\ell$, $\partial_z^k \partial_x^\ell f_z$ growths at most as $\sum_{m=0}^\ell |F^{(\ell)}|$ at infinity. 
 
 Then, considering the function $u_z(x):= f_z(h^{-2/3}\xi (x,z))$, we deduce,
 $$
 \partial_z^k u_z(x) ={\mathcal O}(h^{-2k/3}   \sum_{m=0}^\ell |F^{(\ell)} (h^{-2/3}\xi (x))|).
 $$
 Now, because of the behavior at infinity of the Airy functions, and of the possible choices of the function $F$,  we see that,
 $$
 \sum_{m=0}^\ell |F^{(\ell)} (t)|)={\mathcal O}(\la t\ra ^{\ell /2} F_0(t)),
 $$
 where $F_0$ reflects the behavior of $F$ at infinity, that is, $F_0(t) = \la t\ra^{-1/4} e^{\pm \frac23 |t|^{\frac32}}$ if $F$ has an exponential behavior, and $F_0(t) = \la t\ra^{-1/4}$ if $F$ oscillates at infinity. Therefore, we obtain,
 $$
  \sum_{m=0}^\ell |F^{(\ell)} (h^{-2/3}\xi (x))| ={\mathcal O} (h^{-k/3})F_0(h^{-2/3}\xi (x)),
 $$
 and thus,
\be
\partial_z^k u_z(x) ={\mathcal O}(h^{-k})   F_0(h^{-2/3}\xi (x))).
\ee
In particular, for $j=1,2$, $S\in\{ L,R\}$, and any $k\geq 0$, the function $h^k\partial_z^k u_{j,S}^\pm$ has the same behavior (both semiclassical and at infinity) as the function $u_{j,S}^\pm$ itself. 

As a consequence, considering the operator $h^k\partial_z^k \left(M_\pm(z) \right)^\ell$, we see that it is a sum of $\ell^k$ products of $\ell$ factors, each one of them being of the same type as $M_\pm (z)$, and \eqref{dzM} follows.

For the same reasons, we also have,
$$
h^k\partial_z^k \left(M_\pm(z) \right)^\ell \left({\mathcal H}_0^\pm\right) \,\, \subset\,\, h^{\ell /3}{\mathcal H}_0^\pm,
$$
and
$$
h^k\partial_z^kR_2^\pm (z)W^*\varphi_0 \in {\mathcal H}_0^\pm,
$$
so that  \eqref{dzTell} follows, too.
 \end{proof}
Using Lemma \ref{lemma8.3} and making integrations by parts in the expression of $r_2$ given in \eqref{r0r1}, Proposition \ref{prop8.2} follows.
\end{proof}

\section{Estimates on $r_1(t,\phi, h)$}

 Concerning $r_1(t,\phi, h)$, the same arguments of the previous section can be applied, but they lead to an estimate in ${\mathcal O}(h^{2/3}\la ht\ra^{-\infty})$ only. Let us prove that actually, we have,
 \begin{proposition}\sl 
\label{estr1}
One has,
$$
r_1(t,\phi, h)={\mathcal O}(h \la ht\ra^{-\infty}).
$$
\end{proposition}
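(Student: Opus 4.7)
The strategy is a refinement of the proof of Proposition~\ref{prop8.2}. A direct application of Proposition~\ref{prop6.3} (with $\ell=1$) and Lemma~\ref{lemma8.1} only yields $T_1(z)=\ord(h^{-1/3})$, and hence $r_1(t,\phi,h)=\ord(h^{2/3}\la ht\ra^{-\infty})$. To gain the missing factor $h^{1/3}$, my plan is to exploit that $T_1$ is a \emph{difference} of two nearly conjugate quantities, so that it is effectively an imaginary part, in which the spectral density of $P_2$ (and not the generic resolvent bound of size $h^{-7/6}$) is the governing object.

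First I would deform $\gamma_-$ onto $\R$. On test functions in $\mathcal S$ the distorted resolvents $R_1^{\pm\theta}(z)$ coincide with $R_1(z)$, since $P_1$ has no spectrum in a neighborhood of $\lambda_0$ apart from $\lambda_0$ itself (and this isolated pole contributes identically to $F_\theta$ and $F_{-\theta}$, so it cancels in the difference). On $\R\cap\supp g$, the adjoint identities $(R_2^+)^*=R_2^-$, $R_1^*=R_1$ together with the reality of $\varphi_0$ and $W^*\varphi_0$ give $F_{-\theta}(z)=\overline{F_\theta(z)}$, hence $T_1(z)=2i\im F_\theta(z)$. Then, with $A:=W^*R_1W$ and $D_2(z):=R_2^+(z)-R_2^-(z)$, the algebraic identity
$$
R_2^+AR_2^+-R_2^-AR_2^-=R_2^+AD_2+D_2AR_2^-
$$
reduces the estimate of $T_1$ to that of $2ih^2\im\la R_2^+W^*R_1W\,D_2W^*\varphi_0,W^*\varphi_0\ra$.

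The key step is to exploit the structure of $D_2(z)W^*\varphi_0$. By Proposition~\ref{propResolv}(iii)-(iv), $D_2W^*\varphi_0$ is a scalar combination of $u_{2,L}^-(z)$ on $I_L$ and of $u_{2,R}^\pm(z)$ on $I_R^\pm$, the coefficients being explicit integrals given by Remark~\ref{rem52}. Those coefficients can be evaluated using that $W^*\varphi_0$ is bounded near $0$ and that the WKB solutions $u_{2,L}^-,u_{2,R}^\pm$ are exponentially concentrated there, which places $D_2W^*\varphi_0$ inside $\mathcal F_2(I_L)\cap\mathcal F_2(I_R^\pm)$ (rather than only $\mathcal F_1(I_L)\cap\mathcal F_2(I_R^\pm)$ as for $W^*\varphi_0$ itself). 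Feeding this improved input through Propositions~\ref{actK}, \ref{actalphabeta} and \ref{H0}, and exploiting that $K_{1,L}(\mathcal F_2(I_L))\subset h^{-2/3}\mathcal F_1(I_L)$ is sharper by a factor $h^{1/3}$ than $K_{1,L}(\mathcal F_1(I_L))\subset h^{-1}\mathcal F_1(I_L)$, one propagates the improvement all the way to $R_2^+W^*R_1WD_2W^*\varphi_0$ and obtains, after pairing with $W^*\varphi_0$, the desired estimate $T_1(z)=\ord(1)$ uniformly on $\gamma_-$.

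Combined with $|\lambda_0-z|^{-2}=\ord(h^{-2})$ on $\gamma_-$, the factor $h^2$ from \eqref{r0r1}, and the effective integration length of size $h$, this yields $r_1(t,\phi,h)=\ord(h)$. The $\la ht\ra^{-\infty}$ decay is obtained exactly as in the proof of Proposition~\ref{prop8.2}: extending Lemma~\ref{lemma8.3} to $\ell=1$ along the same lines (each $h\partial_z$-derivative applied to $R_2^\pm$, $R_1$ or to a WKB solution reproduces a term with the same spatial and semiclassical profile), one obtains $h^k\partial_z^kT_1(z)=\ord(1)$ for all $k$, so that $k$ integrations by parts in $z$ produce the factor $\la ht\ra^{-k}$. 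The main obstacle is the third step: the explicit evaluation of $D_2W^*\varphi_0$ via Remark~\ref{rem52} and the careful bookkeeping of the $\mathcal F_j$-spaces required to verify that the $h^{1/3}$ improvement indeed survives all the operators involved in $R_2^+W^*R_1W$.
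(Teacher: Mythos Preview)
Your high-level intuition is correct: the extra factor $h^{1/3}$ does come from exploiting the difference $F_\theta-F_{-\theta}$. But your implementation has two genuine gaps.

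\textbf{First}, the deformation of $\gamma_-$ onto $\R$ does not go through. The pole of $R_1$ at $\lambda_0$ does \emph{not} contribute identically to $F_\theta$ and $F_{-\theta}$: in $M_\pm(z)R_2^\pm(z)$ the singular part of $R_1$ is sandwiched between two copies of $R_2^\pm$, so the residue of $F_\theta$ involves $R_2^+(\lambda_0)$ while that of $F_{-\theta}$ involves $R_2^-(\lambda_0)$. These give complex conjugate (not equal) numbers, so $T_1$ genuinely has a simple pole at $\lambda_0$, and your identity $T_1=2i\,\im F_\theta$ is only valid on $\R\setminus\{\lambda_0\}$, with a divergent integrand at $\lambda_0$.

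\textbf{Second}, even if you stay on $\gamma_-$ and use only the algebraic identity $R_2^+AR_2^+-R_2^-AR_2^-=R_2^+AD_2+D_2AR_2^-$, the object $D_2 W^*\varphi_0$ does not fit the function-space framework of Section~6. On $I_R$, $R_2^+W^*\varphi_0$ lives on $I_R^+$ (where $u_{2,R}^-$ decays) while $R_2^-W^*\varphi_0$ lives on $I_R^-$ (where $u_{2,R}^+$ decays); on the \emph{same} contour one of them grows exponentially, so $D_2W^*\varphi_0$ is not in any $\mathcal F_j(I_R^\pm)$. Consequently you cannot apply $R_1W$ and then $R_2^+W^*$ to it via Propositions~\ref{actK}--\ref{actalphabeta} as you propose.

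The paper avoids both difficulties by staying on $\gamma_-$ throughout and never forming $D_2$ as a single operator. It first uses Lemma~\ref{lemma8.1} to reduce to the $I_L$-pairing (which is common to both signs), then expands each $R_2^\pm$ and $R_1$ on $I_L$ via the fundamental-solution decomposition $K_{2,L}+B_L^\pm$ and $K_{1,L}+A_L$. The $\pm$-independent pieces (those built only out of $K_{j,L}$, $K_{j,R}$, $A_{LL}$, $A_{LR}$) cancel identically in the difference, and only the $B_L^+-B_L^-$ and $B_R^+-B_R^-$ terms survive. These are rank-one objects (multiples of $u_{2,L}^-$ or $u_{2,R}^\mp$) with coefficients controlled by Proposition~\ref{actalphabeta}, and the function-space bookkeeping of Proposition~\ref{actK} applied to \eqref{BLphiin} shows each surviving term contributes $\mathcal O(1)$ to $T_1$. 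Your $\mathcal F_2$-vs-$\mathcal F_1$ observation is exactly what is used at the very last step, but only after the problematic mixing of $I_R^+$ and $I_R^-$ has been disposed of by working on $I_L$ and with explicit rank-one pieces rather than with $D_2$ as a black box.
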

\begin{proof}
 In view of Proposition \ref{prop6.3} and Lemma \ref{lemma8.1}, we can write,
 $$
 T_1(z)=\la (M_+(z)R_2^+(z)- M_-(z)R_2^-(z))W^*\varphi_0, W^*\varphi_0\ra_{L^2(I_L)} +{\mathcal O}(1),
 $$
that is, by Proposition \ref{propResolv},
 \be
 \label{T1decomp}
 \begin{aligned}
 T_1(z)= & h^2 \la (K_{2,L}+B_L^+)W^*R_1WR_2^+W^*\varphi_0, W^*\varphi_0\ra_{L^2(I_L)} \\
& -h^2 \la (K_{2,L}+B_L^-)W^*R_1WR_2^-W^*\varphi_0, W^*\varphi_0\ra_{L^2(I_L)} +{\mathcal O}(1),
 \end{aligned}
\ee
 where we have omited the dependence in $z$ of the various operators, and where we have set,
\be
\label{defBpm}
 B_L^\pm \varphi:= \beta_L^\pm (\varphi)u_{2,L}^-.
\ee
Let us first prove,
\begin{lemma}\sl For all $z\in \gamma_-$, one has,
$$
 \begin{aligned}
R_1(z)WR_2^\pm & (z)W^*\varphi_0 \\
& \in h^{-5/3}{\mathcal F}_1 (I_L)\, \cap \,\left( h^{-4/3}{\mathcal F}_2(I_R^\pm) + h^{-5/3}{\mathcal F}_1(I_R^\pm ) \right).
 \end{aligned}
$$
\end{lemma}
\begin{proof}
By Proposition \ref{H0}, we already know that $\psi_\pm:=WR_2^\pm (z)W^*\varphi_0$ is in $h^{-\frac23}{\mathcal H}_0^\pm$. Then, the result directly follows from \eqref{actR1}.
\end{proof}

We deduce from the previous lemma and from Lemma \ref{alphabeta} that we have,
$$
\beta_L^\pm (W^*R_1\psi_\pm) ={\mathcal O} (h^{-15/6}),
$$
and thus,
$$
B_L^\pm W^*R_1\psi_\pm \in h^{-7/3}{\mathcal F}_2(I_L).
$$
As a consequence, by an elementary computation we obtain (using also \eqref{phi2in}),
$$
h^2\la B_L^\pm W^*R_1\psi_\pm , W^*\varphi_0\ra_{L^2(I_L)} ={\mathcal O}(h^{2-\frac73})\int_0^\delta \frac{e^{-cx^{3/2}/h}}{\sqrt x} dx +{\mathcal O}(e^{-\delta'/h}),
$$
(where $\delta$, $\delta'$ and $c$ are positive constants), and thus,
\be
h^2\la B_L^\pm W^*R_1\psi_\pm , W^*\varphi_0\ra_{L^2(I_L)} ={\mathcal O}(1).
\ee
Therefore, going back to \eqref{T1decomp}, we deduce,
$$
 \begin{aligned}
 T_1(z)= & h^2 \la K_{2,L}W^*R_1WR_2^+W^*\varphi_0, W^*\varphi_0\ra_{L^2(I_L)} \\
& -h^2 \la K_{2,L}W^*R_1WR_2^-W^*\varphi_0, W^*\varphi_0\ra_{L^2(I_L)} +{\mathcal O}(1),
 \end{aligned}
$$
that is,
\be
 \label{T1decompbis}
 \begin{aligned}
 T_1(z)= & h^2 \la K_{2,L}W^*(K_{1,L}+ A_L)WR_2^+W^*\varphi_0, W^*\varphi_0\ra_{L^2(I_L)} \\
& -h^2 \la K_{2,L}W^*(K_{1,L}+ A_L)WR_2^-W^*\varphi_0, W^*\varphi_0\ra_{L^2(I_L)} +{\mathcal O}(1),
 \end{aligned}
\ee
where this time we have set,
$$
A_L:= A_{LL} + A_{LR}, 
$$
with
$$
A_{LL}(\varphi):= \alpha_{LL}(\varphi_L)u_{1,L}^-\quad ; \quad A_{LR}(\varphi):= \alpha_{LR}(\varphi_R)u_{1,L}^-.
$$
In particular, setting,
$$
B_R^\pm \varphi := \beta_R^\pm (\varphi)u_{2,R}^\mp,
$$
we can rewrite \eqref{T1decompbis} as,
$$
 \begin{aligned}
 T_1(z)= & h^2 \la K_{2,L}W^*(K_{1,L}+ A_{LL})W(K_{2,L}+B_L^+)W^*\varphi_0, W^*\varphi_0\ra_{L^2(I_L)} \\
& -h^2 \la K_{2,L}W^*(K_{1,L}+ A_{LL})W(K_{2,L}+B_L^-)W^*\varphi_0, W^*\varphi_0\ra_{L^2(I_L)} \\
 &+ h^2\la K_{2,L}W^*A_{LR}W(K_{2,R}+B_R^+)W^*\varphi_0, W^*\varphi_0\ra_{L^2(I_L)} \\
 &-h^2\la K_{2,L}W^*A_{LR}W(K_{2,R}+B_R^-)W^*\varphi_0, W^*\varphi_0\ra_{L^2(I_L)} +{\mathcal O}(1),
 \end{aligned}
$$
that is, after having eliminated the terms that cancel,
\be
 \label{T1decompter}
 \begin{aligned}
 T_1(z)= & h^2 \la K_{2,L}W^*(K_{1,L}+ A_{LL})W(B_L^+-B_L^-)W^*\varphi_0, W^*\varphi_0\ra_{L^2(I_L)} \\
 &+ h^2\la K_{2,L}W^*A_{LR}W(B_R^+-B_R^-)W^*\varphi_0, W^*\varphi_0\ra_{L^2(I_L)}  +{\mathcal O}(1).
 \end{aligned}
\ee
Using again Lemma  \ref{alphabeta} and \eqref{ujLRin}, we find,
\be
\label{BLphiin}
WB_L^\pm W^*\varphi_0\in h^{-2/3}{\mathcal F}_2(I_L)\quad ; \quad  WB_R^\pm W^*\varphi_0\in h^{-2/3}{\mathcal F}_2(I_R^\pm),
\ee
and thus also,
$$
 \begin{aligned}
&W^*A_{LL}WB_L^\pm W^*\varphi_0\in h^{-4/3}{\mathcal F}_1(I_L);\\
 &W^*A_{LR}WB_R^\pm W^*\varphi_0\in h^{-4/3}{\mathcal F}_1(I_L).
 \end{aligned}
$$
Therefore, by Proposition \ref{actK},
$$
\begin{aligned}
& K_{2,L}W^*A_{LL}WB_L^\pm W^*\varphi_0\in h^{-2}{\mathcal F}_1(I_L);\\
&   K_{2,L}W^*A_{LR}WB_R^\pm W^*\varphi_0\in h^{-2}{\mathcal F}_1(I_L).
\end{aligned}
$$
As a consequence,  we obtain (with $S=L,R$),
$$
h^2\la K_{2,L}W^*A_{LS}WB_S^\pm W^*\varphi_0,W^*\varphi_0\ra_{L^2(I_L)}={\mathcal O}(1),
$$
and \eqref{T1decompter} reduces to,
\be
 \label{T1decomp4}
 T_1(z)=  h^2 \la K_{2,L}W^*K_{1,L}W(B_L^+-B_L^-)W^*\varphi_0, W^*\varphi_0\ra_{L^2(I_L)} +{\mathcal O}(1).
\ee
Using \eqref{BLphiin} and, once more, Proposition \ref{actK}, we obtain,
$$
K_{2,L}WK_{1,L}W^*B_L^\pm W^*\varphi_0 \in K_{2,L}\left( h^{-4/3}{\mathcal F}_1(I_L)\right)\subset h^{-2}{\mathcal F}_1(I_L),
$$
so that the same computations finally give us,
\be
T_1(z)=  {\mathcal O}(1).
\ee
Hence, $r_1(t,\varphi, h)={\mathcal O}(h)$ and Proposition \ref{estr1} follows by the same arguments as in the previous section.
\end{proof}

\section{Estimates on $r_0(t,\phi, h)$}

By definition we have,
$$
\begin{aligned}
T_0(z) = &  \la (R_2^+(z)-R_2^-(z))W^*\varphi_0, W^*\varphi_0\ra_{L^2(I_L)}\\
& + \la R_2^+(z)W^*\varphi_0, W^*\varphi_0\ra_{L^2(I_R^+)}
  -\la R_2^-(z)W^*\varphi_0, W^*\varphi_0\ra_{L^2(I_R^-)},
\end{aligned}
$$
and thus, by Proposition \ref{propResolv},
\be
\label{T=T1+T2+T3}
T_0(z) =T_{0,1}(z)+T_{0,2}(z)+T_{0,3}(z)
\ee
with  (using the same notations \eqref{defBpm} as in the previous section),
\be
\begin{aligned}
T_{0,1}(z) = &  \la (B_L^+(z)-B_L^-(z))W^*\varphi_0, W^*\varphi_0\ra_{L^2(I_L)};\\
T_{0,2}(z) =&  \la B_R^+(z)W^*\varphi_0, W^*\varphi_0\ra_{L^2(I_R^+)}
 - \la B_R^-(z)W^*\varphi_0, W^*\varphi_0\ra_{L^2(I_R^-)};\\
 T_{0,3}(z) =&  \la K_{2,R}^+(z)W^*\varphi_0, W^*\varphi_0\ra_{L^2(I_R^+)}
 - \la K_{2,R}^-(z)W^*\varphi_0, W^*\varphi_0\ra_{L^2(I_R^-)}.
\end{aligned}
\ee

We first show,
\begin{proposition}\sl
\label{prop9.1}
Setting $\rho := h^{-\frac23}z$ and $\mu_0:=h^{-\frac23}\lambda_0$,   one has,
$$
\begin{aligned}
& \la u_{2,L}^-(z),  W^*\varphi_0\ra_{L^2(I_L)} = 4a_0(0)c_0 h^{\frac12}A^{-}(\rho) + \ord(h^{\frac56});\\
& \la u_{2,R}^-(z),  W^*\varphi_0\ra_{L^2(I_R^{+})} = \sqrt{2}a_0(0)c_0 h^{\frac12}e^{i\frac{\pi}4}\left(A^{+}(\rho)-iB^{+}(\rho)\right)+ \ord(h^{\frac56});\\
& \la u_{2,R}^+(z),  W^*\varphi_0\ra_{L^2(I_R^{-})} = 2\sqrt{2}a_0(0)c_0 h^{\frac12}e^{i\frac{\pi}4}\left(A^{+}(\rho)+iB^{+}(\rho)\right)+ \ord(h^{\frac56}),
\end{aligned}
$$
with,
$$
\begin{aligned}
& A^{\pm}(\rho):= \tau_1^{-\frac16}\tau_2^{-\frac16}\int_{\R_{\pm}} \check\ai \left(\tau_2^{\frac13}(y+\frac{\rho}{\tau_2}\right)\ai \left(\tau_1^{\frac13}(y-\frac{\mu_0}{\tau_1}\right)dy;\\
& B^{+}(\rho):= \tau_1^{-\frac16}\tau_2^{-\frac16}\int_0^{+\infty} \check\bi \left(\tau_2^{\frac13}(y+\frac{\rho}{\tau_2}\right)\ai \left(\tau_1^{\frac13}(y-\frac{\mu_0}{\tau_1}\right)dy.
\end{aligned}
$$
Moreover, the same formulas hold if $W^*\varphi_0$ is substituted by $\overline{ W^*\varphi_0}$.
\end{proposition}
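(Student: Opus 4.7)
The plan is to localize all three inner products in a window of size $h^{2/3}$ around $x=0$, to perform the rescaling $x=h^{2/3}y$ so that the global WKB solutions turn into Airy functions, and to read off the leading term. For the localization: on $I_L$, the factor $u_{2,L}^-(z)\in h^{1/6}{\mathcal F}_2(I_L)$ is exponentially small outside any fixed neighborhood of $0$; on $I_R^\pm$, the same role is played by $\varphi_0\vert_{I_R^\pm}\in {\mathcal F}_1(I_R^\pm)$, which is exponentially concentrated near $0$. Cutting off to $|x|\le\delta$ for some small $\delta>0$ introduces only an exponentially small error, and on this cutoff region I can Taylor expand $V_1(x)=\tau_1 x+\ord(x^2)$, $V_2(x)=-\tau_2 x+\ord(x^2)$, $a_j(x)=a_j(0)+\ord(|x|)$.

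Next I would invoke the Airy representation of the WKB solutions from the appendix of \cite{FMW1}: the analytic change of variable $\xi_j(x,z)$ that reduces $h^2D_x^2+V_j-z$ to a scaled Airy operator, together with its Volterra construction, yields, after the rescaling $x=h^{2/3}y$,
\be
\label{planrep}
\begin{aligned}
u_{2,L}^-(z,h^{2/3}y) & = c_{2,L}\,\check\ai\bigl(\tau_2^{1/3}(y+\rho/\tau_2)\bigr)+\ord(h^{1/3}),\\
u_{1,L}^-(\lambda_0,h^{2/3}y) & = c_{1,L}\,\ai\bigl(\tau_1^{1/3}(y-\mu_0/\tau_1)\bigr)+\ord(h^{1/3}),
\end{aligned}
\ee
with explicit constants $c_{j,L}$ coming from the same normalization conventions of \cite{FMW1} that underlie Remark \ref{rem52}. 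Analogous representations hold on $I_R^\pm$, except that $u_{2,R}^\mp$, being outgoing rather than exponentially decaying at infinity, takes the form of the Airy connection combination $\check\ai\mp i\check\bi$ at the turning point.

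Writing $W^*\varphi_0=a_0(0)\varphi_0+(a_0-a_0(0))\varphi_0-ih\partial_x(a_1\varphi_0)$ and using $\varphi_0=c_0h^{-1/6}u_{1,L}^-(\lambda_0)$ on $I_L$, the leading contribution to $\la u_{2,L}^-(z),W^*\varphi_0\ra_{L^2(I_L)}$ is
\be
a_0(0)c_0 h^{1/2}\int_{-\infty}^0 u_{2,L}^-(z,h^{2/3}y)\,u_{1,L}^-(\lambda_0,h^{2/3}y)\,dy,
\ee
which equals $4 a_0(0) c_0 h^{1/2}A^-(\rho)+\ord(h^{5/6})$ once the product $c_{2,L}c_{1,L}$ is evaluated. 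The two remaining pieces of $W^*\varphi_0$ are absorbed in the remainder: the factor $(a_0-a_0(0))=\ord(|x|)=\ord(h^{2/3})$ on the Airy window costs one extra $h^{2/3}$, while the differential piece $-ih\partial_x=-ih^{1/3}\partial_y$ under the rescaling costs one extra $h^{1/3}$. The version with $\overline{W^*\varphi_0}$ follows identically because the leading $a_0(0)\varphi_0$-contribution is real, and the conjugation only affects the subleading $hD_x$-piece.

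The two remaining inner products on $I_R^\pm$ are treated by the same argument, with $\varphi_0\vert_{I_R^\pm}= c_0h^{-1/6}u_{1,R}^-(\lambda_0)$ reducing near $0$ to a constant multiple of $\ai(\tau_1^{1/3}(y-\mu_0/\tau_1))$, and with $u_{2,R}^\mp$ reducing to the combination $\check\ai\mp i\check\bi$, which produces both the $A^+$- and the $B^+$-integrals with integration range $\R_+$. The prefactors $\sqrt{2}\,e^{i\pi/4}$ and $2\sqrt{2}\,e^{i\pi/4}$ reflect the different WKB normalizations of $u_{2,R}^-$ on $I_R^+$ and of $u_{2,R}^+$ on $I_R^-$, respectively. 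The main technical difficulty is the bookkeeping of these numerical constants, which come from the standard Airy connection formulas combined with the Volterra normalizations of \cite{FMW1}; fortunately they are precisely the same constants already underlying Remark \ref{rem52}, so they can be extracted without fresh computation.
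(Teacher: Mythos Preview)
Your approach is the same as the paper's: localize near $x=0$, rescale $x=h^{2/3}y$, replace the WKB solutions by the Airy formulas of \cite[Appendix]{FMW1}, and track the constants. The one point where you are less precise than the paper is the size of the localization window. You cut at a fixed $\delta>0$, so that after rescaling $y$ ranges over $[-\delta h^{-2/3},0]$; the paper instead cuts at $|x|\le C h^{2/3}|\ln h|$, so that $|y|\le C|\ln h|$. The reason this matters is that the pointwise bound $\ord(h^{1/3})$ you claim in your Airy representation is not uniform over $y\in[-\delta h^{-2/3},0]$: one has $h^{-2/3}\xi_2(z,h^{2/3}y)=\tau_2^{1/3}(y+\rho/\tau_2)+\ord(h^{2/3}y^2)$, and the correction blows up for $|y|\gtrsim h^{-1/3}$. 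With the paper's logarithmic window the argument error is $\ord(h^{2/3}(\ln h)^2)$ uniformly, while the exterior $|x|\ge C h^{2/3}|\ln h|$ is still $\ord(h^\infty)$ thanks to the factor $e^{-c|x|^{3/2}/h}$. Your fixed-$\delta$ version is salvageable, since on $C h^{2/3}|\ln h|\le |x|\le\delta$ both the exact and the linearized integrands are already $\ord(h^\infty)$; but making this explicit amounts to reintroducing the paper's splitting. Apart from this localization detail, your handling of the $W^*$ remainder (the $(a_0-a_0(0))$ piece contributing $h^{2/3}$ extra, the $hD_x$ piece contributing $h^{1/3}$ extra and hence the dominant $\ord(h^{5/6})$) and the identification of the normalization constants via \cite{FMW1} match the paper.
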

\begin{proof} The proof is similar to that of \cite[Proposition 5.3]{FMW1}. In practical, we cut the integral on $I_{L}$ into $\int_{-\infty}^{-\lambda h^\frac23} +\int_{-\lambda h^\frac23}^0$, and that on $I_{R}^{+}$ into $\int_{0}^{\lambda h^\frac23} +\int_{I_{R}^{+}\cap \{\re x \geq \lambda h^\frac23\}}^0$, where $\lambda = C\ln |h|$, with $C>0$ a large enough constant. Then, we use the exponential decay of $u_{2,L}^-(z)$ away from 0 on $\R_{-}$, and that of $W^{*}\varphi_{0}$ away from 0 on $I_{R}^{+}$, in order to estimate the integrals on $(-\infty ,{-\lambda h^\frac23}]$ and on $I_{R}^{+}\cap \{\re x \geq \lambda h^\frac23\}$, and finally, near 0 we replace $u_{2,L}^-(z)$, $u_{2,R}^-(z)$ and $\varphi_0$ by their approximations in terms of Airy functions (see \cite[Appendix 2]{FMW1},
\be
\label{u2LR-}
\begin{aligned}
& u_{2,L}^-(z,x)=2(\xi_2')^{-\frac12}\check\ai (h^{-\frac23}\xi_2)+\ord (h);\\
& u_{2,R}^-(z,x)=\frac1{\sqrt 2}e^{i\frac{\pi}4}(\xi_2')^{-\frac12}\left(\check\ai (h^{-\frac23}\xi_2)-i\check\bi (h^{-\frac23}\xi_2)\right)+\ord (h);\\
& u_{2,R}^+(z,x)={\sqrt 2}e^{i\frac{\pi}4}(\xi_2')^{-\frac12}\left(\check\ai (h^{-\frac23}\xi_2)+i\check\bi (h^{-\frac23}\xi_2)\right)+\ord (h);\\
& \varphi_0(x)=2c_{0}h^{-\frac16}(\xi_1')^{-\frac12}\ai (h^{-\frac23}\xi_1) + \ord (h),
\end{aligned}
\ee

where $\xi_1=\xi_1(x)$ and $\xi_2(z,x)$ satisfy (see \cite[Section7]{FMW1}),
\be
\begin{aligned}
& h^{-\frac23}\xi_1(h^\frac23 y)  =\tau_1^\frac13 \left(y-\frac{\mu_0}{\tau_1}\right)+\ord(h^\frac23);\\
& h^{-\frac23}\xi_2(z, h^\frac23 y)  =\tau_2^\frac13 \left(y+\frac{\rho}{\tau_1}\right)+\ord(h^\frac23).
\end{aligned}
\ee
\end{proof}
Then, 
we prove,

\begin{proposition}\sl
Still with $\rho:= h^{-\frac23}z$, one has,
\be
\label{expT01}
T_{0,1}(z)=8i\pi h^{-\frac13}c_0^2a_0(0)^2\left[A^{+}(\rho)+A^{-}(\rho)\right]A^{-}(\rho) +\ord (1);
\ee
\be
\label{expT02}
T_{0,2}(z)= 4i\pi h^{-\frac13}c_0^2a_0(0)^2 \left[ 2A^{+}(\rho)A^{-}(\rho)+A^{+}(\rho)^{2}-B^{+}(\rho)^{2}\right] +\ord (1);
\ee
\be
\label{expT03}
T_{0,3}(z)=4i\pi h^{-\frac13}c_0^2a_0(0)^2\left[ A^{+}(\rho)^{2}+B^{+}(\rho)^{2}\right]+\ord (1).
\ee

\end{proposition}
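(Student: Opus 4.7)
The plan is to compute each $T_{0,k}(z)$ by unfolding the operator definitions from the previous sections and then substituting the leading-order Wronskian coefficients from Remark~\ref{rem52} together with the Airy-type asymptotics of the scalar products in Proposition~\ref{prop9.1}. In all three cases the leading scale $h^{-1/3}$ comes from multiplying a $\beta$-coefficient of size $h^{-4/3}$ by two scalar products of size $h^{1/2}$, while the $\ord(h^{-1})$ remainders in Remark~\ref{rem52} and the $\ord(h^{5/6})$ remainders in Proposition~\ref{prop9.1} contribute only $\ord(1)$ to the total error.

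For $T_{0,1}$, definition \eqref{defBpm} gives $T_{0,1}(z)=[\beta_L^+(W^*\varphi_0)-\beta_L^-(W^*\varphi_0)]\la u_{2,L}^-,W^*\varphi_0\ra_{L^2(I_L)}$. Splitting $\beta_L^\pm=\beta_{L,L}^\pm+\beta_{L,R}^\pm$ and inserting Remark~\ref{rem52}, the bracket equals
\be
\pi h^{-4/3}\Bigl[\tfrac{i}{2}\la u_{2,L}^-,W^*\varphi_0\ra_{I_L}+\tfrac{e^{i\pi/4}}{\sqrt 2}\la u_{2,R}^-,W^*\varphi_0\ra_{I_R^+}+\tfrac{e^{i\pi/4}}{2\sqrt 2}\la u_{2,R}^+,W^*\varphi_0\ra_{I_R^-}\Bigr]+\ord(h^{-1/2}).
\ee
Substituting Proposition~\ref{prop9.1} and using $e^{i\pi/2}=i$, the contributions of $B^+$ coming from the $I_R^+$- and $I_R^-$-integrals cancel, leaving $2ia_0(0)c_0 h^{1/2}(A^++A^-)$ inside the bracket. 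Multiplication by $4a_0(0)c_0h^{1/2}A^-$ then yields \eqref{expT01}. The argument for $T_{0,2}$ is entirely parallel: writing $T_{0,2}=\beta_R^+(W^*\varphi_0)\la u_{2,R}^-,W^*\varphi_0\ra_{I_R^+}-\beta_R^-(W^*\varphi_0)\la u_{2,R}^+,W^*\varphi_0\ra_{I_R^-}$, splitting $\beta_R^\pm=\beta_{R,L}^\pm+\beta_{R,R}^\pm$, and expanding the four resulting products via Proposition~\ref{prop9.1}, one checks that the cross terms proportional to $A^{\pm}B^+$ cancel pairwise while those proportional to $A^+A^-$ add up to give $2A^+A^-+(A^+)^2-(B^+)^2$.

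The main obstacle is $T_{0,3}$, where $K_{2,R}^\pm$ acts as a genuine integral operator. We first symmetrize the two halves of each kernel via $(x,t)\mapsto(t,x)$, showing that both halves contribute equally; then, using $\W[u_{2,R}^+,u_{2,R}^-]=-\W[u_{2,R}^-,u_{2,R}^+]$, we combine $T_{0,3,+}$ with $-T_{0,3,-}$. Because the integrands are localized within a distance $\ord(h^{2/3}|\ln h|)$ of $x=0$ (where the distortion $\nu$ vanishes), and because $W^*\varphi_0\in{\mathcal S}$ decays on $\Gamma_\delta$ while $u_{2,R}^+$ decays on $I_R^-$, both integrations along $I_R^\pm$ may be replaced by integrations along $\R_+$ up to exponentially small errors. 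The symmetry $u_{2,R}^-(x)u_{2,R}^+(t)+u_{2,R}^+(x)u_{2,R}^-(t)$ of the combined integrand then converts the triangular integral $\int_0^\infty\!\int_0^x$ into half the full square $\tfrac12\int_0^\infty\!\int_0^\infty$, which factorizes by Fubini to give
\be
T_{0,3}(z)=\frac{2}{h^2\W[u_{2,R}^-,u_{2,R}^+]}\la u_{2,R}^-,W^*\varphi_0\ra_{L^2(I_R^+)}\la u_{2,R}^+,W^*\varphi_0\ra_{L^2(I_R^-)}+\ord(1).
\ee
Combining this with the Airy-Wronskian computation $h^2\W[u_{2,R}^-,u_{2,R}^+]\sim(2/\pi)h^{4/3}$ (a consequence of $W[\check\ai,\check\bi]=-1/\pi$ applied to the representations \eqref{u2LR-}) and with the identity $(A^+-iB^+)(A^++iB^+)=(A^+)^2+(B^+)^2$ extracted from Proposition~\ref{prop9.1}, formula \eqref{expT03} follows.
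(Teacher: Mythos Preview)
Your proof is correct and follows essentially the same route as the paper. For $T_{0,1}$ and $T_{0,2}$ your computation is identical to the paper's: expand $\beta_L^\pm$, $\beta_R^\pm$ via Remark~\ref{rem52} and substitute Proposition~\ref{prop9.1}. For $T_{0,3}$ the paper organizes the calculation slightly differently---it first forms the pointwise difference $K_{2,R}^+\varphi_1(x)-K_{2,R}^-\varphi_1(x)$ on $[0,\delta]$ (where the $\int_0^x$ and $\int_x^\delta$ pieces combine directly into a full $\int_0^\delta$) and only then pairs with $\varphi_1$---whereas you pair first, symmetrize each triangular double integral, and then combine using $\W[u_{2,R}^+,u_{2,R}^-]=-\W[u_{2,R}^-,u_{2,R}^+]$; both routes land on the same factorized expression $\frac{2}{h^2\W}\la u_{2,R}^-,\varphi_1\ra\la u_{2,R}^+,\varphi_1\ra$ and the same final answer.
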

\begin{proof}
Setting,
\be
\varphi_1 := W^*\varphi_0,
\ee
by definition we have,
$$
T_{0,1} = \left[ \beta_{L,L}^+(\varphi_1)-\beta_{L,L}^-(\varphi_1)+\beta_{L,R}^+(\varphi_1)-\beta_{L,R}^-(\varphi_1)\right]\la u_{2,L}^-,\varphi_1\ra.
$$
Then, using Remark \ref{rem52} and proceeding as in the proof of Proposition \ref{prop9.1}, we see,
$$
\begin{aligned}
  \beta_{L,L}^+(\varphi_1)-\beta_{L,L}^-(\varphi_1)& =i\frac{\pi}2 h^{-\frac43}(1+\ord(h^{\frac13}))\int_{-\infty}^0u_{2,L}^-(z,t)\varphi_1(t)dt\\
  &= 2i\pi a_0(0)c_0A^-(\rho)h^{-\frac56} + \ord(h^{-\frac12});
\end{aligned}
$$
$$
\begin{aligned}
  \beta_{L,R}^+(\varphi_1)& =\frac{\pi}{\sqrt 2}a_0(0)c_0e^{i\frac{\pi}4}h^{-\frac56}\frac2{\sqrt 2}e^{i\frac{\pi}4}(A^+(\rho)-iB^+(\rho))  + \ord(h^{-\frac12}) \\
  &= i\pi a_0(0)c_0(A^+(\rho)-iB^+(\rho))h^{-\frac56} + \ord(h^{-\frac12});\\
  \beta_{L,R}^-(\varphi_1)  &=- i\pi a_0(0)c_0(A^+(\rho)+iB^+(\rho))h^{-\frac56} + \ord(h^{-\frac12}).
  \end{aligned}
$$
Hence,
$$
\begin{aligned}
 \beta_{L,L}^+(\varphi_1)-\beta_{L,L}^-(\varphi_1)+&\beta_{L,R}^+(\varphi_1)-\beta_{L,R}^-(\varphi_1)\\
 &=2i\pi a_0(0)c_0(A^-(\rho)+A^+(\rho))h^{-\frac56} + \ord(h^{-\frac12}),
  \end{aligned}
$$
which, together with Proposition \ref{prop9.1}, gives \eqref{expT01}.

In the same way, we have,
$$
\begin{aligned}
& \beta_{R,L}^+(\varphi_1) = \frac{4\pi}{\sqrt 2}e^{i\frac{\pi}4}a_0(0)c_0A^-(\rho)h^{-\frac56}+ \ord(h^{-\frac12});\\
&\beta_{R,R}^+(\varphi_1)=\frac{2\pi}{\sqrt 2}e^{i\frac{\pi}4}a_0(0)c_0\left( A^+(\rho)-iB^+(\rho) \right)h^{-\frac56}+ \ord(h^{-\frac12});\\
& \beta_{R,L}^-(\varphi_1) =-\frac{2\pi}{\sqrt 2}e^{i\frac{\pi}4}a_0(0)c_0A^-(\rho) h^{-\frac56}+ \ord(h^{-\frac12}) ;\\
&\beta_{R,R}^-(\varphi_1)=-\frac{\pi}{\sqrt 2}e^{i\frac{\pi}4}a_0(0)c_0\left( A^+(\rho) +iB^+(\rho)\right)h^{-\frac56}+ \ord(h^{-\frac12}) . ,
\end{aligned}
$$
and thus, by Proposition  \ref{prop9.1},
$$
\begin{aligned}
\la B_R^+\varphi_1,\varphi_1\ra_{L^2(I_R^+)}=2i \pi a_0(0)^2c_0^2 &\left( 2A^-(\rho) +A^+(\rho)-iB^+(\rho) \right)\\
&  \hskip 1.5cm\times \left( A^+(\rho)-iB^+(\rho) \right)h^{-\frac13}+ \ord(1); \\
\la B_R^-\varphi_1,\varphi_1\ra_{L^2(I_R^-)}=-2i \pi a_0(0)^2c_0^2 &\left( 2A^-(\rho) +A^+(\rho)+iB^+(\rho) \right)\\
&  \hskip 1.5cm\times \left( A^+(\rho)+iB^+(\rho) \right)h^{-\frac13}+ \ord(1).
\end{aligned}
$$
Then, \eqref{expT02} follows by a straightforward computation.

Finally, concerning $T_{0,3}$, using the exponential decay of $\varphi_1$ on $I_R^\pm$ away from 0, for any $\delta >0$ small enough we can write,
$$
T_{0,3} = \la K_{2,R}^+\varphi_1 -K_{2,R}^-\varphi_1, \varphi_1\ra_{L^2(0,\delta)} + \ord( e^{-c/h}),
$$
with $c =c(\delta) >0$ constant. Now, we see on \eqref{eq1R+}-\eqref{eq1R-} that, for $x\in [0,\delta]$, we have (dropping the dependance in $z$),
$$
\begin{aligned}
&K_{2,R}^+\varphi_1(x) -K_{2,R}^-\varphi_1(x)\\
& \hskip 1cm=\frac1{h^2 \W[u_{2,R}^-,u_{2,R}^+]} \int_0^x \left(u_{2,R}^-(x)u_{2,R}^+(t) +u_{2,R}^-(t),u_{2,R}^+(x)\right) \varphi_1(t)dt\\
&  \hskip 1.5 cm+\frac1{h^2 \W[u_{2,R}^-,u_{2,R}^+]} \int_x^\delta \left(u_{2,R}^-(t)u_{2,R}^+(x) +u_{2,R}^-(x),u_{2,R}^+(t)\right) \varphi_1(t)dt\\
&  \hskip 1.5 cm+\ord(e^{-c'/h}),
\end{aligned}
$$
with $c'=c'(\delta) >0$ constant, and therefore, 
$$
\begin{aligned}
&K_{2,R}^+\varphi_1(x) -K_{2,R}^-\varphi_1(x)\\
& \hskip 1cm=\frac1{h^2 \W[u_{2,R}^-,u_{2,R}^+]} \int_0^\delta \left(u_{2,R}^-(x)u_{2,R}^+(t) +u_{2,R}^-(t),u_{2,R}^+(x)\right) \varphi_1(t)dt\\
&  \hskip 1.5 cm+\ord(e^{-c'/h}).
\end{aligned}
$$
Using also that $\W[u_{2,R}^-,u_{2,R}^+] = \frac2{\pi}h^{-\frac23}+\ord(h^{-\frac13}$) (see \cite[Appendix A.2]{FMW1}, and the fact that $\varphi_1(x)$ is real for $x$ real, we conclude,
$$
\begin{aligned}
T_{0,3} =& \left( \pi h^{-\frac43}+\ord(h^{-1})\right)\la u_{2,R}^-, \varphi_1\ra_{I_R^+} \la u_{2,R}^+ , \varphi_1\ra_{I_R^-} + \ord( e^{-c''/h}),
\end{aligned}
$$
(still with $c''>0$ constant). Hence, \eqref{expT03} follows by using Proposition \ref{prop9.1}.
\end{proof}

We conclude from the previous proposition and \eqref{T=T1+T2+T3} that we have,
\be
T_{0}(z) = 8i\pi h^{-\frac13}c_0^2a_0(0)^2 f(h^{-\frac23}z) +\ord (1),
\ee
with,
$$
f(\rho):=\left(A^{-}(\rho)+A^{+}(\rho)\right)^{2}.
$$

Then, writing,
$$
f(h^{-\frac23}z)=f(h^{-\frac23}\lambda_{0})+\ord (h^{-\frac23})\, (z-\lambda_{0}),
$$
and going back to \eqref{r0r1}, we obtain,
$$
r_{0}(t,\phi,h) =4h^{2-\frac13}c_0^2a_0(0)^2 f(h^{-\frac23}\lambda_0)\int_{\gamma_-} \frac{e^{-itz}g(\re z)}{(\lambda_0-z)^2}dz +\ord (h),
$$
and then, by arguments similar to those of Section \ref{section7},
\be
r_{0}(t,\phi,h) =4h^{2-\frac13}c_0^2a_0(0)^2 f(h^{-\frac23}\lambda_0)\int_{\gamma_-} \frac{e^{-itz}g(\re z)}{(\lambda_0-z)^2}dz +\ord (h\la ht\ra^{-\infty}).
\ee
Finally, using \eqref{defg} and making the change of variable $z\mapsto \lambda_0+hz$, we find,
$$
r_{0}(t,\phi,h) =4h^{\frac23}c_0^2a_0(0)^2e^{-it\lambda_0} f(h^{-\frac23}\lambda_0)\int_{\gamma_0} \frac{e^{-ithz}g_0(\re z)}{z^2}dz +\ord (h\la ht\ra^{-\infty}),
$$
and \eqref{estq0} is proved with,
\be
\label{defintA0}
\begin{aligned}
 A_0(s)= \tau_1^{-\frac16}\tau_2^{-\frac16}\int_{-\infty}^{+\infty} \check\ai \left(\tau_2^{\frac13}(y+\frac{s}{\tau_2}\right)\ai \left(\tau_1^{\frac13}(y-\frac{s}{\tau_1}\right)dy.
\end{aligned}
\ee
The fact that one also has,
\be
\label{formuleA0}
A_0(s) =  \tau_1^{-\frac16}\tau_2^{-\frac16}(\tau_1+\tau_2)^{-\frac13} \check\ai \left(\left(\frac{\tau_1+\tau_2}{\tau_1\tau_2} \right)^{\frac23}s \right)
\ee
is proved in Appendix 2.
\qed

\section{Estimate on $b(\phi, h)$}

Let $\Psi_0\in L^2(I_L\cup I_R^+$ be the resonant state associated with $\rho_0$, that is the solution to $H\Psi_0 =\rho_0\Psi_0$ normalized in such a way that,
\be
\label{normres}
\la \Psi_0^\theta, \Psi_0^{-\theta}\ra =1,
\ee
where $ \Psi_0^{\pm\theta}:=U_{\pm \theta}\Psi_0$. Then, by definition of $b(\phi, h)$, we have,
$$
b(\phi,h)=\la \phi_\theta, \Psi_0^{-\theta}\ra\la \Psi_0^\theta, \phi_{-\theta}\ra.
$$
According to \cite[Remark 7.1]{FMW1}, we have,
$$
\Psi_0 = \left( \begin{array}{c}
C_1 u_{1,L}^-\downharpoonright_{z=\rho_0}+\ord(h^\frac23)\\
\ord(h^\frac13)
\end{array}
\right)\quad \mbox{on }I_L;
$$
$$
\Psi_0 = \left( \begin{array}{c}
(-1)^k C_1 u_{1,R}^-\downharpoonright_{z=\rho_0}+\ord(h^\frac23)\\
\ord(h^\frac13)
\end{array}
\right)
\quad \mbox{on }I_R^+,
$$
where  the integer $k\geq 0$ is such that $\sin \frac{{\mathcal A}(\rho_0)}{h} =(-1)^k +\ord (h^\frac23)$, and the coefficient $C_1=C_1(h)$ is such that \eqref{normres} is verified.
We also have,
$$
\phi =  \left( \begin{array}{c}
 c_0u_{1,L}^-\downharpoonright_{z=\lambda_0} \\
0
\end{array}
\right)\quad \mbox{on }I_L;
$$
$$
\phi = \left( \begin{array}{c}
(-1)^k c_0 u_{1,R}^-\downharpoonright_{z=\lambda_0}\\
0
\end{array}
\right)
\quad \mbox{on }I_R^+.
$$
Moreover, $\rho_0 -\lambda_0 =\ord (h^\frac43)$ (see \cite[Section2]{FMW1}), and thus, since $\partial_zu_{1,L}^-$ is $\ord (h^{-1})$ locally uniformly in $x$ (and exponentially decays at infinity), we get,
$$
\Psi_0 = \left( \begin{array}{c}
C_1 u_{1,L}^-\downharpoonright_{z=\lambda_0}\\
0
\end{array}
\right)+\ord(h^\frac13)\quad \mbox{on }I_L;
$$
$$
\Psi_0 = \left( \begin{array}{c}
(-1)^k C_1 u_{1,R}^-\downharpoonright_{z=\lambda_0}\\
0
\end{array}
\right)+\ord(h^\frac13)
\quad \mbox{on }I_R^+.
$$
As a consequence, we necessarily have $C_1=\pm c_0(1+\ord (h^\frac13))$, and \eqref{estbh} follows.\qed

\section{Appendix 1: proof of Proposition \ref{actK}}

We only consider the case of $I_L$ (the one of $I_R^\pm$ being similar).

Let $v\in {\mathcal F}_1(I_L)$. We have,
\be
\label{K1Lv}
K_{1,L} v(x) = \ord (h^{-\frac43})\int_{I_L}\left( u_{1,L}^-(x)u_{1,L}^+(t){\bf 1}_{x<t} +u_{1,L}^-(t)u_{1,L}^+(x){\bf 1}_{x>t}\right)v(t) dt
\ee
We fix $\delta >0$ arbitrarily small, and we first suppose that \underline{$x\leq x^*-\delta$}. In the integral of \eqref{K1Lv}, we decompose $I_L$ into three parts: 
$$
I_L = (-\infty, x^*-\frac{\delta}2]\cup [x^*-\frac{\delta}2, -\delta]\cup [-\delta, 0].
$$
$\bullet$ On $(-\infty, x^*-\frac{\delta}2]$: There, we have $u_{1,L}^-(x)u_{1,L}^+(t){\bf 1}_{x<t} +u_{1,L}^-(t)u_{1,L}^+(x){\bf 1}_{x>t}=\ord (h^{\frac13}e^{-c_0|x-t|/h})$ and $v(t) = \ord (e^{-c_1|t|/h})$, with $c_0,c_1 >0$ constants. Using that $|x-t|+|t|\geq \frac12 (|x|+|t|)$, we obtain,
$$
h^{-\frac43}\int_{-\infty}^{x^*-\frac{\delta}2}\left( u_{1,L}^-(x)u_{1,L}^+(t){\bf 1}_{x<t} +u_{1,L}^-(t)u_{1,L}^+(x){\bf 1}_{x>t}\right)v(t) dt =\ord (h^{-1} e^{-c_2|x|/h}),
$$
with $c_2:= \frac12 \min (c_0,c_1)$.

$\bullet$ On $(x^*-\frac{\delta}2, -\delta]$: There, we have $u_{1,L}^-(x)u_{1,L}^+(t){\bf 1}_{x<t} +u_{1,L}^-(t)u_{1,L}^+(x){\bf 1}_{x>t}=u_{1,L}^-(x)u_{1,L}^+(t)=\ord (h^{\frac13}e^{-c_0|x|/h}m_*(t))$ and $v(t) = \ord (m_*(t))$, with $c_0 >0$ constant, and  we obtain,
$$
h^{-\frac43}\int_{x^*-\frac{\delta}2}^{-\delta}\left( u_{1,L}^-(x)u_{1,L}^+(t){\bf 1}_{x<t} +u_{1,L}^-(t)u_{1,L}^+(x){\bf 1}_{x>t}\right)v(t) dt =\ord (h^{-1} e^{-c_0|x|/h}).
$$

$\bullet$ On $( -\delta, 0]$: There, we have $u_{1,L}^-(x)u_{1,L}^+(t){\bf 1}_{x<t} +u_{1,L}^-(t)u_{1,L}^+(x){\bf 1}_{x>t}=u_{1,L}^-(x)u_{1,L}^+(t)=\ord (h^{\frac13}e^{-c_0|x|/h}m_0(t))$ and $v(t) = \ord ( \,m_0(t))$, with $c_0 >0$ constant, and  we obtain,
$$
h^{-\frac43}\int_{-\delta}^0\left( u_{1,L}^-(x)u_{1,L}^+(t){\bf 1}_{x<t} +u_{1,L}^-(t)u_{1,L}^+(x){\bf 1}_{x>t}\right)v(t) dt =\ord ( h^{-1}e^{-c_0|x|/2h}).
$$

Thus, on $(-\infty, x^*-\delta]$, we have,
$$
K_{1,L} v(x) = \ord (e^{-c_3|x|/h}),
$$
with $c_3>0$ constant.

Suppose now that \underline{$x^*-\delta \leq x\leq x^*$}. This time we divide $I_L$ into,
$$
I_L=(-\infty, x^*-2\delta]\cup [x^*-2\delta, x^*]\cup [x^*, -\delta]\cup [-\delta, 0].
$$

$\bullet$ On $(-\infty, x^*-2\delta]$: There, we have $u_{1,L}^-(x)u_{1,L}^+(t){\bf 1}_{x<t} +u_{1,L}^-(t)u_{1,L}^+(x){\bf 1}_{x>t}=u_{1,L}^-(t)u_{1,L}^+(x)=\ord (e^{-c_0/h}m_*(x))$ and $v(t) = \ord (e^{-c_1|t|/h})$, with $c_0,c_1 >0$ constants, and we obtain,
$$
h^{-\frac43}\int_{-\infty}^{x^*-2\delta}\left( u_{1,L}^-(x)u_{1,L}^+(t){\bf 1}_{x<t} +u_{1,L}^-(t)u_{1,L}^+(x){\bf 1}_{x>t}\right)v(t) dt =\ord (m_*(x) e^{-c_0/2h}).
$$

$\bullet$ On $[x^*-2\delta, x^*]$: There, we have $u_{1,L}^-(x)u_{1,L}^+(t){\bf 1}_{x<t} +u_{1,L}^-(t)u_{1,L}^+(x){\bf 1}_{x>t}=\ord (h^{\frac13}m_*(x)m_*(t)e^{-c_0\left|(x^*-x)^{\frac32}-(x^*-t)^\frac32\right|/h}$ and $v(t) = \ord (m_*(t)e^{-c_1(x^*-t)^\frac32/h})$, with $c_0,c_1 >0$ constants, and making the change of variable $t\mapsto x^*-t$, and using the notation $\widetilde x:= x^*-x$, we obtain,
$$
\begin{aligned}
h^{-\frac43}\int_{x^*-2\delta}^{x^*}& \left( u_{1,L}^-(x)u_{1,L}^+(t){\bf 1}_{x<t} +u_{1,L}^-(t)u_{1,L}^+(x){\bf 1}_{x>t}\right)v(t) dt \\
&=\ord(h^{-1}m_*(x))\int_0^{2\delta} \frac{e^{-\left( c_0|t^\frac32 -\widetilde x^\frac32|+c_1t^\frac32 \right)/h}}{\sqrt t} dt.
\end{aligned}
$$
Thus, using the fact that $|t^\frac32 -\widetilde x^\frac32|+t^\frac32\geq \frac12( \widetilde x^\frac32 + t^\frac32)$, this gives us,
$$
\begin{aligned}
h^{-\frac43}\int_{x^*-2\delta}^{x^*}& \left( u_{1,L}^-(x)u_{1,L}^+(t){\bf 1}_{x<t} +u_{1,L}^-(t)u_{1,L}^+(x){\bf 1}_{x>t}\right)v(t) dt \\
&=\ord(h^{-1}m_*(x)e^{-c_2\widetilde x^\frac32})\int_0^{2\delta} \frac{e^{-c_2t^\frac32 /h}}{\sqrt t} dt=\ord(h^{-\frac23}m_*(x)e^{-c_2\widetilde x^\frac32}),
\end{aligned}
$$
with $c_2 =\frac12 \min (c_0,c_1)$.

$\bullet$ On $[x^*, -\delta]$: There, we have $u_{1,L}^-(x)u_{1,L}^+(t){\bf 1}_{x<t} +u_{1,L}^-(t)u_{1,L}^+(x){\bf 1}_{x>t}=u_{1,L}^-(x)u_{1,L}^+(t)=\ord (h^{\frac13}m_*(x)m_*(t)e^{-c_0(x^*-x)^{\frac32}/h}$ and $v(t) = \ord (m_*(t))$, with $c_0 >0$ constant, and  we obtain,
$$
\begin{aligned}
h^{-\frac43}\int_{x^*}^{-\delta} &\left( u_{1,L}^-(x)u_{1,L}^+(t){\bf 1}_{x<t} +u_{1,L}^-(t)u_{1,L}^+(x){\bf 1}_{x>t}\right)v(t) dt \\
&=\ord(h^{-1}m_*(x)e^{-c_0(x^*-x)^{\frac32}/h}).
\end{aligned}
$$

$\bullet$ On $[ -\delta, 0]$: There, we have $u_{1,L}^-(x)u_{1,L}^+(t){\bf 1}_{x<t} +u_{1,L}^-(t)u_{1,L}^+(x){\bf 1}_{x>t}=u_{1,L}^-(x)u_{1,L}^+(t)=\ord (h^{\frac13}m_*(x)m_0(t)e^{-c_0(x^*-x)^{\frac32}/h}$ and $v(t) = \ord ( \,m_0(t))$, with $c_0 >0$ constant, and  we obtain,
$$
\begin{aligned}
h^{-\frac43}\int_{-\delta}^0 &\left( u_{1,L}^-(x)u_{1,L}^+(t){\bf 1}_{x<t} +u_{1,L}^-(t)u_{1,L}^+(x){\bf 1}_{x>t}\right)v(t) dt \\
&=\ord( \,h^{-1}m_*(x)e^{-c_0(x^*-x)^{\frac32}/h}).
\end{aligned}
$$
Thus, on $(x^*-\delta, x^*]$, we have,
$$
K_{1,L} v(x) = \ord ( \,h^{-1}m_*(x)e^{-c_3(x^*-x)^{\frac32}/h}),
$$
with $c_3>0$ constant.

Then we consider the case \underline{$x^*\leq x\leq -\delta$}. We divide $I_L$ into,
$$
I_L=(-\infty, x^*-\delta]\cup [x^*-\delta, x^*]\cup [x^*, 0].
$$
Arguing as before, we find,
$$
\begin{aligned}
h^{-\frac43}\int_{-\infty}^{x^*-\delta}&\left( u_{1,L}^-(x)u_{1,L}^+(t){\bf 1}_{x<t} +u_{1,L}^-(t)u_{1,L}^+(x){\bf 1}_{x>t}\right)v(t) dt\\
& =\ord (m_*(x)e^{-c_0/h})
\end{aligned}
$$

$$
\begin{aligned}
h^{-\frac43}\int_{x^*-\delta}^{x^*}&\left( u_{1,L}^-(x)u_{1,L}^+(t){\bf 1}_{x<t} +u_{1,L}^-(t)u_{1,L}^+(x){\bf 1}_{x>t}\right)v(t) dt\\
& =\ord( m_*(x)h^{-\frac23})
\end{aligned}
$$

$$
\begin{aligned}
h^{-\frac43}\int_{x^*}^0&\left( u_{1,L}^-(x)u_{1,L}^+(t){\bf 1}_{x<t} +u_{1,L}^-(t)u_{1,L}^+(x){\bf 1}_{x>t}\right)v(t) dt\\
& =\ord ( m_*(x) h^{-1}),
\end{aligned}
$$
and thus, on $[ x^*, -\delta]$, we have,
$$
K_{1,L} v(x) = \ord ( \,h^{-1}m_*(x)).
$$

 Finally, in the case \underline{$-\delta \leq x\leq 0$}, dividingagain $I_L$ into,
 $$
I_L=(-\infty, x^*-\delta]\cup [x^*-\delta, x^*]\cup [x^*, 0],
 $$
 we find in the same way,
 $$
 K_{1,L} v(x) = \ord (\, h^{-1} m_0(x)).
 $$
 We also see that the same estimates hold for the derivatives $(hD_x)^kK_{1,L} v(x) $, and thus
 we have proved,
 $$
K_{1,L}\left( {\mathcal F}_1(I_L)\right) \subset   h^{-1}{\mathcal F}_1(I_L).
 $$
 
 Concerning $K_{1,L}\left( {\mathcal F}_2(I_L)\right)$, that is, if $v\in {\mathcal F}_2(I_L)$, the same decompositions as before give exponentially small terms only, multiplied by $m_*(x)m_0(x)$, except those for $t$ close to 0. For these last ones, the previous arguments permit us to estimate them by,
 $$
 \ord(h^{-1})m_*(x)m_0(x)\alpha (x)\int_0^\delta \frac{e^{-t^{\frac32}/h}}{\sqrt t}dt = \ord( h^{-\frac23})m_*(x)m_0(x)\alpha (x),
 $$
 where we have used the notation $\alpha (x):=e^{-c(x^*-x)^{\frac32}/h})$ (with $c>0$ constant) when $x\leq x^*$, and $\alpha (x) :=1$ when $x\geq x^*$. This proves that,
 $$
 K_{1,L}\left( {\mathcal F}_2(I_L)\right) \subset  h^{-2/3}{\mathcal F}_1(I_L).
 $$
 
 The estimate on $ K_{2,L}\left( {\mathcal F}_1(I_L)\right)$ follows essentially in the same way, except for the behaviour near $x=x^*$. Take $v\in {\mathcal F}_1(I_L)$, and first consider $K_{2,L}(v)(x)$ for $x^*-\delta \leq x\leq x^*$. One has,
 $$
 \begin{aligned}
 K_{2,L}(v)(x)=& \ord(h^{-1})\int _{x^*-\delta}^{x^*} e^{-c_0(|t-x|+|t-x^*|^{\frac32})/h}m_*(t)dt\\
 &+\ord(h^{-1})\int _{x^*}^{x^*
 +\delta} e^{-c_0|t-x|/h}m_*(t)dt +\ord (e^{-c_0/h}),
\end{aligned}
 $$
 with $c_0>0$.
 Making the change of variable $t\mapsto x^*-t$, and setting $\widetilde x := x^*-x$, we obtain,
 $$
  \begin{aligned}
 K_{2,L}(v)(x)=& \ord(h^{-1})\int _{0}^{\delta} e^{-c_0(|\widetilde x-t|+t^{\frac32})/h}t^{-\frac14}dt\\
 &+\ord(h^{-1})\int _{ -\delta}^0 e^{-c_0(\widetilde x - t)/h}|t|^{-\frac14}dt +\ord (e^{-c_0/h}),
\end{aligned}
 $$
 and thus, using that $\left| t^{\frac32} -\widetilde x^\frac32\right| \leq \frac12 |\widetilde x -t|$ for $t, \widetilde x\in [0, \delta]$, $\delta$ small enough, we deduce,
 $$
  \begin{aligned}
 K_{2,L}(v)(x)=& \ord(h^{-1})e^{-c_0\widetilde x^\frac32 /h}\int _0^\delta e^{-c_0|\widetilde x-t|/2h}t^{-\frac14}dt\\
 &+\ord(h^{-1})\int _0^\delta e^{-c_0(\widetilde x + t)/h}t^{-\frac14}dt +\ord (e^{-c_0/h}),
\end{aligned}
 $$
 Making the change of variable $t\mapsto ht$, this gives us,
  $$
  \begin{aligned}
 K_{2,L}(v)(x)=& \ord(h^{-\frac14})e^{-c_0\widetilde x^\frac32 /h}\int _0^{\delta /h} e^{-c_0(h^{-1}\widetilde x-t)/2}t^{-\frac14}dt\\
 &+\ord(h^{-\frac14})\int _0^{\delta/h} e^{-c_0(h^{-1}\widetilde x + t)}t^{-\frac14}dt +\ord (e^{-c_0/h}),
\end{aligned}
 $$
 and, cutting the first integral into $\int_0^1 + \int_1^{\widetilde x /h} + \int_{\widetilde x/h}^{\delta /h}$ in the case $\widetilde x \geq h$, and into $\int_0^{1}+\int_{1}^{\delta /h}$ in the case $0\leq \widetilde x \leq h$, we obtain,
 $$
 K_{2,L}(v)(x)=\ord(h^{-\frac14})e^{-c_0\widetilde x^\frac32 /h}+ \ord(h^{-\frac14})e^{-c_0\widetilde x /h}
 $$
  and therefore (since $0\leq \widetilde x^\frac32 \leq \widetilde x$), for $x^*-\delta \leq x\leq x^*$,
\be
\label{estK2LF1.1}
 K_{2,L}(v)(x)=\ord(h^{-\frac14})e^{-c_0\widetilde x^\frac32 /h}=\ord(h^{-\frac14})e^{-c_0(x^*-x)^\frac32 /h}.
\ee
When $x^*\leq x\leq x^*+\delta$, the same computations lead to 
\be
\label{estK2LF1.2}
 K_{2,L}(v)(x)=\ord(h^{-\frac14}).
\ee
 Then, when $x^*+\delta\leq x\leq -\delta$, we can write,
 $$
 K_{2,L}(v)(x)=\ord( h^{-1})e^{-|x|^\frac32/h}\int_0^\delta e^{t^\frac32/h}t^{-\frac12}dt +\ord (1),
$$
and the change of variable $t\mapsto (ht)^{\frac23}$ gives us,
\be
\label{estK2LF1.3}
 K_{2,L}(v)(x)=\ord( h^{-\frac23})e^{-|x|^\frac32/h}\int_0^{\delta^\frac32/h} \frac{e^{t}}{t^\frac23}dt +\ord (1)=\ord( h^{-\frac23}).
\ee
Finally, for $-\delta\leq x\leq 0$, the same kind of computations lead to,
\be
\label{estK2LF1.4}
K_{2,L}(v)(x)=\ord( h^{-\frac23})m_0(x).
\ee
Since in addition $h^{-\frac14}\leq h^{-\frac23}m_*(x)$ on $[x^*-\delta, x^*+\delta]$, the required result on $ K_{2,L}\left( {\mathcal F}_1(I_L)\right)$ follows from \eqref{estK2LF1.1}-\eqref{estK2LF1.4}

 The estimate on $ K_{2,L}\left( {\mathcal F}_2(I_L)\right)$ follows along the same lines, together with the results on $I_R^\pm$.

\section{Appendix 2: proof of \eqref{formuleA0}}
\label{A2}

For any tempered function $f=f(x)$ on $\R$, we denote by $\hat f$ its Fourier transform defined by,
$$
\hat f(\xi):= \int_{-\infty}^{+\infty} e^{-ix\xi} f(x) dx,
$$
and, for $\alpha >0$ constant and $x\in\R$, we set,
$$
f_\alpha (x):= \ai (\alpha x).
$$
By definition we have $\hat\ai (\xi) =e^{i\xi^3/3}$, and thus $\hat f_\alpha (\xi) = \alpha^{-1}e^{i\alpha^{-3}\xi^3/3}$.

Then, we see on \eqref{defintA0} that we have,
$$
A_0(s)=g(x):= \tau_1^{-\frac16}\tau_2^{-\frac16} (f_\alpha * f_\beta)(x)
$$
with,
$$
\alpha:= \tau_1^{\frac13}\quad ;\quad \beta:= \tau_2^{\frac13}\quad ;\quad x:= -\left( \frac{\tau_1+\tau_2}{\tau_1\tau_2}\right) s,
$$
and where $*$ stands for the standard convolution of functions. As a consequence, using that $\hat{(f_\alpha *f_\beta)} =\hat f_\alpha \hat f_\beta$, we obtain, 
$$
\begin{aligned}
\hat g(\xi) & =\tau_1^{-\frac16}\tau_2^{-\frac16} \hat f_\alpha (\xi)\hat f_\beta (\xi) =\tau_1^{-\frac16}\tau_2^{-\frac16}\alpha^{-1}\beta^{-1}e^{i(\alpha^{-3}+\beta^{-3})\xi^3/3}\\
&= \tau_1^{-\frac16}\tau_2^{-\frac16}\gamma\alpha^{-1}\beta^{-1}\hat f_\gamma(\xi),
\end{aligned}
$$
with,
$$
\gamma:= (\alpha^{-3}+\beta^{-3})^{-\frac13}=\left( \frac{\tau_1\tau_2}{\tau_1+\tau_2}\right)^\frac13.
$$
Hence,
$$
g=\tau_1^{-\frac16}\tau_2^{-\frac16}\gamma\alpha^{-1}\beta^{-1}f_\gamma = \tau_1^{-\frac16}\tau_2^{-\frac16}(\tau_1+\tau_2)^{-\frac13}f_\gamma,
$$
and \eqref{formuleA0} follows.



{}


\begin{thebibliography}{}

\vskip 0.5cm


\bibitem[BCD]{BCD} Briet, P., Combes, J.-M., Duclos, P.,
{\it On the location of resonances for Schr\"odinger operators in the semiclassical limit. I. Resonances free domains}, J. Math. Anal. Appl. 126 (1987), no. 1, 90-99. 

\bibitem[BrMa]{BrMa} Briet, P., Martinez, A., {\it   Estimate on the molecular dynamics for the predissociation process}, J. Spect. Theory 7 (2017), no. 2, 487-517.

\bibitem[BrMa3]{BrMa3} Briet, P., Martinez, A., {\it  Molecular dynamics at energy level crossing II: the physical case},  In progress. 

\bibitem[CGH]{CGH} Cattaneo, L., Graf, G. M., Hunziker, W.,
{\it A general resonance theory based on Mourre's inequality}, Ann. H. Poincar\'e 7, 2006 No. 1, 583-601.

\bibitem[CoSo]{CoSo}Costin, O., Soffer, A., {\it Resonance theory for Shr\"odinger operators}, 
Comm. Math. Phys. 224, 133-152, 2001.

\bibitem[DuMe]{DuMe} Duclos, P., Meller, B.,
{\it A simple model for predissociation}, Mathematical results in quantum mechanics (Blossin, 1993), Oper. Theory Adv. Appl., 10, Birka\"user, Basel, 1994.

\bibitem[FMW1] {FMW1} Fujii\' e, S., Martinez, A., Watanebe, T. {\it  Molecular predissociation resonances near an energy-level crossing I: Elliptic interaction}, J.Differential Equations  260 (2016), no. 5, 4051-4085.

\bibitem[FMW2] {FMW2} Fujii\' e,  S.; Martinez, A.; Watanabe, T. Molecular predissociation resonances near an energy-level crossing II: Vector field interaction. J. Differential Equations 262 (2017), no. 12, 5880-5895.

\bibitem[GrMa] {GrMa} Grigis, A., Martinez, A. {\it Resonance widths for the molecular predissociation},  Analysis \& PDE 7-5 (2014), 1027--1055. DOI 10.2140/apde.2014.7.1027

\bibitem[HeMa] {HeMa} Helffer, B., Martinez, A., {\it Comparaison entre les diverses notions de r\'esonances}, 
Helv. Phys. Acta, Vol.60 (1987),no.8, pp.992-1003.

\bibitem[HeRo] {HeRo} Helffer, B., Robert, D., {\it Puits de potentiel g\'en\'eralis\'e et asymptotique semi-classique}, Ann. Inst. H. Poincar\'e, Phys. Theor. 41, 1984, No. 3, 291-331.

\bibitem[HeSj1] {HeSj1} Helffer, B., Sj\"ostrand, J., {\it Multiple Wells
in the Semiclassical Limit I}, Comm. in P.D.E. {\bf 9(4)} (1984), pp.337-408.

\bibitem[HeSj2] {HeSj2} Helffer, B., Sj\"ostrand, J., {\it R\'esonances
en limite semi-classique },  Bull.
Soc. Math. France 114, Nos. 24-25 (1986).

\bibitem[Her] {Her} Herbst, I., {\it Exponential decay in the Stark effect}, Comm. Math. Phys., 75, 197-205, 1980.

\bibitem[Hu1] {Hu1} Hunziker,  W., {\it Distortion analyticity and molecular resonance curves}, Ann. Inst. H. Poincar\'e Phys. Th\'eor. 45 (1986), no. 4, pp. 339-358.

\bibitem[Hu2] {Hu2} Hunziker,  W., {\it Resonances, metastable states and exponential decay laws in perturbation theory}, Comm. Math. Phys. 132, 177-182, 1990.

\bibitem[JeNe] {JeNe} Jensen, A., Nenciu, G., {\it The Fermi golden rule and its form at thresholds in odd dimensions}, Comm. Math. Phys. 261, 693-727, 2006.

\bibitem[Kl] {Kl} Klein, M., {\it On the mathematical theory of predissociation}, Annals of Physics, Vol. 178, No. 1, 48-73 (1987).

\bibitem[Kr] {Kr} Kronig, L. de R., Z. Phys. {\bf 50}, 247 (1928)

\bibitem[KMSW] {KMSW} Klein, M., Martinez, A., Seiler, R., Wang, X.P. {\it On the Born-Oppenheimer Expansion for Polyatomic Molecules}, Comm. Math. Physics 143, (1992), no.3, pp. 607-639 

\bibitem[La] {La} Landau, L., Phys. Z. Sowjetunion {\bf 1} (1932), 89; {\bf 2} (1932), 46.

\bibitem[LeSu] {LeSu} Lee S., Sun, H., {\it Widths and positions of isolated resonances in the pre dissociation of SH: Quantal treatments}, Bull. Korean Chem. Soc. 2001, Vol. 22, No.2, 210-212.

\bibitem[Ma1] {Ma1} Martinez, A., {\it Resonance free domains for non globally analytic potentials}, Ann. Henri Poincar\'e 4, 739-756 (2002), Erratum:  Ann. Henri Poincar\'e 8 (2007), 1425-1431

\bibitem[Ma2] {Ma2} Martinez, A., {\it An Introduction to Semiclassical and 
Microlocal Analysis}, UTX Series, Springer-Verlag New-York (2002).

\bibitem[MaSo] {MaSo} Martinez, M., Sordoni, V., {\it Twisted pseudodifferential calculus and application to the quantum evolution of molecules}, Memoirs of the AMS, No. 936 (2009).

\bibitem[NSZ] {NSZ} Nakamura, S., Stefanov, P., Zworski, M., {\it Resonance expansion of propagators in the presence of potential barriers}, J. Funct. Anal. 205, 2003 No. 1, 180-205.

\bibitem[ReSi] {ReSi} Reed, M., Simon, B., {\it Methods of Modern Mathematical Physics IV: Analysis of Operators}, Academic Press INC., New York, 1978.

\bibitem[Si] {Si} Simon, B., {\it Semiclassical analysis if low lying eigenvalues I. Non-degenerate minima: Asymptotic expansions}, Annales Inst. H. Poincar\'e, Section A, 38, No. 3, 1983, 295-308.

\bibitem[St] {St} St\"uckelberg, E. C. G., Helv. Phys. Acta {\bf 5}, 370 (1932).

\bibitem[Ze] {Ze} Zener, C., Proc. R. Soc. London Ser. A {\bf 137}, 696 (1932).

\end{thebibliography}
\end{document}